\newtheorem{rep@theorem}{\rep@title} \newcommand{\newreptheorem}[2]{%
\newenvironment{rep#1}[1]{%
\def\rep@title{\bf #2 \ref{##1} }%
\begin{rep@theorem} }%
{\end{rep@theorem} } }
\newcommand{\bigO}[1]{\mathcal{O}\left(#1\right)}
\newcommand{\bigOm}[1]{{\Omega}\left(#1\right)}
\newcommand{\bigTheta}[1]{{\Theta}\left(#1\right)}
\newcommand{\cP}{\mathcal{P}}
\renewcommand{\Pr}[2]{\operatorname*{\mathbf{Pr}}_{#1}\left(#2\right) }
\newcommand{\PrD}[1]{\mathbf{P}_{#1}}
\newcommand{\ExV}[2]{\operatorname*{\mathbf{E}}_{#1}\left(#2\right)} 
\newcommand{\supp}[1]{\mathrm{supp}\left(#1\right)}
\newcommand{\I}[1]{\mathbf{1}_{#1}}
\newcommand{\poly}[1]{\mathrm{poly}\left(#1\right)}
\newcommand{\Hmin}[1]{\mathbf{H}_{\infty}\left( #1 \right)}
\newcommand{\Hmtr}[2]{\mathbf{H}^{\textup{Metric}, #2}_{\infty}\left( #1 \right)}
\newcommand{\Hsmooth}[2]{\mathbf{H}^{#2}_{\infty}\left( #1 \right)}
\newcommand{\cD}{\mathcal{D}}
\newcommand{\cY}{\mathcal{Y}}
\title{A Convex Analysis Approach to Computational Entropy}
\author{Maciej Sk\'{o}rski}
\institute{University of Warsaw}
\begin{document}

\maketitle

\begin{abstract}

\noindent
This paper studies the notion of computational entropy.  
Using techniques from convex optimization, we investigate the following problems: 
\begin{enumerate}
\item Can we derandomize the computational entropy? More precisely, for the computational entropy, what is the real difference in security defined using the three important classes of circuits: deterministic boolean, deterministic real valued, or (the most powerful) randomized ones? 
\item How large the difference in the computational entropy for an unbounded versus efficient adversary can be? 
\item Can we obtain useful, simpler characterizations for the computational entropy?
\end{enumerate}
The first question was answered affirmatively for the most important notion of HILL entropy but
was open for the metric-type computational entropy, widely used in the leakage-resilent cryptography. 
In this case, we show that the answer depends on what is the underlying variant of the information-theoretic entropy in the definition of the metric entropy.  More precisely, the answer is \emph{negative} for the commonly used min-entropy based computational entropy. Surprisingly, we show that for all other Renyi entropies the answer is \emph{positive} - security given by unbounded deterministic circuits can be still much worse than that guaranteed by efficient randomized circuits.

In the second problem, we obtain some lower-bound type results. Especially, considering conditional computational entropy for two random variables $X\in \{0,1\}^n$ and $Z\in\{0,1\}^m$, we show that 
even if the security parameters are exponential in $n+m$, the ammount of entropy can be still noticeably higher than that seen by unbounded adversary. Also, for a fixed distribution, decreasing the security parameters by a factor $2^C$, can result in increasing the entropy by $C$ bits, which agrees with intuition.

Studying the third problem, we derive a series of lemmas giving a characterization of the metric entropy for various definitions. As an example of application, we give extremely simple proofs of leakage lemmas, being a central tool in the leakage-resilent cryptography.
\end{abstract}

\thispagestyle{empty}

\newpage

\pagestyle{headings}
\setcounter{page}{1}

\section{Introduction}
Entropy is the fundamental concept on which information-theory is founded.  Since its introduction by Shannon  \cite{Shannon1948} the definition of entropy has been generalized in many ways, including the computational variants of this notion (introduced in \cite{Yao1982} and \cite{HILL99}), which turn out to be very useful in the computational complexity theory and cryptography.


There are at least three important and different natural approaches to define computational entropy: the first one based on compressibility (``Yao entropy''), the second one based on the notion of unpredictability (``unpredictability entropy'') and the other one based on the concept of computational indistinguishability (``Metric and HILL entropy''). The relationships between these notions were studied first by Barak et al.\ in \cite{Barak2003}; the reader might also wish to refer to \cite{Reyzin2011} for a survey. In the recent years probably the most popular computational entropy variants  were the Metric and the Hill entropies.  This is partly due to the fact that this notion is often used by authors studying leakage-related problems (Dziembowski and Pietrzak \cite{Dziembowski2008}, Reingold et al. \cite{Reingold2008}, Reyzin and Fuller \cite{Fuller2011}, Kai-Min Chung et al.\ \cite{Chung2011}, Krenn et al.\ \cite{Pietrzak2012}). 
The second important reason is that applying an extractor to a random variable having high HILL Entropy (or even Metric Entropy), one obtains a pseudorandom distribution \cite{Fuller2011}. 

\subsection{Computational Entropy issues}
A major difficulty with the use computational entropy is that it can be defined in many ways, depending on particular application, that often seem to be nonequivalent or not to admit a simple proof. Most often the differences come from the usage of different classes of distinguishers or because there is no standard way of defining conditional computational entropy. As a consequence, for many results in this area we do not know whether they are true if a small change in the definition is made. An example of such a situation is the notion of Metric$^{*}$ Entropy introduced in \cite{Dziembowski2008} and generalized  in \cite{Fuller2011}, reflecting in both cases the problem with determinining what are the relationships between Metric Entropy computed against different classes of distinsguishers: boolean deterministic, $[0,1]$-valued or boolean randomized ones (for the HILL Entropy it is easy to show that all these classes are equivalent \cite{Fuller2011}). 

The another important issue is a very useful estimate used in the leakage-resilent cryptography, called
the ``leakage chain rule", provable for restricted types of conditional computational entropy but known to be false in general \cite{Pietrzak2012}. Yet another important topic is existence of a simple characterization for the Metric Entropy in special cases.  Besides of being of independent interest, such characterizations can have surprisingly powerful applications (cf.\ Section 7 in \cite{Barak2003}).  Thus, although a lot has been done, it seems that systematization of definitions and studying relations between different variants of entropy (even in most often used circuits model) is still needed. 
Our motivation is to contribute to this task, focusing on indistinguishability based computational entropy through this paper.

\subsection{Our techniques}
Our main technique is a novel and interesting observation that the concept of the computational entropy is strictly related to the \emph{separating convex sets} problem. This approach turns out to be especially usefull for the metric-type of computational entropy. The ``extreme" distributions that satisfy the metric-entropy constraints turn out to be
indeed \emph{extreme points} and allow us to apply the powerful machinery of convex analysis. Especially we show that such problems as comparing the security of the metric entropy in different models of an adversary, are deeply dependent on \emph{the geometry of certain convex sets}. We believe that this approach can be of independent interest.

\subsection{Our results and the organization of the paper}
The remainder of the paper is organized as follows. In Section 2 we give some basic notations and introduce definitions
of Computational Entropy. In Section 3, by techniques similar to these used in results related to $\delta$-hard functions, we show a separation between Computational Entropy and Smooth Entropy (which can be viewed as comparing Computational Entropy seen by a bounded and by an unbounded adversary). In Section 4, by solving convex optimization problems, we obtain explicit characterizations of most interesting generalizations of metric entropy. As some of examples of application we reprove the classical relationship between R\'{e}nyi Entropy for different orders and also give an extremely  short proof of the `leakage lemma" and the `leakage chain rule'  for so called relaxed entropy. Section 5 deals with the problem of comparing Computational Metric Entropy for  
different classes of distinguishers used in the definition. We show that it can happen, that the deterministic unbounded adversary is \emph{much more weaker} than for the efficient randomized one. Surprisingly this is not the case of the most popular metric min-entropy. Especially, we construct a random variable $X\in\{0,1\}^n$ such that its metric colision entropy for two cases: (a) seen by deterministic unbounded adversary and (b) seen by adversary using only randomized circuits of size only $\bigO{n}$ and accepting the distinguishing advantage to be only $1/\poly{n}$, differs by $\bigOm{\log \log n}$. Even more pathological result can be obtained for the Shannon Entropy: it is possible that the `gap' in the ammount of entropy for the unbounded deterministic and randomized adversary accepting constant distinguishing advantage, is even $\bigOm{n}$.

\section{Preeliminaries}

\paragraph{Information-theoretic notions}
The idea commonly used to define computational entropy is to generalize a convinient theoretic-infomation notion of entropy. Following this way, we start with recalling the notion of the R\'{e}nyi Entropy. 
\begin{definition}[R\'{e}nyi Entropy]\label{def:RenyiEntropy}
Given a random variable $X\in\{0,1\}^n$ we say that its R\'{e}nyi Entropy of order $\alpha$ (or in short: $\alpha$-Renyi Entropy) is at least $k$ if and only if
\begin{equation*}
 \left\|\mathbf{P}_{X}\right\|_{\alpha-1} = \left(\mathbf{E}_{x\leftarrow X} \left(\mathbf{P}_{X}(x)\right)^{\alpha-1}\right)^{\frac{1}{\alpha-1}} \leqslant 2^{-k}
\end{equation*}
\end{definition}
\noindent This definition covers also the important cases of the Colision Entropy ($\alpha=2$), the Shannon Entropy $(\alpha\to 1)$ and the Min-entropy $(\alpha\to\infty)$. By calculating these limits, one can give the explicity definitions for the last two cases:
\begin{definition}[Shannon Entropy]\label{def:ShannonEntropy}
Given a random variable $X\in\{0,1\}^n$ we say that its Shannon entropy is at least $k$ if and only if
\begin{equation*}
 -\sum\limits_{x}\PrD{X}(x) \log\PrD{X}(x) \geqslant k 
\end{equation*} 
where we define $p\log p = 0$ for $p = 0$.
\end{definition}
\begin{definition}[Min-Entropy]\label{def:MinEntropy}
Given a random variable $X\in\{0,1\}^n$ we say that its Min-Entropy is at least $k$ if and only if
\begin{equation*}
 \PrD{X}(x) \leqslant 2^{-k}\quad \text{for all } x\in\{0,1\}^n
\end{equation*} 
\end{definition}  
For some applications, like for the randomness extraction, the smoothed version of entropy is usefull. The key concept behind smooth entropy is that we allow $X$ to be only close (in some metric sufficiently strong to our purposes) to a distribution with required entropy, instead of expecting $X$ to satisfy the entropy constraints by itself. 
\begin{definition}[Statistical Disntance]
Let $X,Y \in \{0,1\}^n$ be two random variables. The statiscal distance of distributions $\PrD{X},\PrD{Y}$ is defined to be 
$\Delta(X,Y) = \frac{1}{2}\sum\limits_{x}\left| \PrD{X}(x) - \PrD{Y}(x)\right|$. 
\end{definition}
\begin{definition}[Smooth R\'{e}nyi Entropy]
Given $\epsilon>0$ and a random variable $X\in\{0,1\}^n$, we say that \emph{it has Smooth $\alpha$-R\'{e}nyi Entropy at least $k$}, if there exists a random variable $Y\in\{0,1\}^n$ such that $\mathbf{H}_{\alpha}\left(Y\right)\geqslant k$ and $\Delta(X,Y) \leqslant \epsilon$.
\end{definition}

\subsection{Computational Entropy}

The intuition behind HILL Entropy is that we think ofs $X$ as having high computational entropy if it is computationally indistinguishable from a distribution with (chosen) information-theoretic entropy. The computational variant of min-entropy was introduced in \cite{HILL99}. Below we generalize this concept replacing the min-entropy by R\'{e}nyi Entropy.
\begin{definition}[Computational HILL R\'{e}nyi Entropy]\label{HILL_Entropy_Definition}
Given $\epsilon > 0$, a class of disitinguishers\footnote{ The distinguishers can be deterministic or randomized $[0,1]$-valued functions.} $\mathcal{D}$ and a random variable $X\in\{0,1\}^{n}$, we say that \emph{ $X$ has at least $k$ bits of HILL Computational R\'{e}nyi Entropy of order $\alpha$ against $(\mathcal{D},\epsilon)$} and denote by $\mathbf{H}_{\alpha}^{\textup{HILL},\mathcal{D},\epsilon}(X)\geqslant k$ if there exist a distribution $Y$ over $\{0,1\}^{n}$ satisfying $\mathbf{H}_{\alpha}\left(Y\right)\geqslant k$ such that for any $D\in\mathcal{D}$ holds $\left|\mathbf{E}D(X) - \mathbf{E}D(Y)\right| \leqslant \epsilon$.
\end{definition}
\noindent Metric entropy is defined by reversing the order of quantifiers:
\begin{definition}[Computational Metric R\'{e}nyi Entropy]\label{Metric_Entropy_Definition}
Given $\epsilon > 0$ and a class of distnugishers $\cD$ we say that \emph{the random variable $X\in \{0,1\}^n$ has at least $k$ bits of Metric Computational R\'{e}nyi $\alpha$-Entropy against $(\mathcal{D},\epsilon)$} and denote by $\mathbf{H}_{\alpha}^{\textup{HILL},\mathcal{D},\epsilon}(X)\geqslant k$ if for any $D\in \cD$ there exist a distribution $Y$ over $\{0,1\}^{n}$ satisfying $\mathbf{H}_{\alpha}\left(Y\right)\geqslant k$ and $\left|\mathbf{E}D(X) - \mathbf{E}D(Y)\right| \leqslant \epsilon$.
\end{definition}

\subsection{Conditional Computational Entropy}

The conditional computational entropy is defined in the similar way via underlying theoretic-information entropy measure. Since there is no agreement how to define Conditional R\'{e}nyi Entropy, to make this discussion clear we restrict us only to the case of min-entropy. Usually one defines the conditional min-entropy in one of the two ways:
\begin{definition}[Conditional Min Entropy]
Given a joint distribution $(X,Z)$ we say that \emph{$X$ conditioned on $Z$ has min-entropy at least $k$} and denote by $\mathbf{H}(X|Z)\geqslant k$ if
\begin{equation*}
 \forall z:\, \mathbf{H}_{\infty}\left(X|Z=z \right) \geqslant k
\end{equation*} 
\end{definition}
\begin{definition}[Average Conditional Min Entropy \cite{Dodis2008}]
Given a joint distribution $(X,Z)$ we say that \emph{ $X$ conditioned on $Z$ has average min-entropy at least $k$} and denote by $\widetilde{\mathbf{H}}(X|Z)\geqslant k$ if
\begin{equation*}
 \mathbf{E}_{z\leftarrow Z}\left[ 2^{-\mathbf{H}\left(X|Z=z\right)} \right] = \mathbf{E}_{z\leftarrow Z}\left[ \max\limits_{x}\mathbf{P}_{X|Z=z}(x) \right] \leqslant 2^{-k}
\end{equation*}  
\end{definition}
\noindent The conditional computational entropy is defined similarly to the unconditional case. 
\begin{definition}[Conditional Computational HILL Entropy]\label{HILL_Conditional_Entropy_Definition}
Given $\epsilon > 0$, a class of distinguishers $\cD$ and a pair of random variables $X\in \{0,1\}^{n},Z\in\{0,1\}^m$ we say that \emph{$X$ conditioned on $Z$ has at least $k$ bits of HILL Computational Min Entropy against $(\mathcal{D},\epsilon)$} and denote by $\mathbf{H}_{\infty}^{\textup{Metric},\mathcal{D},\epsilon}(X|Z)\geqslant k$ if there exists a distribution $(Y,Z)$ over $\{0,1\}^{n+m}$ satisfying $\mathbf{H}_{\infty}\left(Y|Z\right)\geqslant k$ such that for any $D\in\mathcal{D}$ holds the inequality
$\left|\mathbf{E}D(X,Z) - \mathbf{E}D(X,Z)\right| \leqslant \epsilon$.
\end{definition}
The conditional computational metric entropy is defined by changing the order of the quantifiers. Metric as well as HILL conditional entropy can be defined as average or non-average conditional entropy depending on use $\widetilde{\mathbf{H}}_{\infty}$ or $\mathbf{H}_{\infty}$ and denoted using these symbols. For clarity we do not give the rest of possible definitions.

For the sake of completeness we note that there is a definition that allows $Z$ to change together with $Y$. This leads to the notion of relaxed computational min entropy:
\begin{definition}[Conditional Computational HILL Relaxed Entropy, \cite{Reyzin2011}]\label{HILL_Computational_Relaxed_Entropy_Definition}
Given $\epsilon > 0$, a class of distinguishers $\cD$ and a pair of random variables $X\in \{0,1\}^{n},Z\in\{0,1\}^m$, se say that \emph{$X$ conditioned on $Z$ has at least $k$ bits of Relaxed HILL Computational Entropy against $(\mathcal{D},\epsilon)$} and denote by $\mathbf{H}_{}^{\textup{HILL-rlx},\mathcal{D},\epsilon}(X|Z)\geqslant k$ if there exists a distribution $(Y,Z')$ over $\{0,1\}^{n+m}$ satisfying $\mathbf{H}_{\infty}\left(Y|Z'\right)\geqslant k$ such that for any $D\in\mathcal{D}$ holds
$\left|\mathbf{E}D(X,Z) - \mathbf{E}D(X,Z')\right| \leqslant \epsilon$.
\end{definition} 
This entropy also can be considered in average or non-average aspects, with HILL or Metric type of indistingusihability and has remarkably good properties for some leakage-related problems, as we will see later.

\paragraph{Relationships between HILL and Metric Entropy}
The Metric entropy, which was was introduced after the HILL one, often turns out to be more convenient in applications (for instance, to prove leakage-related results). It is known that from Metric Entropy computed against real valued (or randomized) circuits, then there exists a conversion to HILL entropy \cite{Barak2003}. This result in its full generality can be stated as follows
\begin{theorem}[Generalization of \cite{Barak2003}, Thm. 5.2]\label{MetricHILL_Conversion}
Let $\cP$ be the set of all probability measures over $\Omega$. Suppose that we are given a class $\mathcal{D}$ of $[0,1]$-valued functions on $\Omega$, with the following property: if $D\in \cD$ then $D^{c}=^{\textup{def}}\mathbf{1}-D \in \cD$. For $\delta > 0$, let $\mathcal{D'}$ be the class consisting of all convex combinations of length $\mathcal{O}\left(\frac{\log|\Omega|}{\delta^2}\right)$ over $\mathcal{D}$. Let $\mathcal{C}\subset \cP$ be any arbitrary convex subset of probability measures and $X \in \cP$ be a fixed distribution. Consider the following statements:
\begin{enumerate}[label = \roman{*}]
\item \label{itm:HILLMetricCoversion_a} $X$ is $\left(\mathcal{D},\epsilon+\delta\right)$ indistinguishable from \emph{some} distribution $Y\in \mathcal{C}$ (HILL Entropy)
 \item \label{itm:HILLMetricCoversion_b} $X$ is $\left(\mathcal{D'},\epsilon\right)$ indistinguishable from the set of \emph{all} distribution $Y\in \mathcal{C}$ (Metric Entropy)
\end{enumerate}
Then (\ref{itm:HILLMetricCoversion_b}) implies (\ref{itm:HILLMetricCoversion_a}).  
\end{theorem}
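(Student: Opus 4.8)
The plan is to prove the nontrivial implication $(\ref{itm:HILLMetricCoversion_b}) \Rightarrow (\ref{itm:HILLMetricCoversion_a})$ by contraposition, turning the exchange of quantifiers between the Metric and the HILL formulations into an application of a minimax (separation) theorem. First I would record that, since $\cD$ and hence $\mathcal{D'}$ are closed under the complement $D \mapsto \mathbf{1}-D$, and since $\mathbf{E}(\mathbf{1}-D)(X) - \mathbf{E}(\mathbf{1}-D)(Y) = -\left(\mathbf{E}D(X) - \mathbf{E}D(Y)\right)$, controlling the \emph{signed} advantage $\mathbf{E}D(X)-\mathbf{E}D(Y)$ for every $D$ is equivalent to controlling its absolute value. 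This lets me drop the absolute values and phrase everything through the single payoff $\Phi(D,Y) = \mathbf{E}D(X) - \mathbf{E}D(Y)$, which is affine in $D$ and affine in $Y$, i.e.\ bilinear and continuous.

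With this notation, statement (\ref{itm:HILLMetricCoversion_a}) (HILL) is exactly $\inf_{Y\in\mathcal{C}} \sup_{D\in\cD} \Phi(D,Y) \leq \epsilon + \delta$, whereas the Metric hypothesis (\ref{itm:HILLMetricCoversion_b}) reads $\sup_{D\in\mathcal{D'}} \inf_{Y\in\mathcal{C}} \Phi(D,Y) \leq \epsilon$. So I would assume (\ref{itm:HILLMetricCoversion_a}) fails, i.e.\ for every $Y \in \mathcal{C}$ some $D \in \cD$ achieves $\Phi(D,Y) > \epsilon + \delta$, and derive a single distinguisher in $\mathcal{D'}$ beating \emph{every} $Y\in\mathcal{C}$, contradicting (\ref{itm:HILLMetricCoversion_b}). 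The exchange of $\inf_Y$ and $\sup_D$ is the crux: because $\Phi$ is bilinear and because enlarging $\cD$ to its convex hull $\mathrm{conv}(\cD)$ does not change $\sup_D \Phi(D,Y)$ (a linear functional attains the same value over a set and over its convex hull), I can apply the von Neumann/Sion minimax theorem on the compact convex sets $\mathcal{C}$ and $\overline{\mathrm{conv}}(\cD)$, both of which live in the finite-dimensional cube $[0,1]^{\Omega}$ since $\Omega$ is finite. From $\inf_{Y}\sup_{D}\Phi \geq \epsilon+\delta$ this yields an idealized ``average distinguisher'' $\bar D \in \mathrm{conv}(\cD)$ with $\Phi(\bar D, Y) \geq \epsilon + \delta$ simultaneously for all $Y\in\mathcal{C}$ (up to a negligible slack absorbed in the next step).

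The remaining step replaces the arbitrary convex combination $\bar D$ by one of bounded length landing in $\mathcal{D'}$. Writing $\bar D = \mathbf{E}_{D\sim\mu}D$ for a distribution $\mu$ over $\cD$, I would sample $D_1,\dots,D_t$ i.i.d.\ from $\mu$ and set $\hat D = \frac1t\sum_i D_i$. For each fixed $\omega \in \Omega$, $\hat D(\omega)$ is an average of $t$ independent $[0,1]$-valued variables with mean $\bar D(\omega)$, so Hoeffding's inequality together with a union bound over the $|\Omega|$ points shows that $t = \bigO{\log|\Omega|/\delta^2}$ suffices to achieve $\|\hat D - \bar D\|_\infty < \delta/2$ with positive probability; hence such a $\hat D \in \mathcal{D'}$ exists. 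Since $\left|\Phi(\hat D,Y) - \Phi(\bar D, Y)\right| \leq 2\|\hat D - \bar D\|_\infty < \delta$ uniformly in $Y$, the combination $\hat D$ still satisfies $\Phi(\hat D, Y) > \epsilon$ for every $Y \in \mathcal{C}$, contradicting (\ref{itm:HILLMetricCoversion_b}).

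I expect the minimax/separation step to be the main obstacle, both because its hypotheses must be verified (convexity of $\mathcal{C}$, which is given; bilinearity and continuity of $\Phi$; and the compactness obtained from finiteness of $\Omega$) and because it is precisely where the ``separating convex sets'' viewpoint of the paper enters: the average distinguisher $\bar D$ is the normal direction of a hyperplane separating $X$ from the convex set $\mathcal{C}$. The discretization is technically routine (a Chernoff/approximate-Carath\'eodory argument), but the bookkeeping of the $\delta$-loss has to be arranged so that the final advantage is \emph{strictly} above $\epsilon$; taking the $\ell_\infty$-approximation slightly below $\delta/2$ supplies the needed slack while keeping the length at $\bigO{\log|\Omega|/\delta^2}$.
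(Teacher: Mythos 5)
Your proof is correct and follows essentially the same route as the argument the paper relies on: the paper's own ``proof'' of Theorem \ref{MetricHILL_Conversion} is only an inspection-of-\cite{Barak2003} remark, and the underlying proof of that cited theorem is exactly your min-max exchange (using closure under complements to drop absolute values, then von Neumann/Sion on $\mathcal{C}$ versus the convex hull of $\cD$) followed by Hoeffding sampling to truncate the averaged distinguisher to a convex combination of length $\bigO{\log|\Omega|/\delta^2}$ with $\ell_\infty$-error below $\delta/2$. Your bookkeeping of the $\delta$-loss and the strictness of the final advantage is sound, so nothing further is needed.
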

The sketch of the proof appears in Appendix.
\begin{remark}\label{MetricHILL_Conversion_Discussion}
By choosing $\Omega = \{0,1\}^{n+m}$, a random variable $Z\in\{0,1\}^{m}$ and $\mathcal{C}$ to be the set of all distributions $(Y,Z)$ satisfying $(Y,Z):\,\Hmin{Y|Z}\geqslant k$ or alternatively 
$\mathbf{\widetilde{H}}_{\infty}(Y|Z)\geqslant k$, we obtain the conversion from the Metric Conditional Entropy to the HILL Conditional Entropy, for both: worst case and average case variants.
\end{remark}

\subsection{Relationship of Convex Analysis to Metric Entropy}\label{Relationship_to_ConvexAnalysis} 

Let us notice, that the both notions: HILL and Metric entropy can be rephrased in a geometrical language as unability to separate between two convex sets. More precisely, any distribution $\PrD{X}$ on $\{0,1\}^n$, after enumerating the elements of $\{0,1\}^n$, can be (uniquely) identified with a vector in $\mathbb{R}^{2^n}$. Similarly, any real valued function $D$ on $\{0,1\}^{n}$ can be identified with a vector in the same space. Taking the expected value becomes then the scalar product 
\begin{equation*}
 \langle D, \PrD{X} \rangle = \sum\limits_{x}\PrD{X}(x) D(x) = \mathbf{E}_{x\leftarrow X} D(x)
\end{equation*}
By considering the min-entropy, for instance, it is easy so see that $\mathbf{H}_{\infty}^{\text{Metric},\cD,\epsilon}(X) < k$ if and only if there exists $D\in\cD$ such that for all $\PrD{Y}\in\cY$
\begin{align}
 \left| \langle D, \PrD{X} - \PrD{Y} \rangle \right|  = & \left| \sum\limits_{x}D(x)\left(\PrD{X}(x) - \PrD{Y}(x)\right)\right|  
  \geqslant \left| \mathbf{E}D(X) - \mathbf{E}D(Y) \right| \geqslant \epsilon \nonumber
\end{align}
where $\cY$ is the set of all distributions on $\{0,1\}^{n}$s with min-entropy at least $k$. We will see later that the absolute value above can be removed by considering classes $\cD$ which are closed under complements (i.e. if $D\in \cD$ then also $\mathbf{1}-D \in\cD$). Then we get the
inequality $\langle D, \PrD{X} \rangle \geqslant \langle D, \PrD{Y} \rangle + \epsilon$ valid for all $\PrD{Y}\in\cY$. Thus defining the metric entropy is nothing more than just saying that a given distribution $X$ \emph{cannot be separated} (in the sense known from functional analysis or convex analysis) from the \emph{set} $\cY$ (i.e. from all its elements at once). In the other hand, $D$ that contraddicts the definition is exactly \emph{a separating hyperplane}. Hence, methods of convex analysis can be applied to study the properties of  metric-type entropies. The HILL-type definition is less compatible with this approach, as it is a bit stronger assumption, namely that we are not able to separate any pair 
$\PrD{X},\PrD{Y}$ where $\PrD{Y} \in \cY$. In this paper we follow the terminology used in computer science, saying about \emph{distinguishing} instead of \emph{separating} as in math.

\subsection{Used conventions and important remarks}
Through this paper we will use mostly the already defined computational min-entropy, saying in short about computational entropy. We will thereby often omit the sign $\infty$ writing $\mathbf{H}^{\textup{HILL},\mathcal{D},\epsilon}$,  $\widetilde{\mathbf{H}}^{\textup{Metric-rlx},\mathcal{D},\epsilon}$ and so on when meaning min-entropy based computational entropy. We also use the following natural convention: replacing $\mathcal{D}$ by a pair $(\{0,1\},s)$ or $([0,1],s)$ if we mean deterministic circuits of size $s$ respectively boolean and $[0,1]$-valued. Writing $(\textup{rand}\{0,1\},s)$ in the place of $\mathcal{D}$ we mean randomized boolean circuits of size $s$. If the circuit size $s$ is omitted in the description of a circuit class, it is assumed to be unbounded. For the boolean function $D$ we denote $|D| = \sum_{x\in \textup{dom} D}D(x)$.

Note that although one can define and use computational entropy based on the  R\'{e}nyi Entropy of any arbitrary order, using of min-entropy as a reasonable compromise between the the convenience of analysis and preserving so much generality as possible, is not a big restriction in practice, as long as one uses 
real valued distinguishers. To pass between Renyi Entropies for different order, one uses the fact that the values of the Smooth R\'{e}nyi Entropy for different order cannot be differ more than a small additive constant. The precise statement is given bellow: 
\begin{lemma}[\cite{Renner2004}]\label{RenyiEntropy_DifferentOrders}
Suppose that $X$ is a distribution over $\{0,1\}^{n}$.  Then for $\alpha>1$
\begin{equation*}
  \mathbf{H}_{\infty}^{\epsilon}(X) \geqslant \mathbf{H}_{\alpha}(X)-\frac{1}{\alpha-1}\log\frac{1}{\epsilon}
\end{equation*}
\end{lemma}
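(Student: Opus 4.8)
The plan is to unwind both definitions and produce an explicit ``truncated'' distribution witnessing the smooth min-entropy bound. Write $k = \mathbf{H}_{\alpha}(X)$ and $k' = k - \frac{1}{\alpha-1}\log\frac1\epsilon$; unwinding the definition of smooth min-entropy, it suffices to exhibit a single $Y\in\{0,1\}^n$ with $\Delta(X,Y)\leqslant\epsilon$ and $\Hmin{Y}\geqslant k'$, i.e.\ with $\PrD{Y}(y)\leqslant 2^{-k'}$ everywhere. I would set the threshold $t = 2^{-k'}$ and let $B = \{x : \PrD{X}(x) > t\}$ be the set of ``heavy'' points; these are exactly the points that would violate the desired min-entropy bound, so the whole task reduces to moving the excess mass off $B$ without disturbing $X$ too much.

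The key step is to show that $B$ carries very little probability. First I would record the identity $\sum_x \PrD{X}(x)^{\alpha} = 2^{-(\alpha-1)k}$, which is just the R\'enyi-entropy definition rewritten (note that $\mathbf{E}_{x\leftarrow X}\PrD{X}(x)^{\alpha-1} = \sum_x \PrD{X}(x)^{\alpha}$). Then, since every $x\in B$ satisfies $\PrD{X}(x)^{\alpha-1} > t^{\alpha-1}$ (here $\alpha>1$ is essential, as it makes $u\mapsto u^{\alpha-1}$ increasing), a one-line Markov/power-mean estimate gives
\begin{equation*}
\sum_{x\in B}\PrD{X}(x) \;<\; t^{-(\alpha-1)}\sum_{x\in B}\PrD{X}(x)^{\alpha} \;\leqslant\; t^{-(\alpha-1)}\,2^{-(\alpha-1)k}.
\end{equation*}
The threshold $t = 2^{-k'}$ was chosen precisely so that $t^{-(\alpha-1)}\,2^{-(\alpha-1)k} = \epsilon$, whence $\PrD{X}(B) < \epsilon$.

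It remains to build $Y$ and verify the two constraints. I would obtain $Y$ from $X$ by deleting all the mass sitting on $B$ (a total of $\mu := \PrD{X}(B) < \epsilon$) and redistributing it onto points currently below the threshold, never letting any point exceed $t$; such a redistribution exists because the total capacity at level $t$ is $2^{n}t = 2^{\,n-k'} \geqslant 1$, which holds since $k' < \mathbf{H}_{\alpha}(X) \leqslant n$. By construction $\PrD{Y}(y)\leqslant t = 2^{-k'}$ for all $y$, so $\Hmin{Y}\geqslant k'$; and since we moved total mass exactly $\mu$ (down on $B$, up elsewhere, on disjoint supports), $\Delta(X,Y) = \mu < \epsilon$. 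This yields $\Hsmooth{X}{\epsilon}\geqslant k'$, as claimed. The only genuinely delicate point is the choice of $t$ that makes the tail bound collapse to exactly $\epsilon$; the redistribution step is routine once the capacity inequality $k'\leqslant n$ is observed, and the hypothesis $\alpha>1$ enters only through the monotonicity $\PrD{X}(x)^{\alpha-1}>t^{\alpha-1}$ used in the tail estimate.
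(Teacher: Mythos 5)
Your proof is correct, and it is genuinely different from the route taken in the paper. You give the classical direct argument (essentially the original Renner--Wolf proof): identify the heavy set $B=\{x:\PrD{X}(x)>2^{-k'}\}$, bound its mass by the Markov-type estimate $\sum_{x\in B}\PrD{X}(x) < t^{-(\alpha-1)}\sum_{x}\PrD{X}(x)^{\alpha} = \epsilon$, and truncate-and-redistribute to exhibit an explicit smoothing distribution $Y$. The paper instead derives the lemma as an \emph{application} of its metric-entropy machinery: it uses the characterization of the extreme distributions for metric R\'enyi entropy (the system defining $p_D$ in Lemma \ref{RenyiEntropy_ExtremeDistributions}) to show that any $X$ with $\mathbf{H}_{\alpha}(X)\geqslant k$ satisfies $\mathbf{E}D(X)\leqslant 2^{-(k-C)}|D|+2^{-C(\alpha-1)}$ for all boolean $D$, i.e.\ has metric min-entropy $k-C$ with error $2^{-C(\alpha-1)}$, and then chains through the boolean-to-real equivalence, the Metric-to-HILL conversion, and the identification of unbounded HILL entropy with smooth entropy. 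Your approach buys brevity, self-containedness, and an explicit witness $Y$; the paper's approach is deliberately indirect, its point being to demonstrate that the convex-analytic characterization subsumes this classical inequality, and it produces along the way the intermediate boolean-distinguisher bound that is of independent interest in that framework. Two small points to tidy in your write-up: the strict inequality in the tail bound presupposes $B\neq\emptyset$ (if $B=\emptyset$ take $Y=X$), and the redistribution is slightly cleaner if you cap the points of $B$ at $t$ rather than zeroing them out, since then the capacity condition needed is exactly $2^{n-k'}\geqslant 1$, i.e.\ $k'\leqslant n$.
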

\noindent We will obtain another proof of this result using a characterization of computational metric entropy.
These equivalence does not cover the Shannon Entropy case. It is worth of noting that the Shannon Entropy based Computational Entropy also found applications where it becomes more suitable than the computational min-entropy \cite{Vadhan2012}.

\section{Separation between Computational Entropy and Smooth Entropy}
In this section we examine the existence of a conversion rule from the computational to 
the smooth entropy:
\begin{quote} 
Suppose that $\Hmtr{X|Z}{\det [0,1], s, \epsilon} \geqslant k$, where $X\in\{0,1\}^{n},Z\in\{0,1\}^{m}$. What are the conditions on $s,\epsilon$ that guarantee that
$\mathbf{H}_{\infty}^{\epsilon'}(X|Z) \geqslant k'$ with $\epsilon'\leqslant 2^{C}\epsilon$ and $k'\geqslant k-C$ for some constant $C$?
\end{quote}
In Section \ref{sec:5} we will prove that if the security parameters are sufficiently strong, more precisely, if $s = \bigO{2^{k+m}}$, then the computational min-entropy becomes the smooth entropy. For the unconditional case also
exponentially small $\epsilon$ is sufficient (see Section \ref{sec:5}, Corollary \ref{MinimalComplexityRequired}).
Interestingly, this result can be inverted. In this section we show that
\begin{quote}
Given the metric entropy of $X|Z$ one really needs the security to be exponentially strong in $k+m$, to obtain the smooth entropy with comparable parameters.
\end{quote}
We stress that although the existence of a separation between the Metric and Smooth Entropy is almost obvious, the quantitative bound which is exponential in both: $k$ and $m$ is less triviall to see. Since the maximal entropy of $X|Z$ is $n$, it follows that even if distinguishers were given access to an oracle over $\{0,1\}^{n}$, the entropy could be still non-trivial.

\begin{remark}
Since Min-Entropy is the smallest one among other Renyi Entropies, and because there is efficient conversion for Smooth Renyi Entropies (Lemma \ref{RenyiEntropy_DifferentOrders}) it is sufficient to consider the case of Min Entropy.
\end{remark}

\paragraph{Separation for unconditional Computational Min-entropy} 

\begin{theorem}\label{Separation_MetricSmooth_Unconditional}
For any $C>1$, there exists $X$ such that $\Hmtr{X}{\bigOm{\nicefrac{2^{k}\epsilon^2}{\log\left(2^{k}\epsilon^2\right)}},\epsilon} \geqslant k+C$ but $\Hsmooth{X}{\nicefrac{1}{2}}\leqslant k+1$. 
\end{theorem}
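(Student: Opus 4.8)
The plan is to realize $X$ as the uniform distribution on a \emph{randomly chosen} set $T\subseteq\{0,1\}^n$ with $|T|=2^{k}$, for a suitable dimension $n\geq k+C$, and to show that a random $T$ has the desired properties with positive probability. The smooth-entropy bound is a deterministic fact that does not depend on the choice of $T$: writing $g(t)=\sum_{x}\left(\PrD{X}(x)-t\right)_{+}$ for the mass of $X$ lying above level $t$, the optimal smoothing of $X$ within statistical distance $\nicefrac{1}{2}$ caps $X$ at a level $2^{-k'}$ exactly when $g\left(2^{-k'}\right)\leq\nicefrac{1}{2}$. For $X$ uniform on $2^{k}$ points one computes $g\left(2^{-(k+1)}\right)=\nicefrac{1}{2}$ and $g\left(2^{-(k+2)}\right)=\nicefrac{3}{4}$, so min-entropy $k+1$ is attainable but $k+2$ is not; hence $\Hsmooth{X}{\nicefrac{1}{2}}=k+1$. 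Here one only needs $n\geq k+C$ so that there is room to redistribute the removed mass $\nicefrac{1}{2}$ below the cap, which holds since $C>1$.

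For the metric-entropy bound I would first fix a single $[0,1]$-valued distinguisher $D$ of size at most $s$ and note that $\mathbf{E}D(X)=2^{-k}\sum_{x\in T}D(x)$ is precisely the empirical mean of $D$ over a uniform sample (without replacement) of size $2^{k}$ from $\{0,1\}^{n}$, whose expectation is the global mean $\mu_{D}=2^{-n}\sum_{x}D(x)$. Hoeffding's inequality, which only improves for sampling without replacement, then yields $\Pr{T}{\left|\mathbf{E}D(X)-\mu_{D}\right|>\epsilon}\leq 2\exp\left(-2\cdot 2^{k}\epsilon^{2}\right)$.

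Next I would take a union bound over all distinguishers of size $s$. Choosing $n$ so that $\log n=\bigO{\log s}$ (which is compatible with $n\geq k+C$, since $\log s\approx k$), the number of such distinguishers is $2^{\bigO{s\log s}}$, so the probability that \emph{some} $D$ violates concentration is at most $2^{\bigO{s\log s}}\cdot 2\exp\left(-2\cdot 2^{k}\epsilon^{2}\right)$, which is below $1$ as soon as $s\log s=\bigO{2^{k}\epsilon^{2}}$, i.e. $s=\bigOm{\nicefrac{2^{k}\epsilon^{2}}{\log\left(2^{k}\epsilon^{2}\right)}}$. Fixing a good $T$, let $\cY$ be the convex set of distributions on $\{0,1\}^{n}$ with min-entropy at least $k+C$ (equivalently, with all masses at most $2^{-(k+C)}$). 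For each $D$ the linear functional $Y\mapsto\mathbf{E}D(Y)$ attains on $\cY$ a maximum (the average of the top $2^{k+C}$ values of $D$) and a minimum (the average of the bottom $2^{k+C}$ values), and these bracket $\mu_{D}$; by convexity of $\cY$ there is $Y\in\cY$ with $\mathbf{E}D(Y)=\mu_{D}$, whence $\left|\mathbf{E}D(X)-\mathbf{E}D(Y)\right|=\left|\mathbf{E}D(X)-\mu_{D}\right|\leq\epsilon$. Since this holds for every $D$ of size at most $s$, we obtain $\Hmtr{X}{s,\epsilon}\geq k+C$; as $C$ never entered the concentration estimate, the separation holds for every $C>1$.

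The hard part is the calibration inside the union bound: one must count the distinguishers of the given type by size (the $2^{\bigO{s\log s}}$ bound) and balance it against the Hoeffding exponent $2\cdot 2^{k}\epsilon^{2}$ to extract exactly the claimed threshold $s=\bigOm{\nicefrac{2^{k}\epsilon^{2}}{\log\left(2^{k}\epsilon^{2}\right)}}$; this is also what forces the mild constraint on $n$ so that $\log n$ does not spoil the circuit count. The remaining ingredients — Hoeffding for sampling without replacement, convexity of $\cY$ and the bracketing of $\mu_{D}$, and the room needed for the optimal smoothing — are routine.
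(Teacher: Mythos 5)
Your proposal is correct and follows essentially the same route as the paper: a randomly chosen $2^k$-element support, Hoeffding's inequality for sampling without replacement, and a union bound over all circuits of size $s$, balanced against the exponent $2^k\epsilon^2$ to extract the threshold $s=\bigOm{\nicefrac{2^{k}\epsilon^{2}}{\log\left(2^{k}\epsilon^{2}\right)}}$. The only (cosmetic) differences are that the paper samples the support inside a fixed $2^{k+C}$-element set $S$ and compares against $U_S$, packaging the concentration step as a strengthened hard-function lemma, whereas you sample from all of $\{0,1\}^n$ and compare against the global mean $\mu_D$ (for which simply taking $Y$ uniform on $\{0,1\}^n$ already suffices, making your bracketing step unnecessary).
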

\begin{proof}
The main idea is to reduce the problem to a problem of \emph{approximating of a certain function}, which will turn out to be \emph{hard} for limited size circuits. 
Let $A$ and $S$ be sets of cardinality $2^k$ and $2^{k+C}$ and $A\subset S$. Denote $B = S\setminus A$. Consider the random variable $X=U_{A}$. It is easy to see that $\Hmtr{X}{\frac{1}{2}} \leqslant k+1$. Observe that
\begin{align*}
 \mathbf{E}D(X) - \mathbf{E}D\left(U_{S}\right) & =  \\
 = &  \mathbf{E}D\left(U_{A}\right) - \mathbf{E}D\left(U_{S}\right)  \\ 
 = & \left(1-2^{-C}\right) \left( \Pr{}{D\left(U_{A}\right)=1} -  \Pr{}{D\left(U_{B}\right)=1}\right)  \\
 = & \left(1-2^{-C}\right) \left( \Pr{}{D\left(U_{A}\right)=1} +  \Pr{}{D\left(U_{B}\right)=0}\right) - \left(1-2^{-C}\right),
\end{align*}
hence, assuming $\Hmtr{X}{\textup{det}\{0,1\},s,\epsilon} < k+C$ for $\epsilon = \delta \left(1-2^{-C}\right)$, we get 
\begin{equation*}
 \Pr{}{D\left(U_{A}\right)=1} +  \Pr{}{D\left(U_{B}\right)=0} > 1+\delta
\end{equation*}
The proof easily follows now from the following lemma, being a strenghtening of the classical result on the existence of $\delta$-hard functions.
\begin{lemma}\label{hard_functions_1}
For any $C \geqslant 1$ and sufficiently large $\ell$ there exists a boolean function $f$ over $\{0,1\}^{\ell}$, such that $\mathrm{bias}(f) = 1-2^{1-C}$ and for all circuits $D$ of size $\bigO{\nicefrac{2^{\ell-C}\delta^2}{\ell-C-2\log(1/\delta}}$ we have
\begin{equation*}
 \Pr{x\leftarrow f^{-1}(\{1\})}{D(x) = f(x)} + \Pr{x\leftarrow f^{-1}(\{0\})}{D(x) = f(x)} < 1+\delta
\end{equation*}
\end{lemma}
The proof follows by a standard application of the Chernoff Bound and the union bound over the all circuits of bounded size. See Appendix, \ref{app:hard_functions_1} for the details and discussion. 
\end{proof}

\paragraph{Separation for Conditional Computational Min Entropy}

\begin{theorem}\label{Separation_MetricSmooth_Conditional}
For sufficiently large $n$, and for any $C > 0$, $k<n-C$ and $\epsilon > 0$ there exists a pair of jointly distributed random variables $X\in \{0,1\}^n$, $Z\in\{0,1\}^m$ such that 
\begin{enumerate}[label = (\roman{*})]
 \item $\mathbf{H}^{1/2}_{\infty}(X|Z) \leqslant k+1$
 \item $\mathbf{H}^{\text{Metric},\det  [0,1], s, \epsilon}(X|Z) \geqslant k+C$ for $ s =  \bigOm{\frac{2^{k+m}\epsilon^4}{(k+m)\log \left(2^{k+m}\epsilon^2 \right)}}$
\end{enumerate}
\end{theorem}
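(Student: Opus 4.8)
The plan is to reduce the conditional case to the unconditional separation of Theorem \ref{Separation_MetricSmooth_Unconditional} by a suitable ``direct product'' construction. The key idea is to build $(X,Z)$ so that, for each fixing $Z=z$, the conditional distribution $X|Z=z$ looks like the hard instance $U_{A_z}$ from the unconditional proof, where $A_z \subset S_z$ are sets of sizes $2^k$ and $2^{k+C}$ respectively. Concretely, I would let $Z$ be uniform over $\{0,1\}^m$ and, for each $z$, choose $A_z$ (and an enclosing $S_z$) so that the indicator sets $\{(x,z): x\in A_z\}$ and $\{(x,z): x\in S_z\}$ encode a single \emph{global} hard function over $\{0,1\}^{n+m}$ of the type guaranteed by Lemma \ref{hard_functions_1}. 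The information-theoretic side (i) is then immediate: since every $X|Z=z$ is uniform on a set of size $2^k$, we have $\widetilde{\mathbf{H}}_{\infty}(X|Z)= k$ exactly, and statistical distance $1/2$ cannot raise the smooth min-entropy past $k+1$, giving $\mathbf{H}^{1/2}_{\infty}(X|Z)\leqslant k+1$ by the same counting as in the unconditional case.

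For the computational lower bound (ii), I would follow the template of the unconditional proof but applied to the joint space. Assume for contradiction that $\mathbf{H}^{\text{Metric},\det[0,1],s,\epsilon}(X|Z) < k+C$; then there is a deterministic $[0,1]$-valued distinguisher $D$ of size $s$ separating $(X,Z)$ from \emph{every} joint distribution $(Y,Z)$ with $\widetilde{\mathbf{H}}_{\infty}(Y|Z)\geqslant k+C$. The natural comparison object is $(U_{S_z},Z)$, i.e.\ $Y$ uniform on $S_z$ conditioned on $z$, which has average conditional min-entropy exactly $k+C$. Expanding $\mathbf{E}\,D(X,Z)-\mathbf{E}\,D(U_{S},Z)$ exactly as in the unconditional computation (the factor $(1-2^{-C})$ survives, now averaged over $z$) reduces the existence of such a $D$ to $D$ \emph{approximating} the global hard function on $\{0,1\}^{n+m}$, with advantage parameter $\delta = \epsilon/(1-2^{-C})$. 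Invoking Lemma \ref{hard_functions_1} with $\ell = n+m$ (or, more tightly, with the ``effective'' dimension corresponding to $\log|S|=k+m$ where the function is supported) yields the circuit-size bound, and matching it against the claimed $s = \bigOm{\frac{2^{k+m}\epsilon^4}{(k+m)\log(2^{k+m}\epsilon^2)}}$ closes the argument by contradiction.

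I would handle the gap between $[0,1]$-valued and boolean distinguishers — Lemma \ref{hard_functions_1} is stated for boolean $D$ while (ii) quantifies over $[0,1]$-valued $D$ — by the standard randomized-rounding/averaging reduction: a $[0,1]$-valued distinguisher of advantage $\delta$ can be converted to a boolean one of comparable advantage at a polynomial cost in size, which is what accounts for the extra $\epsilon$-power (the $\epsilon^4$ rather than $\epsilon^2$) and the loss by the factor $(k+m)$ in the final size bound. This rounding step is where the parameters degrade, so I would carry it out carefully to make sure the size bound comes out as stated.

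The main obstacle I anticipate is bookkeeping the entropy budget across the two definitions of conditional min-entropy. The comparison set $\mathcal{C}$ for metric entropy should be the set of all $(Y,Z)$ with $\widetilde{\mathbf{H}}_{\infty}(Y|Z)\geqslant k+C$, and I must verify that $(U_{S_z},Z)$ genuinely lies in $\mathcal{C}$ while the construction still forces the smooth-entropy upper bound in (i) for the \emph{same} notion. The delicate point is that exponential hardness must hold in $k+m$ rather than in $n$: the hard function lives on a domain of effective size $2^{k+m}$ (the support of the $S_z$'s), so the union bound over circuits in Lemma \ref{hard_functions_1} must be taken over that restricted domain to obtain the sharp exponent $2^{k+m}$ rather than a weaker $2^{n+m}$. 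Getting this localization right, together with tracking how the $(1-2^{-C})$ factor interacts with the rounding loss, is the crux of the proof.
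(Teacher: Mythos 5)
Your construction coincides with the paper's (each $X|Z=z$ uniform on a $2^k$-subset $A(z)$ of a $2^{k+C}$-element set $S$, compared against $Y|Z=z=U_S$, with the factor $(1-2^{-C})$ converting the distinguishing advantage into an approximation advantage for the indicator of $\bigcup_z A(z)\times\{z\}$), and part (i) is handled identically. Where you genuinely diverge is in part (ii): you fix the marginal of $Z$ (uniform) once and for all and defeat the distinguisher \emph{on average over $z$}, by a single Hoeffding-plus-union-bound argument over the product domain of effective size $2^{k+m}$. The paper instead refuses to fix $Z$: it assumes the separation fails for \emph{every} marginal $Z$, applies a min--max theorem to get one convex combination of size-$s$ circuits that succeeds for all $Z$, Chernoff-approximates it by a single circuit of size $s'=\mathcal{O}\left((k+m)s/\epsilon^2\right)$ that then succeeds for every individual $z$, and finally invokes Lemma \ref{ConditionalHardFunction}, which only guarantees failure at \emph{one} $z$. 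Your route is simpler and, if carried through, actually sharper: the independence of the slices $A(z)$ across $z$ makes the average of the per-$z$ advantages sub-Gaussian with parameter $\mathcal{O}(2^{-k-m})$, so the union bound over circuits already works at size $s=\mathcal{O}\left(2^{k+m}\epsilon^2/\log(2^{k+m}\epsilon^2)\right)$, with no $\epsilon^4$ or $(k+m)$ degradation; the paper's weaker bound is the price of its min--max/Chernoff detour, not of any rounding step.

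Two points you should tighten. First, your explanation of the $\epsilon^4$ and $(k+m)$ factors as the cost of converting $[0,1]$-valued distinguishers to boolean ones is a misattribution: the Hoeffding bound in Lemma \ref{hard_functions_1} applies verbatim to $[0,1]$-valued circuits, so no rounding is needed in your argument, and in the paper these losses come from the Chernoff approximation of the min--max mixture. Second, you cannot literally take one random subset of the joint domain as in Lemma \ref{hard_functions_1}, because its slices would not all have size exactly $2^k$ (which both the exact value of $\mathbf{H}_{\infty}(X|Z=z)$ and the clean identity relating $\mathbf{E}_{z}\mathbf{E}_{x\leftarrow A(z)}D(x,z)$ to sampling from $f^{-1}(1)$ rely on); you must sample each $A(z)$ independently with $|A(z)|=2^k$, i.e.\ prove the product-structured variant of the lemma --- which is exactly the paper's Lemma \ref{ConditionalHardFunction}, except that you need its full concentration statement rather than its stated ``at least one $z$'' conclusion. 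Both are routine repairs; the approach is sound.
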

The proof is longer than for the unconditional case. The key point is that in the conditional case it is significantly harder to find an appropriate ''hard approximation task". See the proof of Theorem \ref{app:Separation_MetricSmooth_Conditional} in the Appendix.

\section{Characterizations of Metric Entropy}

\paragraph{General Characterization Theorem}

\noindent
The following result can be viewed as a general characterization of Metric Entropy.  The easy proof is given in the Appendix.
\begin{theorem}\label{MetricEntropy_Characterization}
Let $\mathcal{D}$ be a class of real valued functions on $\{0,1\}^n$ closed under complements, $\cY$ be a non-empty compact convex set of probability distributions over $\{0,1\}^n$ and $X\in\{0,1\}^n$ be a random variable. Then the following conditions are equivalent:
\begin{enumerate}[label = (\roman{*})]
 \item For every $Y\in\cY$ there exists $D\in \cD$ such that $\left|\mathbf{E}D(X) - \mathbf{E}D(Y)\right| < \epsilon$
 \item For every $D\in \cD$ we have $\mathbf{E}D(X) \leqslant \max\limits_{Y\in \cY} \mathbf{E}D(Y) + \epsilon$
\end{enumerate}
\end{theorem}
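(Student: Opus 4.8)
The plan is to pass to the convex-geometric picture of Section~\ref{Relationship_to_ConvexAnalysis} and read the equivalence off the geometry of the value set $\{\mathbf{E}D(Y):Y\in\cY\}$. Identify every distribution with its probability vector in $\mathbb{R}^{2^n}$, so that $Y\mapsto\mathbf{E}D(Y)=\langle D,\PrD{Y}\rangle$ is \emph{affine}, and abbreviate by $h(D)=\max_{Y\in\cY}\mathbf{E}D(Y)$ the support function of $\cY$, whose maximum is attained because $\cY$ is compact. In this notation condition (ii) is exactly the one-sided estimate $\mathbf{E}D(X)\leqslant h(D)+\epsilon$ for all $D\in\cD$. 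The first move is to exploit that $\cD$ is closed under complements: evaluating (ii) at both $D$ and $D^{c}=\mathbf{1}-D$ and substituting $\mathbf{E}D^{c}=1-\mathbf{E}D$ promotes the one-sided bound to the two-sided one, $\min_{Y\in\cY}\mathbf{E}D(Y)-\epsilon\leqslant\mathbf{E}D(X)\leqslant\max_{Y\in\cY}\mathbf{E}D(Y)+\epsilon$. Geometrically, (ii) asserts that \emph{no} functional from $\cD$ separates $X$ from the convex set $\cY$ by a margin exceeding $\epsilon$, which is the form we want to match against the indistinguishability clause of (i).

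For the implication $(\textrm{ii})\Rightarrow(\textrm{i})$ I would fix a functional and manufacture the promised nearby distribution. By affineness of $Y\mapsto\mathbf{E}D(Y)$ together with convexity of $\cY$, the value set $\{\mathbf{E}D(Y):Y\in\cY\}$ is a closed interval $[m_D,M_D]$, and the two-sided bound just derived forces $\mathbf{E}D(X)\in[m_D-\epsilon,\,M_D+\epsilon]$. The intermediate value theorem then yields a $Y\in\cY$ with $\mathbf{E}D(Y)$ within $\epsilon$ of $\mathbf{E}D(X)$, i.e. $\left|\mathbf{E}D(X)-\mathbf{E}D(Y)\right|\leqslant\epsilon$, which is precisely the distinguishing clause of (i). The reverse implication $(\textrm{i})\Rightarrow(\textrm{ii})$ is the easy one: from a witness $Y$ with advantage below $\epsilon$ we read off $\mathbf{E}D(X)<\mathbf{E}D(Y)+\epsilon\leqslant h(D)+\epsilon$, and here neither compactness nor convexity is needed.

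The main obstacle is the witness-extraction step, that is, converting the purely \emph{negative} statement ``$X$ is not $\epsilon$-separated from $\cY$ by any $D\in\cD$'' into the \emph{positive} production of a concrete member of $\cY$ that the functional cannot $\epsilon$-distinguish from $X$. This is where the three hypotheses are genuinely used and must be combined carefully: closure of $\cD$ under complements is what reduces the symmetric, absolute-value requirement to the one-sided support-function inequality (controlling both signs at once), convexity of $\cY$ is what makes the value set an interval so that the intermediate value argument applies, and compactness is what guarantees the extremal $Y$ and hence the attainment of $h(D)$. I expect the bookkeeping around strict versus non-strict inequalities (the $<\epsilon$ in (i) against the $\leqslant\epsilon+$ in (ii)) to be the only remaining delicate point, resolved by the compactness of $\cY$ and the freedom in choosing $\epsilon$.
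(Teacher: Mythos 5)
Your proposal is correct and is essentially the intended argument: the paper defers the ``easy proof'' to the appendix (where it is in fact omitted), but its Section \ref{Relationship_to_ConvexAnalysis} sketches exactly your mechanism --- closure under $D^{c}=\mathbf{1}-D$ to turn the one-sided support-function bound into the two-sided one, then compactness and convexity of $\cY$ to make the value set $\{\mathbf{E}D(Y):Y\in\cY\}$ a closed interval containing a witness within $\epsilon$ of $\mathbf{E}D(X)$. Two small points you handled sensibly: you tacitly (and rightly) read (i) with the metric quantifier order ``for every $D\in\cD$ there exists $Y\in\cY$'' (as written, with constants or any zero-advantage function in $\cD$, clause (i) would be vacuous and the equivalence false), and the strict-versus-nonstrict mismatch you flag at the end is a defect of the statement rather than of your argument --- (ii) only yields $\left|\mathbf{E}D(X)-\mathbf{E}D(Y)\right|\leqslant\epsilon$, so both clauses should be stated with $\leqslant$; compactness cannot repair the boundary case $\mathbf{E}D(X)=\max_{Y\in\cY}\mathbf{E}D(Y)+\epsilon$.
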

\begin{remark}
If the entropy is defined based on underlying information-theoretic entropy measure $\mathbf{H}$, for instance Renyi Computational Entropy defined in Section 2, then the set $\cY$ is just so called superlevels set: it consists of all distributions having the (information-theoretic) entropy at least $k$. For the conditional relaxed entropy of $X|Z$, we set $\cY = \left\{\PrD{Y,Z'}:\, Y\in \{0,1\}^n, Z'\in\{0,1\}^m,\mathbf{H}_{\infty}(X|Z')\geqslant k \right\}$.
\end{remark}

\noindent It is clear that we need to solve the maximization task explicity, in order to obtain a characterization for a concrete variant of Metric Entropy. 

\paragraph{Renyi Entropy} \

\noindent By computing $\max_{Y\in \cY} \mathbf{E}D(Y) $ in Theorem \ref{MetricEntropy_Characterization}, we characterize the most important cases of the Metric R\'{e}nyi Entropy.
\begin{lemma}\label{RenyiEntropy_ExtremeDistributions}
Let $\alpha>1$ be fixed, let $D:\{0,1\}^{n}\rightarrow \{0,1\}$ be a function and $\mathcal{Y}_k = \left\{Y\in\{0,1\}^{n}:\, \mathbf{H}_{\alpha}(Y)\geqslant k \right\}$. Then 
\begin{equation}
 \max\limits_{Y\in \mathcal{Y}_k} \mathbf{E}D(Y) = 
\left\{
\begin{array}{rl}
 p_D \cdot |D|,  & \textup{if } |D| < 2^k \\
 1, & \textup{otherwise}
\end{array}
\right.
\end{equation}
where $p_D$, for $|D| \leqslant 2^k$, is the greatest number satisfying the following system 
\begin{equation}\label{eq:RenyiEntropy_ExtremeDistributions}
\left\{
\begin{array}{rcl}
 p_D^{\alpha}|D| + q_D^{\alpha}|D^c| &=& 2^{-(\alpha-1)k}  \\
 p_D |D| + q_D |D^c| &=&  1 \\
 p_D,q_D & \geqslant & 0
\end{array}\right. 
\end{equation}
Moreover, the solution $p_D$ is unique provided that $k < n-1$.
\end{lemma}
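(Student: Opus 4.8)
The plan is to maximize the linear functional $\mathbf{E}D(Y) = \langle D, \PrD{Y}\rangle$ over the convex set $\mathcal{Y}_k$ of distributions with Rényi entropy at least $k$. Since $D$ is boolean, let me denote by $D^{-1}(\{1\})$ the set where $D=1$ and by $D^{-1}(\{0\})$ its complement, so that $\mathbf{E}D(Y) = \sum_{x:D(x)=1}\PrD{Y}(x)$. The quantity to maximize depends only on the total probability mass that $Y$ places on the set where $D=1$, so I would first observe that among all distributions achieving a given mass split between the two sets, the entropy constraint is easiest to satisfy when the mass is spread uniformly within each set. This is where the structure $p_D|D| + q_D|D^c| = 1$ comes from: the optimal $Y$ is constant (equal to $p_D$) on $D^{-1}(\{1\})$ and constant (equal to $q_D$) on $D^{-1}(\{0\})$.

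First I would make the uniformity reduction rigorous. The claim is that for an optimal $Y$, the distribution is flat on each of the two level sets of $D$. The plan is to argue by a smoothing/averaging argument: given any feasible $Y$, replace it by the distribution $Y'$ that on each level set equals the average of $Y$'s mass over that set. Since $\|\cdot\|_{\alpha-1}$ (equivalently $\sum_x \PrD{Y}(x)^{\alpha}$) is Schur-convex, and averaging within a block can only decrease this quantity (majorization/Jensen for the convex map $t \mapsto t^{\alpha}$ with $\alpha > 1$), the flattened $Y'$ still satisfies $\mathbf{H}_{\alpha}(Y') \geqslant \mathbf{H}_{\alpha}(Y) \geqslant k$, while $\mathbf{E}D(Y') = \mathbf{E}D(Y)$ is unchanged. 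Hence it suffices to optimize over two-valued distributions parametrized by $(p_D, q_D)$, reducing the problem to the system in \eqref{eq:RenyiEntropy_ExtremeDistributions}.

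Next I would solve the reduced two-variable problem. Writing $a = |D|$ and $b = |D^c| = 2^n - a$, I want to maximize $\mathbf{E}D(Y) = p\cdot a$ subject to $pa + qb = 1$, the entropy constraint $p^{\alpha}a + q^{\alpha}b \leqslant 2^{-(\alpha-1)k}$, and $p,q \geqslant 0$. Maximizing $p\cdot a$ pushes $p$ as large as possible, which pushes $q$ as small as possible subject to the entropy budget; the optimum is attained when the entropy constraint is tight, giving the equality in the first line of the system. If $a = |D| \geqslant 2^k$, then one can already place all the mass on $D^{-1}(\{1\})$ with a uniform distribution of min-entropy (hence $\alpha$-entropy) at least $k$, so the maximum is $1$; this is the second case. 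Otherwise $a < 2^k$ and the constrained maximum $p_D \cdot a$ is governed by the tight system. For uniqueness, I would treat $q \mapsto p(q)$ as defined implicitly by $pa + qb = 1$ and analyze $g(q) = p(q)^{\alpha}a + q^{\alpha}b$; the goal is to show $g$ is strictly monotone on the feasible range so that the level $g = 2^{-(\alpha-1)k}$ is hit at a single point, yielding a unique $p_D$.

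The main obstacle I expect is establishing existence and uniqueness of $p_D$ cleanly, in particular verifying the feasibility range and the strict monotonicity (equivalently, strict convexity) that forces uniqueness, and pinning down exactly where the condition $k < n-1$ enters. The condition $k < n-1$ should guarantee that the entropy target $2^{-(\alpha-1)k}$ lies strictly between the value at the uniform distribution on all of $\{0,1\}^n$ and the degenerate endpoint, so that the tight constraint has an interior solution rather than a boundary one; I would verify this by checking the endpoint values of $g$ and confirming that the strictly convex behavior of $t\mapsto t^{\alpha}$ rules out a second crossing. The majorization step also deserves care to ensure the flattening genuinely cannot decrease $\mathbf{H}_{\alpha}$, but this is standard once phrased via convexity of $t \mapsto t^{\alpha}$ for $\alpha>1$.
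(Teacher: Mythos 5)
Your proposal follows essentially the same route as the paper's proof: reduce to distributions that are flat on each level set of $D$ via an averaging/Jensen argument for the convex map $t\mapsto t^{\alpha}$, split on whether $|D|\geqslant 2^k$ (where a uniform distribution on $D^{-1}(\{1\})$ attains the value $1$), and in the remaining case obtain $p_D$ from the tight two-variable system, with uniqueness coming from strict convexity of the one-parameter power sum together with the endpoint sign condition that $k<n-1$ guarantees. The steps you flag as needing care (feasibility range, where $k<n-1$ enters) are exactly the ones the paper resolves by checking $f(0)=\left(2^{n}-|D|\right)^{1-\alpha}-2^{-(\alpha-1)k}<0$ for the substituted variable $\gamma=p_D|D|$, so the plan is sound and matches the original.
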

The proof is not hard but technical and is left to the appendix (see the proof of Theorem \ref{app:RenyiEntropy_ExtremeDistributions}). Especially, for the cases $\alpha \to 1$, $\alpha=2$ and $\alpha\to\infty$ corresponding to the Shannon, Colision and Min-Entropy respectively, after some calculations we obtain the following characterizations:
\begin{corollary}[Metric Shannon Entropy]\label{MetricShannonEntropy_Characterization}
Let $\mathcal{D}$ be a class of boolean functions closed under complements. Then the following conditions are equivalent
\begin{enumerate}[label = (\roman{*}))]
  \item $\mathbf{H}_{1}^{\textup{Metric},\mathcal{D},\epsilon}(X)\geqslant k$ 
  \item For every $D\in\mathcal{D}$ such that $|D|\leqslant 2^{k}$,
 $\mathbf{E}D(X) \leqslant p_D |D| + \epsilon$ holds for $p_D$ solving the system
\begin{equation}\label{eq:ShannonEntropy_ExtremeDistributions}
\left\{
\begin{array}{rcl}
 -p_D |D| \log p_D  - q_D \left|D^c\right| \log q_D  &=& k  \\
 p_D |D| + q_D |D^c| &=&  1 \\
 p_D,q_D &\geqslant & 0
\end{array}\right. 
\end{equation} 
\end{enumerate} 
\end{corollary}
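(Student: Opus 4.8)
The plan is to obtain the statement as a specialization of the general characterization Theorem~\ref{MetricEntropy_Characterization} to the Shannon superlevel set, followed by the explicit solution of the resulting maximization, which is the $\alpha\to 1$ limit of Lemma~\ref{RenyiEntropy_ExtremeDistributions}. First I would set $\cY = \cY_k = \{Y : \mathbf{H}_1(Y) \geqslant k\}$ and check that it meets the hypotheses of Theorem~\ref{MetricEntropy_Characterization}: it is non-empty whenever $k\leqslant n$, it is convex because Shannon entropy is concave on the probability simplex, and it is compact because that function is continuous on the compact simplex, so its superlevel set is closed and bounded. With these verified, Theorem~\ref{MetricEntropy_Characterization} tells us that $\mathbf{H}_1^{\textup{Metric},\mathcal{D},\epsilon}(X)\geqslant k$ holds if and only if $\mathbf{E}D(X)\leqslant \max_{Y\in\cY_k}\mathbf{E}D(Y)+\epsilon$ for every $D\in\mathcal{D}$. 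Thus everything reduces to evaluating this maximum for boolean $D$.

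Next I would compute $\max_{Y\in\cY_k}\mathbf{E}D(Y)$. Since $D$ is boolean, $\mathbf{E}D(Y)=\sum_{x\in D^{-1}(1)}\PrD{Y}(x)$ is simply the mass that $Y$ places on the support of $D$, so the objective depends only on the total mass assigned to $D^{-1}(1)$ and to $D^{-1}(0)$. By the concavity of entropy together with the permutation symmetry of the problem within each of the two blocks, an averaging argument shows that an optimal $Y$ may be taken to be \emph{two-level}: constant equal to $p_D$ on $D^{-1}(1)$ and constant equal to $q_D$ on $D^{-1}(0)$. This is exactly the structure of the extreme distributions in Lemma~\ref{RenyiEntropy_ExtremeDistributions}, so I would either invoke that lemma and pass to the limit $\alpha\to 1$, or re-run the same concavity argument directly for Shannon entropy. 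The maximum then equals $p_D|D|$ subject to the normalization $p_D|D|+q_D|D^c|=1$ and the entropy constraint, where one selects the branch with $p_D\geqslant q_D$ so that mass is concentrated on $D^{-1}(1)$.

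To recover the specific system~(\ref{eq:ShannonEntropy_ExtremeDistributions}), I would take the limit of the R\'enyi system~(\ref{eq:RenyiEntropy_ExtremeDistributions}) as $\alpha\to 1$: writing $\alpha=1+t$ and expanding $p_D^{1+t}=p_D(1+t\ln p_D)+o(t)$ and $2^{-tk}=1-tk\ln 2+o(t)$, the first equation $p_D^{\alpha}|D|+q_D^{\alpha}|D^c|=2^{-(\alpha-1)k}$ becomes, after using the normalization constraint and matching the coefficients of $t$, the Shannon equation $-p_D|D|\log p_D-q_D|D^c|\log q_D=k$, while the other two constraints are unchanged. Finally I would dispose of the threshold: for $|D|\geqslant 2^k$ the uniform distribution on $D^{-1}(1)$ already has Shannon entropy $\log|D|\geqslant k$ and attains $\mathbf{E}D(Y)=1$, so $\mathbf{E}D(X)\leqslant 1+\epsilon$ is vacuous and such $D$ impose no constraint; this is why condition (ii) only quantifies over $D$ with $|D|\leqslant 2^k$.

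The main obstacle I anticipate is the rigorous justification that the optimum is attained by a two-level distribution and the correct identification of the branch $p_D\geqslant q_D$, together with the regularity at the boundary $|D|=2^k$, where $q_D\to 0$ and the value transitions to $1$. If I route through the limit of Lemma~\ref{RenyiEntropy_ExtremeDistributions} rather than re-deriving, a secondary point is verifying that $\max_{Y\in\cY_k}\mathbf{E}D(Y)$ varies continuously in $\alpha$ near $1$, so that the limit of the optima is the optimum of the limit. Establishing the two-level structure cleanly is precisely the technical core; once it is in place, the limit computation and the removal of the vacuous constraints are routine.
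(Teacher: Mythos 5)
Your proposal is correct and follows essentially the route the paper intends: specialize Theorem~\ref{MetricEntropy_Characterization} to the Shannon superlevel set, show via the concavity/averaging (Jensen) argument of Lemma~\ref{RenyiEntropy_ExtremeDistributions} that the maximizer is two-level on $D^{-1}(1)$ and $D^{-1}(0)$, and read off the system~(\ref{eq:ShannonEntropy_ExtremeDistributions}) as the $\alpha\to 1$ case, with the $|D|\geqslant 2^k$ constraints vacuous since the uniform distribution on $D^{-1}(1)$ then lies in $\mathcal{Y}_k$. The paper leaves these ``calculations'' implicit, and your filling-in (including the direct concavity argument as an alternative to the limit in $\alpha$, which sidesteps the continuity-of-optima issue you flag) is consistent with it.
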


\begin{corollary}[Metric Colision Entropy]\label{MetricColisionEntropy_Characterization}
Let $\mathcal{D}$ be a class of boolean functions closed under complements. Then the following conditions are equivalent
\begin{enumerate}[label = (\roman{*})]
  \item $\mathbf{H}_{2}^{\textup{Metric},\mathcal{D},\epsilon}(X)\geqslant k$ 
  \item The inequality $\mathbf{E}D(X) \leqslant p_D |D| + \epsilon$ holds for every $D\in\mathcal{D}$, and $p_D$ given by
\begin{equation}\label{eq:ColisionEntropy_ExtremeDistributions}
p_D = 2^{-n}+\sqrt{\left|D^c\right| |D|^{-1}\left(2^{-k-n}-2^{-2n}\right)}
\end{equation} 
\end{enumerate} 
\end{corollary}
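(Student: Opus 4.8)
The plan is to combine the general characterization (Theorem \ref{MetricEntropy_Characterization}) with the explicit description of the entropy-maximizing distributions (Lemma \ref{RenyiEntropy_ExtremeDistributions}) specialized to $\alpha = 2$, and then to solve the resulting two-equation system in closed form. First I would apply Theorem \ref{MetricEntropy_Characterization} with $\cY = \mathcal{Y}_k = \{Y:\mathbf{H}_2(Y)\geqslant k\}$. This set is a nonempty compact convex set of distributions: it is the sublevel set $\{\sum_x \PrD{Y}(x)^2 \leqslant 2^{-k}\}$ of the convex map $\PrD{Y}\mapsto \sum_x\PrD{Y}(x)^2$ intersected with the simplex, hence convex, closed and bounded, and it contains the uniform distribution whenever $k\leqslant n$. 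The theorem then makes statement (i), namely $\mathbf{H}_2^{\textup{Metric},\mathcal{D},\epsilon}(X)\geqslant k$, equivalent to the condition that $\mathbf{E}D(X)\leqslant \max_{Y\in\mathcal{Y}_k}\mathbf{E}D(Y) + \epsilon$ for every $D\in\cD$.

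Next I would invoke Lemma \ref{RenyiEntropy_ExtremeDistributions} with $\alpha=2$, so that $2^{-(\alpha-1)k} = 2^{-k}$. For $|D| < 2^k$ this gives $\max_{Y\in\mathcal{Y}_k}\mathbf{E}D(Y) = p_D|D|$, where $(p_D,q_D)$ is the solution of system (\ref{eq:RenyiEntropy_ExtremeDistributions}), explicitly
\begin{align*}
p_D^2 |D| + q_D^2 |D^c| &= 2^{-k}, \\
p_D |D| + q_D |D^c| &= 1, \qquad p_D,q_D \geqslant 0 .
\end{align*}
All that remains is to solve this system in closed form and read off formula (\ref{eq:ColisionEntropy_ExtremeDistributions}).

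The computational heart of the argument is this last step. I would eliminate $q_D$ using the normalization equation, $q_D|D^c| = 1 - p_D|D|$, substitute into the collision constraint, and clear denominators using $|D| + |D^c| = 2^n$. This collapses to a single quadratic in $p_D$,
\begin{equation*}
2^n |D|\, p_D^2 - 2|D|\, p_D + \bigl(1 - 2^{-k}|D^c|\bigr) = 0 .
\end{equation*}
Its discriminant simplifies, via $|D| - 2^n = -|D^c|$, to $4|D|\,|D^c|\,(2^{n-k}-1)$, and the quadratic formula yields $p_D = 2^{-n}\pm\sqrt{|D^c|\,|D|^{-1}(2^{-k-n}-2^{-2n})}$. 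Of the two roots I keep the larger, as dictated by the clause of Lemma \ref{RenyiEntropy_ExtremeDistributions} that $p_D$ is the greatest solution of the system; this selects the $+$ sign and gives exactly (\ref{eq:ColisionEntropy_ExtremeDistributions}).

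The one point that needs care, and which I expect to be the main obstacle, is the regime $|D|\geqslant 2^k$, where Lemma \ref{RenyiEntropy_ExtremeDistributions} reports the maximum as $1$ rather than $p_D|D|$; this is why the corollary quotes the formula for every $D$ with no side condition $|D|\leqslant 2^k$, unlike the Shannon case. I would note that the closed form gives $p_D = 2^{-k}$, hence $p_D|D| = 1$, precisely at $|D| = 2^k$, so the two branches agree continuously. On the interval $|D|\in[2^k,2^n]$ the quantity $p_D|D|$ is a concave function of $|D|$ (a linear term plus the square root of a downward parabola) equal to $1$ at both endpoints, so $p_D|D|\geqslant 1 \geqslant \mathbf{E}D(X)$ throughout. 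Thus for $|D|\geqslant 2^k$ the asserted inequality holds trivially, which is exactly what licenses stating the corollary without the range restriction.
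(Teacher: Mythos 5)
Your proposal is correct and follows exactly the route the paper intends: the paper derives this corollary from Theorem \ref{MetricEntropy_Characterization} together with Lemma \ref{RenyiEntropy_ExtremeDistributions} at $\alpha=2$ and merely says ``after some calculations,'' which your elimination of $q_D$, the resulting quadratic $2^n|D|p_D^2-2|D|p_D+(1-2^{-k}|D^c|)=0$, and the choice of the larger root supply in full. Your additional check that $p_D|D|\geqslant 1$ for $|D|\geqslant 2^k$ (so the inequality is vacuous there and the corollary needs no side condition) is a detail the paper glosses over, handled correctly.
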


\begin{corollary}[Metric Min-Entropy, \cite{Barak2003}]\label{MetricMinEntropy_Characterization}
Let $\mathcal{D}$ be a class of boolean functions closed under complements. Then the following conditions are equivalent
\begin{enumerate}[label = (\roman{*})]
  \item $\mathbf{H}_{\infty}^{\textup{Metric},\mathcal{D},\epsilon}(X)\geqslant k$ 
  \item The inequality $\mathbf{E}D(X) \leqslant 2^{-k} |D| + \epsilon$ holds for every $D\in\mathcal{D}$
\end{enumerate} 
\end{corollary}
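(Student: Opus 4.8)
The plan is to obtain this as the special case of the general characterization Theorem \ref{MetricEntropy_Characterization} corresponding to the superlevel set $\cY_k = \left\{Y \in \{0,1\}^n : \Hmin{Y}\geqslant k \right\}$. First I would observe that, by Definition \ref{def:MinEntropy}, $\cY_k$ is exactly the set of probability vectors satisfying $\PrD{Y}(x)\leqslant 2^{-k}$ for every $x$; this is the intersection of the probability simplex with a box, hence a non-empty (as soon as $k\leqslant n$), compact, convex set, and the class $\cD$ of boolean functions closed under complements is in particular a class of $[0,1]$-valued functions closed under complements. Thus the hypotheses of Theorem \ref{MetricEntropy_Characterization} are met, and $\mathbf{H}_{\infty}^{\textup{Metric},\cD,\epsilon}(X)\geqslant k$ is equivalent to the condition that $\mathbf{E}D(X)\leqslant \max_{Y\in\cY_k}\mathbf{E}D(Y)+\epsilon$ for every $D\in\cD$. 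Everything then reduces to evaluating this maximum for a boolean $D$.

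Next I would solve the maximization, which is a transparent linear program. Writing $S=D^{-1}(\{1\})$ so that $|S|=|D|$, we have $\mathbf{E}D(Y)=\sum_{x\in S}\PrD{Y}(x)$, to be maximized subject to the cap $\PrD{Y}(x)\leqslant 2^{-k}$ and the normalization $\sum_x \PrD{Y}(x)=1$. On the one hand $\mathbf{E}D(Y)\leqslant \sum_{x\in S}2^{-k}=2^{-k}|D|$ and trivially $\mathbf{E}D(Y)\leqslant 1$, which gives the upper bound $\min\left(1,2^{-k}|D|\right)$. On the other hand the bound is attained by the greedy distribution that saturates each point of $S$ at the cap $2^{-k}$ and, if any mass remains (the case $|D|<2^k$), places the leftover $1-2^{-k}|D|$ on points outside $S$; the resulting vector lies in $\cY_k$. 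Hence $\max_{Y\in\cY_k}\mathbf{E}D(Y)=\min\left(1,2^{-k}|D|\right)$.

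Finally I would reconcile this with the form stated in the corollary, where the bound is written simply as $2^{-k}|D|+\epsilon$. When $|D|<2^k$ the two agree, since then $\min\left(1,2^{-k}|D|\right)=2^{-k}|D|$. When $|D|\geqslant 2^k$ we have $2^{-k}|D|\geqslant 1\geqslant \mathbf{E}D(X)$, so the inequality $\mathbf{E}D(X)\leqslant 2^{-k}|D|+\epsilon$ holds automatically, exactly matching the theorem's condition $\mathbf{E}D(X)\leqslant 1+\epsilon$ in this regime. Therefore, quantifying over all $D\in\cD$, condition (ii) of the corollary is equivalent to condition (ii) of Theorem \ref{MetricEntropy_Characterization}, which completes the proof.

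There is essentially no serious obstacle here: the min-entropy case is the degenerate, easiest instance of Lemma \ref{RenyiEntropy_ExtremeDistributions} (formally the limit $\alpha\to\infty$, in which the nonlinear system (\ref{eq:RenyiEntropy_ExtremeDistributions}) collapses to the flat cap $p_D=2^{-k}$), and the only point needing a word of care is the boundary regime $|D|\geqslant 2^k$, where the maximum saturates at $1$ and the stated inequality becomes vacuous.
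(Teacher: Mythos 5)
Your proposal is correct and follows essentially the same route as the paper: instantiate Theorem \ref{MetricEntropy_Characterization} with the superlevel set $\cY_k$ of min-entropy-$k$ distributions and evaluate $\max_{Y\in\cY_k}\mathbf{E}D(Y)=\min\left(1,2^{-k}|D|\right)$ explicitly, which is exactly the degenerate $\alpha\to\infty$ case of Lemma \ref{RenyiEntropy_ExtremeDistributions} that the paper invokes. Your explicit treatment of the saturated regime $|D|\geqslant 2^k$ is a welcome clarification of a point the paper leaves implicit.
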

\noindent Note that from the characterization in Lemma \ref{MetricEntropy_Characterization}, and Lemma \label{RenyiEntropy_ExtremeDistributions} it follows that one need to check only distinguishers of size $\exp(k)$ to prove that the metric entropy is $k$.
\begin{corollary}\label{MinimalComplexityRequired}
Let $X\in\{0,1\}^n$ be a random variable, $\alpha\in [1,\infty]$ and $s = \bigOm{2^{k}k}$. Then $\mathbf{H}_{\alpha}^{\text{Metric},\{0,1\},s} = \mathbf{H}_{\alpha}^{\text{Metric},\{0,1\},\infty}$. 
\end{corollary}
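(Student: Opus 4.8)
The plan is to read the result directly off the characterization in Theorem \ref{MetricEntropy_Characterization} together with the explicit extremal formula of Lemma \ref{RenyiEntropy_ExtremeDistributions}, so that essentially no new computation is needed. First I would record that, for any class $\cD$ of boolean functions closed under complements, the assertion $\mathbf{H}_{\alpha}^{\textup{Metric},\cD,\epsilon}(X)\geqslant k$ is equivalent to the one-sided family of inequalities $\mathbf{E}D(X)\leqslant \max_{Y\in\cY_k}\mathbf{E}D(Y)+\epsilon$ ranging over $D\in\cD$, where $\cY_k$ is the set of distributions of $\alpha$-entropy at least $k$; this is exactly condition (ii) of Theorem \ref{MetricEntropy_Characterization}. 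The two classes in the statement, all boolean functions (``$\{0,1\},\infty$'') and the boolean functions of circuit size $s$ (``$\{0,1\},s$''), are both closed under complements, the latter up to one extra gate for $D^{c}=\mathbf{1}-D$, which is absorbed into the $\bigOm{\cdot}$ constant; hence the characterization applies to both. The inclusion $(\{0,1\},s)\subseteq(\{0,1\},\infty)$ already yields $\mathbf{H}_{\alpha}^{\textup{Metric},\{0,1\},\infty}(X)\leqslant \mathbf{H}_{\alpha}^{\textup{Metric},\{0,1\},s}(X)$ for free, so only the reverse inequality carries content.

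The key observation is that in Lemma \ref{RenyiEntropy_ExtremeDistributions} the right-hand side $\max_{Y\in\cY_k}\mathbf{E}D(Y)$ equals $1$ as soon as $|D|\geqslant 2^{k}$ (and likewise in the Shannon and min-entropy specializations, Corollaries \ref{MetricShannonEntropy_Characterization}--\ref{MetricMinEntropy_Characterization}, so the reduction is uniform over $\alpha\in[1,\infty]$). For such $D$ the inequality $\mathbf{E}D(X)\leqslant 1+\epsilon$ is vacuous, since $\mathbf{E}D(X)\leqslant 1$ for every boolean $D$. Consequently the only constraints that can ever be violated --- equivalently, the only distinguishers able to witness $\mathbf{H}_{\alpha}^{\textup{Metric}}(X)<k$ --- are the \emph{sparse} ones, namely those with $|D|<2^{k}$. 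Thus both entropies are decided by exactly the same subfamily of constraints, indexed by $\{D:\ D\text{ boolean},\ |D|<2^{k}\}$.

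It then remains to verify that every such sparse distinguisher already belongs to the bounded class, i.e. that a boolean function with $|D|<2^{k}$ is computable by a circuit of size $s=\bigOm{2^{k}k}$. I would get this from the disjunctive realization of $D$ as an OR of its $|D|<2^{k}$ minterms, with a Lupanov-style sharing of subcircuits used to tighten the crude bound. Granting the realization, $\{D\in(\{0,1\},s):|D|<2^{k}\}$ coincides with $\{D\text{ boolean}:|D|<2^{k}\}$, so the active constraint families for the bounded and unbounded classes are literally identical, whence the two entropies agree and the corollary follows.

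The step I expect to be the main obstacle is precisely this last, purely circuit-complexity, ingredient. The naive DNF realizes any support-$<2^{k}$ function in size $\bigO{2^{k}n}$, and squeezing the dependence down to the stated $\bigO{2^{k}k}$ (rather than carrying the ambient dimension $n$) is where care is required: a two-point function on $\{0,1\}^{n}$ already forces $\bigOm{n}$ gates, so I would double-check whether the intended budget is really $\bigO{2^{k}k}$ or rather $\bigO{2^{k}\cdot\poly{n}}$. Everything else is an immediate consequence of Theorem \ref{MetricEntropy_Characterization} and Lemma \ref{RenyiEntropy_ExtremeDistributions} and costs no additional work.
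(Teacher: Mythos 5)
Your proposal follows exactly the route the paper intends: the paper gives no written proof of this corollary beyond the remark that it ``follows from'' Theorem \ref{MetricEntropy_Characterization} and Lemma \ref{RenyiEntropy_ExtremeDistributions}, and the intended argument is precisely your reduction to the sparse distinguishers with $|D|<2^{k}$, for which the constraint $\mathbf{E}D(X)\leqslant 1+\epsilon$ is the only one that can be violated. Your reservation about the final circuit-complexity step is well founded and is the one real issue here: a generic boolean function of support size $<2^{k}$ on $\{0,1\}^{n}$ needs circuits of size about $2^{k}(n-k)/\log$ by counting (there are roughly $2^{2^{k}(n-k)}$ such functions but only $2^{\bigO{s\log s}}$ circuits of size $s$), so the stated budget $s=\bigOm{2^{k}k}$ cannot suffice unless $n=\bigO{k^{2}}$, and the bound should carry a factor of $n$ rather than $k$. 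This is a defect of the paper's statement rather than of your argument, which is otherwise complete and identical in substance to the paper's.
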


\paragraph{Relaxed Computational Entropy}\

\begin{lemma}
Let $X \in \{0,1\}^n,Z \in \{0,1\}^m$ be random variables, and let 
\begin{equation}
 \cY = \left\{\PrD{Y,Z'}:\, Y\in\{0,1\}^n,Z'\in\{0,1\}^m, \Hmin{Y|Z'} \geqslant k \right\}
\end{equation}
Then for every boolean function $D$ on $\{0,1\}^{n+m}$ we have
\begin{equation}
 \max\limits_{\PrD{Y,Z'}\in \cY} \mathbf{E}D(Y,Z')=  2^{-k} \min\left( \max\limits_{z} \left|D(\cdot,z)\right|, 1\right)
\end{equation}
\end{lemma}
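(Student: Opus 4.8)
The plan is to read the quantity as the optimum of a linear program over the polytope $\cY$ and to solve it in two stages: first fix the marginal of $Z'$ and optimize the conditional law of $Y$, then optimize over the marginal of $Z'$ itself. The feature that makes the relaxed definition tractable here is that $Z'$ is \emph{not} tied to any prescribed variable, so its marginal $\PrD{Z'}$ is a free optimization variable. Writing $\PrD{Y,Z'}(y,z')=\PrD{Z'}(z')\,\PrD{Y|Z'=z'}(y)$, the objective splits as
\begin{equation*}
\mathbf{E}D(Y,Z')=\sum_{z'}\PrD{Z'}(z')\sum_{y}\PrD{Y|Z'=z'}(y)\,D(y,z'),
\end{equation*}
and, by the definition of worst-case conditional min-entropy, the constraint $\Hmin{Y|Z'}\geqslant k$ is exactly the family of per-column caps $\PrD{Y|Z'=z'}(y)\leqslant 2^{-k}$, imposed for every $y$ and every $z'$ in the support of $Z'$.

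Next I would solve the inner problem for a fixed column $z'$: maximize $\sum_y p(y)D(y,z')$ over probability vectors $p$ on $\{0,1\}^n$ with $p(y)\leqslant 2^{-k}$ for all $y$. Since $D(\cdot,z')$ is $\{0,1\}$-valued, this is a water-filling linear program: load the maximal admissible mass $2^{-k}$ onto each of the $|D(\cdot,z')|$ arguments where $D(\cdot,z')=1$, stopping once the total reaches $1$. This gives the value $\min\!\left(2^{-k}|D(\cdot,z')|,\,1\right)$. The point needing care is \emph{feasibility}: when $|D(\cdot,z')|<2^{k}$ the loaded mass is below $1$, and the leftover $1-2^{-k}|D(\cdot,z')|$ must be distributed over the $2^n-|D(\cdot,z')|$ arguments with $D=0$ without exceeding the cap $2^{-k}$; this is possible precisely because $k\leqslant n$, so the cube can hold total mass $2^{n}\cdot 2^{-k}\geqslant 1$. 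An exchange argument (transferring mass from a $D=0$ coordinate to any unsaturated $D=1$ coordinate never decreases the objective) certifies the greedy allocation is optimal.

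Finally, the inner optimum depends on $z'$ only through $\min(2^{-k}|D(\cdot,z')|,1)$, which is multiplied by $\PrD{Z'}(z')$ and averaged; hence the outer maximum over the free marginal $\PrD{Z'}$ is attained by a point mass on a column $z^{*}$ maximizing it. Because $t\mapsto\min(2^{-k}t,1)$ is nondecreasing, $\max_{z'}\min(2^{-k}|D(\cdot,z')|,1)=\min\!\left(2^{-k}\max_{z}|D(\cdot,z)|,\,1\right)$, which is the asserted maximum (equivalently $2^{-k}\min(\max_{z}|D(\cdot,z)|,\,2^k)$) and specializes, for a single column, to the unconditional min-entropy characterization of Corollary~\ref{MetricMinEntropy_Characterization}. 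I expect the main obstacle to be the feasibility check in the inner program, namely the role of the hypothesis $k\leqslant n$, together with the bookkeeping that lets the relaxed marginal concentrate on one optimal column; after that the outer maximization is immediate by monotonicity.
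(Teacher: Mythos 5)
Your proposal is correct and follows essentially the same route as the paper's own proof: bound the objective column-by-column via the cap $\PrD{Y|Z'=z'}(y)\leqslant 2^{-k}$ implied by $\Hmin{Y|Z'}\geqslant k$, average over $\PrD{Z'}$, and then achieve the bound by putting a point mass of $Z'$ on the column maximizing $|D(\cdot,z)|$ with a conditional distribution that saturates the cap on $\{x:\,D(x,z)=1\}$. You are in fact slightly more careful than the paper on two points it leaves implicit: the feasibility of placing the leftover mass $1-2^{-k}|D(\cdot,z')|$ on the $D=0$ coordinates (which is exactly where $k\leqslant n$ is needed), and the observation that the displayed right-hand side should be read as $\min\left(2^{-k}\max_{z}|D(\cdot,z)|,\,1\right)$ rather than the literal $2^{-k}\min\left(\max_{z}|D(\cdot,z)|,\,1\right)$.
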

\begin{proof}
Observe that for every $\PrD{Y,Z'}\in \cY$ we have
\begin{align}
 \mathbf{E}D(Y,Z') & \leqslant \sum\limits_{x,z} \PrD{X,Z}(x,z) D(x,z) \\
 & \leqslant \sum\limits_{x,z} 2^{-k}\PrD{Z}(z) D(x,z) \\
 & \leqslant 2^{-k}\max\limits_{z}\left| D(\cdot,z) \right|
\end{align}
Let $z'$ be chosen over $z\in\{0,1\}^m$ so that it maximizies $\left| D(\cdot,z) \right|$.
The equality in the estimate above is achieved provided that $\left| D(\cdot,z') \right|\leqslant 2^{k}$,
$Z'$ is a point mass distribution at $z'$ and $Y|Z'=z'$ satisfies $\PrD{Y|Z'=z'}(x) = 2^{-k}$ if $D(x,z)=1$.
For the case $\max\limits_{z}\left| D(\cdot,z) \right| > 2^{k}$, let $Y'|Z'=z'$ be a uniform distribution on arbitrary $2^k$-element subset of $\{x:\, D(x,z) =1 \}$. Then we have $\mathbf{E}D(Y,Z') = 1$.
\end{proof}
\begin{corollary}[Relaxed Metric Entropy] \label{MetricRelaxedEntropy_Characterization}
Let $\mathcal{D}$ be a class of boolean functions closed under complements. Then the following are equivalent
\begin{enumerate}[label = (\roman{*})]
  \item $\mathbf{H}_{\infty}^{\textup{Metric-rlx},\mathcal{D},\epsilon}(X|Z)\geqslant k$ 
  \item The inequality $\mathbf{E}D(X,Z) \leqslant 2^{-k} \max\limits_{z} \left|D(\cdot,z)\right|  + \epsilon$ holds for every $D\in\mathcal{D}$
\end{enumerate}  
\end{corollary}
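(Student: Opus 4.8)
The plan is to obtain this characterization as a direct instance of the general characterization Theorem~\ref{MetricEntropy_Characterization}, feeding it the value of the extremal distinguishing expectation computed in the lemma immediately above. I would work in the ambient space $\Omega=\{0,1\}^{n+m}$, treat the source as the joint distribution $\PrD{X,Z}$, and let $\cY$ be precisely the set appearing in that lemma, namely all joint laws $\PrD{Y,Z'}$ with $\Hmin{Y|Z'}\geqslant k$. With this reading, condition~(i) of Theorem~\ref{MetricEntropy_Characterization} is \emph{verbatim} the statement that $(X,Z)$ is $(\cD,\epsilon)$-indistinguishable from the whole family $\cY$, which is exactly the defining condition of $\mathbf{H}_{\infty}^{\textup{Metric-rlx},\cD,\epsilon}(X|Z)\geqslant k$ in Definition~\ref{HILL_Computational_Relaxed_Entropy_Definition}.

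Before invoking the theorem I would check its standing hypotheses on $\cY$. Non-emptiness (for $k\leqslant n$) and compactness are immediate, since $\cY$ is a closed subset of the probability simplex over $\Omega$. The one point deserving care is convexity: although $\Hmin{Y|Z'}\geqslant k$ is stated through the conditional laws, the inequality $\PrD{Y|Z'=z'}(x)\leqslant 2^{-k}$ is equivalent to $\PrD{Y,Z'}(x,z')\leqslant 2^{-k}\sum_{x'}\PrD{Y,Z'}(x',z')$, which is \emph{linear} in the coordinates of the joint distribution. Thus $\cY$ is cut out of the simplex by finitely many linear inequalities, hence is a polytope, and in particular compact and convex; so Theorem~\ref{MetricEntropy_Characterization} applies (the strict-versus-nonstrict $\epsilon$ in the two formulations being the usual harmless boundary matter).

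The theorem now rephrases the entropy bound as: for every $D\in\cD$, $\mathbf{E}D(X,Z)\leqslant \max_{\PrD{Y,Z'}\in\cY}\mathbf{E}D(Y,Z')+\epsilon$. Substituting the maximum evaluated in the preceding lemma, which equals $\min(2^{-k}\max_{z}|D(\cdot,z)|,\,1)$, yields the condition $\mathbf{E}D(X,Z)\leqslant \min(2^{-k}\max_{z}|D(\cdot,z)|,1)+\epsilon$. Finally I would remove the truncation: because $D$ is boolean we always have $\mathbf{E}D(X,Z)\leqslant 1$, so when $2^{-k}\max_{z}|D(\cdot,z)|>1$ the stated inequality holds trivially, while when $2^{-k}\max_{z}|D(\cdot,z)|\leqslant 1$ the $\min$ is attained by its first argument; in both cases the truncated inequality is equivalent to the clean form $\mathbf{E}D(X,Z)\leqslant 2^{-k}\max_{z}|D(\cdot,z)|+\epsilon$ of condition~(ii), completing the equivalence.

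I expect the genuinely substantive step to be the convexity check of the second paragraph — the observation that conditioning does not destroy convexity once one passes from conditional to joint distributions — since everything downstream is a mechanical substitution. The only remaining care is the routine bookkeeping reconciling the $\min$-truncated maximum with the un-truncated inequality, which uses nothing more than $\mathbf{E}D\leqslant 1$.
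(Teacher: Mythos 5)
Your proposal is correct and is exactly the route the paper intends: the corollary is obtained by instantiating Theorem~\ref{MetricEntropy_Characterization} with $\cY=\left\{\PrD{Y,Z'}:\Hmin{Y|Z'}\geqslant k\right\}$ and substituting the maximum computed in the preceding lemma, with your linearity-of-the-joint-constraints argument supplying the convexity/compactness hypothesis that the paper leaves implicit. The handling of the $\min$-truncation via $\mathbf{E}D\leqslant 1$ is also the intended (and correct) bookkeeping.
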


\subsection{Examples of Applications}

As a first example we give below much simpler proofs of the leakage chain rule for relaxed-type entropy and for the leakage lemma. Interestingly, there is no hope for proving an efficient version (meaning a bound on loss in security parameters) for non-relaxed definition, as shown recently by Krenn et al.\ \cite{Pietrzak2012}.
\begin{theorem}[Leakage Lemmas]
Let $X,Z_1,Z_2$ be correlated random variables taking values in $\{0,1\}^n,\{0,1\}^m_1$ and $\{0,1\}^m_2$ respectively. Then the following estimate, called ``the chain rule", is true \cite{GentryWichs2010, Reyzin2011}
\begin{equation*}
 \mathbf{H}^{\textup{Metric-rlx},\{0,1\},s,2^{m_2}\epsilon}_{\infty}\left(X\left|Z_1,Z_2\right.\right)\geqslant \mathbf{H}^{\textup{Metric-rlx},\{0,1\},s,\epsilon}_{\infty}\left(X\left|Z_1\right.\right)-m_2.
\end{equation*}
Especially, for the uncoditional case $Z_1 = \emptyset$) we obtain the so-called ``leakage lemma" \cite{Dziembowski2008, Reingold2008, Fuller2011}
\begin{equation*}
 \widetilde{\mathbf{H}}^{\textup{Metric},\{0,1\},s,2^{m_2}\epsilon}_{\infty}\left(X\left|Z_2\right.\right)\geqslant \mathbf{H}^{\textup{Metric},\{0,1\},s,\epsilon}_{\infty}\left(X \right)-m_2.
\end{equation*}
\end{theorem}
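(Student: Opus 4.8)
The plan is to translate both inequalities into the distinguisher language provided by the characterization results, and then to push a distinguisher defined on the larger product space down to the smaller one by fixing the new leakage coordinate $Z_2$. Throughout I write $c_{z_2}=\PrD{Z_2}(z_2)$ and, for a boolean $D$, $d_{z_2}=|D(\cdot,z_2)|$; the only probabilistic facts used are the trivial domination $\PrD{X,Z_1,Z_2}(x,z_1,z_2)\leqslant\PrD{X,Z_1}(x,z_1)$ and the identity $\sum_x\PrD{X,Z_2}(x,z_2)=c_{z_2}$.

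First I would prove the chain rule. Set $k'=\mathbf{H}^{\textup{Metric-rlx},\{0,1\},s,\epsilon}_{\infty}(X|Z_1)$; by Corollary \ref{MetricRelaxedEntropy_Characterization} this means every boolean $D$ of size at most $s$ on $\{0,1\}^{n+m_1}$ satisfies $\mathbf{E}D(X,Z_1)\leqslant 2^{-k'}\max_{z_1}|D(\cdot,z_1)|+\epsilon$. Take any boolean $D$ of size at most $s$ on $\{0,1\}^{n+m_1+m_2}$, and for each fixed $z_2$ consider its restriction $D_{z_2}=D(\cdot,\cdot,z_2)$, which again has size at most $s$. Domination gives $\mathbf{E}D(X,Z_1,Z_2)\leqslant\sum_{z_2}\mathbf{E}D_{z_2}(X,Z_1)$, and applying the hypothesis to each of the $2^{m_2}$ restrictions yields $\mathbf{E}D(X,Z_1,Z_2)\leqslant 2^{-k'}\sum_{z_2}\max_{z_1}|D(\cdot,z_1,z_2)|+2^{m_2}\epsilon$. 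Bounding $\sum_{z_2}\max_{z_1}|D(\cdot,z_1,z_2)|\leqslant 2^{m_2}\max_{z_1,z_2}|D(\cdot,z_1,z_2)|$ turns the coefficient into $2^{-(k'-m_2)}$, which read backwards through Corollary \ref{MetricRelaxedEntropy_Characterization} is exactly $\mathbf{H}^{\textup{Metric-rlx},\{0,1\},s,2^{m_2}\epsilon}_{\infty}(X|Z_1,Z_2)\geqslant k'-m_2$.

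The leakage lemma is the case $Z_1=\emptyset$, but its left-hand side is the \emph{average, non-relaxed} entropy $\widetilde{\mathbf{H}}^{\textup{Metric}}_{\infty}$, which can only be smaller than the relaxed entropy, so one cannot merely specialise the chain rule and must argue directly through Theorem \ref{MetricEntropy_Characterization}. Here $\cY$ is the convex compact set of joint laws with the prescribed $Z_2$-marginal and $\widetilde{\mathbf{H}}_{\infty}(Y|Z_2)\geqslant k-m_2$. On the ideal side I would exhibit a feasible $(Y,Z_2)$ placing joint mass $2^{-k}$ on every accepted point $(x,z_2)$ with $D(x,z_2)=1$, truncated per column so the column sum never exceeds $c_{z_2}$; its average maximal probability is at most $\sum_{z_2}c_{z_2}\cdot 2^{-k}/c_{z_2}=2^{-(k-m_2)}$, so it lies in $\cY$, and it attains $\mathbf{E}D(Y,Z_2)=\sum_{z_2}\min(2^{-k}d_{z_2},c_{z_2})$. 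On the real side the unconditional hypothesis $\mathbf{H}^{\textup{Metric},\{0,1\},s,\epsilon}_{\infty}(X)\geqslant k$ gives, via Corollary \ref{MetricMinEntropy_Characterization} applied to the restriction $D_{z_2}$, the per-column estimate $\sum_x\PrD{X,Z_2}(x,z_2)D(x,z_2)\leqslant\min(2^{-k}d_{z_2}+\epsilon,\,c_{z_2})\leqslant\min(2^{-k}d_{z_2},c_{z_2})+\epsilon$, where domination bounds the first argument and $\sum_x\PrD{X,Z_2}(x,z_2)=c_{z_2}$ supplies the cap. Summing over the $2^{m_2}$ columns gives $\mathbf{E}D(X,Z_2)\leqslant\max_{Y\in\cY}\mathbf{E}D(Y,Z_2)+2^{m_2}\epsilon$, which is condition (ii) of Theorem \ref{MetricEntropy_Characterization}, hence $\widetilde{\mathbf{H}}^{\textup{Metric},\{0,1\},s,2^{m_2}\epsilon}_{\infty}(X|Z_2)\geqslant k-m_2$.

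The main obstacle is the per-column truncation in the second part. The naive estimate $\PrD{X,Z_2}(x,z_2)\leqslant\PrD{X}(x)$ summed over all $2^{m_2}$ columns overcounts, since $2^{-k}\sum_{z_2}d_{z_2}$ can exceed $1$ and the bound becomes vacuous; retaining the cap by the marginal $c_{z_2}$ is therefore essential, and it must be matched \emph{exactly} by the same $\min(2^{-k}d_{z_2},c_{z_2})$ value that the feasible ideal distribution achieves. Arranging for these two $\min$-expressions to coincide, rather than merely to be compared, is the only delicate point. Beyond this I would verify that fixing the $z_2$-coordinate never increases circuit size, so the restrictions remain in the size-$s$ class and are closed under complements, and that the factor $2^{m_2}$ enters exactly once, at the summation over columns, and is not reintroduced at the $\max_{z_1}$ step.
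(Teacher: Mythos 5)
Your proof of the chain rule is essentially the paper's own argument: restrict $D$ to each fibre $z_2$, bound the joint expectation by the sum of the $2^{m_2}$ restricted expectations via the domination $\PrD{X,Z_1,Z_2}\leqslant\PrD{X,Z_1}$, apply Corollary \ref{MetricRelaxedEntropy_Characterization} to each restriction, and absorb the factor $2^{m_2}$ into the entropy and the error; the two write-ups differ only in that the paper phrases the domination step through conditional expectations. Where you genuinely depart from the paper is the leakage lemma. The paper simply declares it to be the case $Z_1=\emptyset$ of the same computation and invokes the \emph{relaxed} characterization again, which strictly speaking certifies only $\mathbf{H}^{\textup{Metric-rlx}}_{\infty}(X|Z_2)\geqslant k-m_2$ and leaves the passage to the average-case, non-relaxed quantity $\widetilde{\mathbf{H}}^{\textup{Metric}}_{\infty}$ (whose constraint set fixes the $Z_2$-marginal) implicit. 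You instead argue directly through condition (ii) of Theorem \ref{MetricEntropy_Characterization}: you exhibit an explicit feasible witness in the average-case constraint set attaining $\sum_{z_2}\min\left(2^{-k}d_{z_2},c_{z_2}\right)$, and match it against the per-column bound $\min\left(2^{-k}d_{z_2}+\epsilon,c_{z_2}\right)$ obtained from the unconditional hypothesis plus the marginal cap. This costs you the construction and verification of the witness distribution (including the truncation bookkeeping), but it buys a self-contained proof of the statement as actually written, with the stronger non-relaxed average-case conclusion, rather than one mediated by an unstated comparison between relaxed and average-case entropies. Both routes place the $2^{m_2}$ loss at the same point, the summation over the $z_2$-columns, and both are correct.
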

\begin{proof}
Let $D$ be any boolean distinguisher on $\{0,1\}^{n+m_1+m_2}$. Since for every fixed $z_2$ the function $D\left(\cdot,z_2\right)$ is a distinugisher on $\{0,1\}^{n+m_1}$ we get from Corollary \ref{MetricRelaxedEntropy_Characterization}
\begin{equation}
 \mathbf{E}_{\left(x,z_1\right)\leftarrow \left(X,Z_1\right)} D\left(x,z_1,z_2\right) \leqslant \max\limits_{z_1}\left| D\left(\cdot,z_2\right)\right|\cdot 2^{-k} + \epsilon
\end{equation}
note also that $\mathbf{E}_{\left(x,z_1\right)\leftarrow \left(X,Z_1\right)|Z_2=z_2} D\left(x,z_1,z_2\right) \leqslant \frac{1}{\mathbf{P}_{Z_2}\left(z_2\right)}\cdot\mathbf{E}_{\left(x,z_1\right)\leftarrow \left(X,Z_1\right)}D\left(x,z_1,z_2\right)$ and thus
\begin{align}
 \mathbf{E}_{\left(x,z_1,z_2\right)\leftarrow \left(X,Z_1,Z_2\right)}D(x,z) = &
\mathbf{E}_{z_2\leftarrow Z_2} \mathbf{E}_{\left(x,z_1\right)\leftarrow \left(X,Z_1\right)|Z_2=z_2} D\left(x,z_1,z_2\right) \\ \leqslant & \sum\limits_{z_2}\mathbf{E}_{\left(x,z_1\right)\leftarrow \left(X,Z_1\right)}D\left(x,z_1,z_2\right)  \\
\leqslant & 2^{m_2}\max\limits_{z_2}\mathbf{E}_{\left(x,z_1\right)\leftarrow \left(X,Z_1\right)} D\left(x,z_1,z_2\right) \\
\leqslant & \max\limits_{z_1,z_2}\left| D\left(\cdot,z_1,z_2\right)\right| \cdot 2^{-k+m_2} + 2^{m_2}\epsilon
\end{align}
using Lemma \ref{MetricRelaxedEntropy_Characterization} again finishes the proof.
\end{proof}
\begin{remark}
Note that both results are often formulated using the HILL entropy, with the wekaer, by a factor $\poly{1/\epsilon}$, security parameter $s$. This factor is exactly the cost of the conversion from Metric to HILL entropy (Theorem \ref{MetricHILL_Conversion} can be applied for the relaxed-metric entropy, because for this notion boolean and real valued circuits are equivalent as will see later; the same is true for the conditional metric worst-case entropy which is known to be equivalent to the conditional metric average-case entropy up to loss $\log(1/\epsilon)$ in the entropy ammount). Sometimes loss appears in $\epsilon$ instead in $s$, which can be also thought as an equivalent statement. Thus, our proof really \emph{implies} the original results.
\end{remark}
In the second example, we reprove a result on passing between Renyi Entropy for different orders.
\begin{corollary}
The proof of Lemma \ref{RenyiEntropy_DifferentOrders}.
\end{corollary}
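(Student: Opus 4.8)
The plan is to route the statement entirely through the metric-min-entropy characterization of Corollary \ref{MetricMinEntropy_Characterization}, exploiting the fact that, for unbounded boolean distinguishers, metric min-entropy and smooth min-entropy coincide. Concretely, writing $k=\mathbf{H}_{\alpha}(X)$ and $k'=k-\frac{1}{\alpha-1}\log\frac{1}{\epsilon}$, I would show $\Hsmooth{X}{\epsilon}\geqslant k'$ by verifying the separation-style inequality $\mathbf{E}D(X)\leqslant 2^{-k'}|D|+\epsilon$ for every boolean $D$ on $\{0,1\}^n$, which by Corollary \ref{MetricMinEntropy_Characterization} is exactly $\mathbf{H}^{\textup{Metric},\{0,1\},\epsilon}_{\infty}(X)\geqslant k'$.

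First I would record the (standard, water-filling) equivalence underlying Section \ref{Relationship_to_ConvexAnalysis}: the extremal distinguisher for the set $\cY$ of min-entropy-$k'$ distributions is the indicator $D=\I{\{x:\,\PrD{X}(x)>2^{-k'}\}}$, for which $\mathbf{E}D(X)-2^{-k'}|D|=\sum_{x}\max\left(\PrD{X}(x)-2^{-k'},0\right)$ is precisely the statistical distance from $X$ to the capped distribution of min-entropy $k'$. Hence the condition ``$\mathbf{E}D(X)\leqslant 2^{-k'}|D|+\epsilon$ for all boolean $D$'' is equivalent to $\Hsmooth{X}{\epsilon}\geqslant k'$, reducing the lemma to the displayed inequality.

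To establish that inequality, I would bound $\mathbf{E}D(X)=\sum_{x:\,D(x)=1}\PrD{X}(x)$ by H\"older's inequality with exponents $\alpha$ and $\frac{\alpha}{\alpha-1}$ against the R\'enyi constraint $\sum_{x}\PrD{X}(x)^{\alpha}\leqslant 2^{-(\alpha-1)k}$ from Definition \ref{def:RenyiEntropy}, obtaining $\mathbf{E}D(X)\leqslant 2^{-(\alpha-1)k/\alpha}\,|D|^{(\alpha-1)/\alpha}$; this is exactly the crude estimate on $p_D$ coming from the first line of system \eqref{eq:RenyiEntropy_ExtremeDistributions} in Lemma \ref{RenyiEntropy_ExtremeDistributions} after dropping the nonnegative $q_D$-term, combined with $\mathbf{E}D(X)\leqslant p_D|D|$ (valid since $X\in\cY$ for the R\'enyi level set). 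It then remains to check that $g(t)=2^{-(\alpha-1)k/\alpha}t^{(\alpha-1)/\alpha}-2^{-k'}t\leqslant\epsilon$ for all $t=|D|\geqslant 0$. Since $g$ is concave, I would maximize it by a single derivative computation; with $\beta=\frac{\alpha-1}{\alpha}$ the maximum works out to $\frac{\beta^{\alpha-1}}{\alpha}\,\epsilon$ once $2^{(\alpha-1)k'}=2^{(\alpha-1)k}\epsilon$ is substituted, which is strictly below $\epsilon$ because $\beta<1$ and $\alpha>1$. The large-$|D|$ regime is automatic, since $2^{-k'}|D|\geqslant 2^{k-k'}=\epsilon^{-1/(\alpha-1)}\geqslant 1$ once $|D|\geqslant 2^{k}$.

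I expect the main obstacle to be pinning down the first step cleanly, namely justifying that the metric condition against \emph{all} boolean functions genuinely collapses to smooth min-entropy rather than merely being implied by it; the forward direction is immediate from $\mathbf{E}D(Y)\leqslant 2^{-k'}|D|$, but the converse needs the explicit identification of the maximizing indicator together with the water-filling construction of the nearby high-min-entropy distribution. The remaining optimization is routine calculus, and its only subtlety is bookkeeping the constant $\frac{\beta^{\alpha-1}}{\alpha}$ and confirming it never exceeds $1$ for $\alpha>1$ (it tends to $1$ as $\alpha\to 1^{+}$, matching the stated restriction $\alpha>1$).
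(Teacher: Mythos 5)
Your proposal is correct, and its quantitative heart coincides with the paper's: both start from the bound $\mathbf{E}D(X)\leqslant 2^{-(\alpha-1)k/\alpha}|D|^{(\alpha-1)/\alpha}$ (the paper extracts it from the system \eqref{eq:RenyiEntropy_ExtremeDistributions} of Lemma \ref{RenyiEntropy_ExtremeDistributions} as $p_D|D|<(|D|/2^k)^{(\alpha-1)/\alpha}$; your H\"older derivation is the same estimate) and both then verify $\mathbf{E}D(X)\leqslant 2^{-k'}|D|+\epsilon$ so that Corollary \ref{MetricMinEntropy_Characterization} applies. You diverge in two places. First, you optimize the concave gap $g(t)$ exactly, getting the slightly sharper error $\frac{\beta^{\alpha-1}}{\alpha}\epsilon<\epsilon$, where the paper instead does a two-case threshold argument at $|D|=2^k2^{-C\alpha}$ with $C=\frac{1}{\alpha-1}\log\frac{1}{\epsilon}$; these are interchangeable. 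Second, and more substantively, the paper finishes by chaining through its own machinery --- boolean-to-real-valued equivalence (Theorem \ref{MetricWorstCase_passing_to_realvalued}/\ref{MetricRelaxed_passing_to_realvalued}), the Metric-to-HILL conversion (Theorem \ref{MetricHILL_Conversion}), and the observation that HILL against all $[0,1]$-valued functions is smooth entropy --- whereas you close the loop directly by identifying the extremal boolean distinguisher $\I{\{x:\,\PrD{X}(x)>2^{-k'}\}}$ and the water-filling construction of a nearby min-entropy-$k'$ distribution. Your route is more self-contained and avoids any (even vanishing) loss from the conversion theorem, but it does need the small check you flagged: that capping at $2^{-k'}$ leaves room to redistribute the excess mass, which holds because $k'\leqslant k\leqslant n$. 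The paper's route, by contrast, is deliberately structured to exercise the characterization-plus-conversion toolkit it has just built. Both are valid proofs of Lemma \ref{RenyiEntropy_DifferentOrders}.
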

\begin{proof}
Since $p$ is a solution of the equation $ p^{\alpha}|D| + \left( \frac{1-p|D|}{\left|D^c\right|}\right)^{\alpha} = 2^{-(\alpha-1)k}$ we have $p|D| < \left(\frac{|D|}{2^k}\right)^{\frac{\alpha-1}{\alpha}}$. Suppose that $|D| > 2^{k}\cdot 2^{-C\alpha}$ where the parameter $C$ will be specified later. Then
$\left(\frac{|D|}{2^k}\right)^{\frac{\alpha-1}{\alpha}} < \frac{|D|}{2^{k-C}}$. In turn if the opposite inequality holds then $\left(\frac{|D|}{2^k}\right)^{\frac{\alpha-1}{\alpha}} \leqslant 2^{-C(\alpha-1)}$. In any case, we get
$p|D| \leqslant \frac{|D|}{2^{k'}}+\epsilon'$ where $k'=k-C$ and $\epsilon' = 2^{-C(\alpha-1)}$. Thus the distribution $X$
has a metric entropy at least $k'$ with error $\epsilon'$ against all bollean functions. According to Theorem \ref{MetricRelaxed_passing_to_realvalued} in the next section this is the same entropy as if $[0,1]$ valued functions would be used. From Theorem \ref{MetricHILL_Conversion} giving conversion between Metric and HILL entropy we know that $X$ has HILL Entropy with the same parameters. Finally, HILL Entropy against all $[0,1]$ valued function is clearly the same as Smooth Entropy. Choosing $C = \frac{1}{\alpha-1}\log\frac{1}{\epsilon}$ we recover the estimate on the Smooth R\'{e}nyi Entropy for different orders given in Lemma \ref{RenyiEntropy_DifferentOrders}.
\end{proof}

\section{Metric Entropy Against Deterministic and Randomized Adversary}\label{sec:5}

It is well known that for the HILL-type entropy there is no matter whether we use deterministic or randomized (or real valued) class of distinguishers. The reason is that we can just fix an `optimal' choice of coins for a randomized function distinguishing between two probability distributions. However, this argument fails in the case of a metric-type definition because of a different order of quantifiers in the definition. So the following problem appears:
\begin{quote}
Let $\mathbf{H}^{\textup{Metric}}$ be metric-type computational entropy (for instance, 
based on the Renyi Entropy of fixed order $\alpha$). Suppose that $\mathbf{H}^{\textup{Metric},\mathrm{det} \{0,1\},s,\epsilon}(X) \geqslant k$. Can we obtain a good lower bound on 
$\mathbf{H}^{\textup{Metric},\mathrm{det}, [0,1],s',\epsilon'}(X)$ or on $\mathbf{H}^{\textup{Metric},\mathrm{rand}, \{0,1\},s',\epsilon'}(X)$ ?
\end{quote}

\subsection{Positive answer for min-entropy}

We show that even in the conditional case, for the min-entropy based metric and relaxed-metric entropy, boolean and real-valued distinguishers are equivalent. 
\begin{theorem}\label{MetricWorstCase_passing_to_realvalued}
Let $X\in\{0,1\}^n$ and $Z\in\{0,1\}^n$ be random variables. Then we have $\Hmtr{X|Z}{\textup{det},\{0,1\},s,\epsilon} = \Hmtr{X|Z}{\textup{det},[0,1],s',\epsilon}$ where $s'\approx s$. 
\end{theorem}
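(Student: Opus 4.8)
The plan is to prove the two inequalities separately. One direction is immediate: every boolean function is a $[0,1]$-valued function of the same size, so enlarging the distinguisher class from $\{0,1\}$-valued to $[0,1]$-valued can only make the metric constraint harder to satisfy. Hence $\Hmtr{X|Z}{\det,[0,1],s,\epsilon} \leqslant \Hmtr{X|Z}{\det,\{0,1\},s,\epsilon}$, and it remains to establish the reverse inequality with a negligible loss in circuit size.

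For the substantial direction I would invoke the characterization of metric entropy from Theorem \ref{MetricEntropy_Characterization}, specialized to the conditional worst-case min-entropy. Taking $\cY$ to be the (compact, convex) set of all distributions $\PrD{Y,Z}$ whose $Z$-marginal equals that of $(X,Z)$ and satisfying $\Hmin{Y|Z}\geqslant k$, the statement $\Hmtr{X|Z}{\det,\mathcal{D},\epsilon}\geqslant k$ is equivalent to $\mathbf{E}D(X,Z)\leqslant \max_{\PrD{Y,Z}\in\cY}\mathbf{E}D(Y,Z)+\epsilon$ for every $D\in\mathcal{D}$. Since for worst-case conditioning the conditional distributions $Y|Z=z$ may be chosen independently across $z$, the maximum decomposes as $\max_{\PrD{Y,Z}\in\cY}\mathbf{E}D(Y,Z)=\sum_z \PrD{Z}(z)\,M_z(D)$, where $M_z(D)$ equals $2^{-k}$ times the sum of the $2^k$ largest values of $D(\cdot,z)$. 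This mirrors Corollary \ref{MetricMinEntropy_Characterization}, where for boolean $D$ the quantity $M_z(D)$ collapses to $2^{-k}\min(|D(\cdot,z)|,2^k)$.

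The heart of the argument is a layer-cake (threshold) identity. For a $[0,1]$-valued $D$ write $D_t(x,z)=\I{D(x,z)\geqslant t}$, so that $D(x,z)=\int_0^1 D_t(x,z)\,dt$ pointwise and therefore $\mathbf{E}D(X,Z)=\int_0^1\mathbf{E}D_t(X,Z)\,dt$. Sorting the values of $D(\cdot,z)$ in decreasing order, one checks that the sum of the top $2^k$ values equals $\int_0^1 \min(|D_t(\cdot,z)|,2^k)\,dt$, which yields the exact relation $M_z(D)=\int_0^1 M_z(D_t)\,dt$ and hence $\max_{\PrD{Y,Z}\in\cY}\mathbf{E}D(Y,Z)=\int_0^1\max_{\PrD{Y,Z}\in\cY}\mathbf{E}D_t(Y,Z)\,dt$. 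Assuming now $\Hmtr{X|Z}{\det,\{0,1\},s,\epsilon}\geqslant k$, each threshold $D_t$ is boolean, so applying the boolean characterization to every $D_t$ and integrating gives $\mathbf{E}D(X,Z)=\int_0^1\mathbf{E}D_t(X,Z)\,dt\leqslant \int_0^1(\max_{\PrD{Y,Z}\in\cY}\mathbf{E}D_t(Y,Z)+\epsilon)\,dt=\max_{\PrD{Y,Z}\in\cY}\mathbf{E}D(Y,Z)+\epsilon$; that is, the real-valued metric entropy is also at least $k$ with the \emph{same} error $\epsilon$.

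Finally comes the size accounting. The integral over $t$ only involves the finitely many distinct values attained by $D$, so it is in truth a finite convex combination of boolean distinguishers $D_t$, each obtained from the circuit for $D$ by comparing its output against the constant $t$ at a cost of $\bigO{1}$ additional gates; thus every $D_t$ has size roughly $s'$, and it suffices to take $s'\approx s$. Combining this with the trivial inclusion from the first paragraph gives the claimed equality. I expect this last step — verifying in the chosen circuit model that thresholding a $[0,1]$-valued circuit is cheap, and that the continuum of thresholds genuinely reduces to finitely many boolean circuits each admissible under the boolean hypothesis — to be the main (albeit routine) obstacle, since the rest of the argument consists of exact identities rather than approximations.
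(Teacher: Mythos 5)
Your proof is correct, but it takes a genuinely different route from the paper's. The paper argues by contraposition through convex optimization: given a $[0,1]$-valued $D$ that $\epsilon$-separates $\PrD{X,Z}$ from $\cY$, it fixes a flat maximizer $\PrD{Y_0,Z}$ of $\mathbf{E}D(\cdot)$ over $\cY$, describes the normal cone of $\cY$ at that point via Lagrange multipliers and complementary slackness, observes that every threshold $\I{\{D>t\}}$ lies in that same normal cone (hence also attains its maximum over $\cY$ at $\PrD{Y_0,Z}$), and then uses $\mathbf{E}D(X,Z)-\mathbf{E}D(Y_0,Z)=\int_0^1\left(\Pr{}{D(X,Z)>t}-\Pr{}{D(Y_0,Z)>t}\right)dt$ to extract one good threshold. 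You instead bypass the normal-cone machinery entirely by writing down the closed form of $\max_{\PrD{Y,Z}\in\cY}\mathbf{E}D(Y,Z)$ (per $z$-slice, $2^{-k}$ times the sum of the top $2^k$ values of $D(\cdot,z)$) and proving the exact layer-cake identity that this maximum, like $\mathbf{E}D(X,Z)$ itself, is the integral over $t$ of the corresponding quantities for the boolean thresholds $D_t$; the real-valued constraint is then literally an average of boolean constraints, each with the same $\epsilon$. Your argument is more elementary and self-contained, and it makes the "no loss in $\epsilon$" conclusion transparent; what it gives up is generality — it leans on the explicit top-$2^k$ formula, which is special to min-entropy, whereas the paper's normal-cone formulation is the template reused for the relaxed variant and is what motivates the geometric explanation (polyhedral feasible set with $0$--$1$ normal directions) of why the theorem fails for other R\'{e}nyi orders. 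Two small points to tidy: the characterization of Theorem \ref{MetricEntropy_Characterization} assumes $\cD$ closed under complements, so you should note that complementing a circuit costs $\bigO{1}$ gates and is absorbed into $s'\approx s$; and your decomposition $\max_{\PrD{Y,Z}\in\cY}\mathbf{E}D(Y,Z)=\sum_z\PrD{Z}(z)M_z(D)$ should explicitly record that $\cY$ fixes the $Z$-marginal to $\PrD{Z}$, which is indeed how the paper's optimization problem constrains $\cY$.
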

\begin{theorem}\label{MetricRelaxed_passing_to_realvalued}
Let $X\in\{0,1\}^n$ and $Z\in\{0,1\}^n$ be random variables. Then we have $\mathbf{H}_{\infty}^{\textup{Metric-rlx}, \det[0,1], s,\epsilon}(X|Z)= \mathbf{H}_{\infty}^{\textup{Metric-rlx}, \det[0,1], s',\epsilon}(X|Z)$ where $s'\approx s$. 
\end{theorem}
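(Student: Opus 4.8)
The plan is to reduce the claimed equality to the support-function characterisation of relaxed metric entropy and then to compare the boolean and the real-valued distinguishing power over the \emph{same} convex set; I read the statement, as its placement in this subsection dictates, as the equality of relaxed metric min-entropy measured against deterministic boolean and against deterministic $[0,1]$-valued distinguishers. By Theorem~\ref{MetricEntropy_Characterization} applied to $\cY=\{\PrD{Y,Z'}:\Hmin{Y|Z'}\geqslant k\}$ — which is exactly Corollary~\ref{MetricRelaxedEntropy_Characterization} when $\cD$ is boolean, and whose maximisation repeats verbatim for real-valued $D$ — having relaxed metric entropy at least $k$ against a class $\cD$ is equivalent to
\[
 \sup_{D\in\cD}\big(\mathbf{E}D(X,Z)-h(D)\big)\leqslant\epsilon,\qquad h(D)=\max_{\PrD{Y,Z'}\in\cY}\mathbf{E}D(Y,Z').
\]
Hence the two entropies coincide precisely when the supremum of $g(D)=\mathbf{E}D(X,Z)-h(D)$ over boolean $D$ equals the supremum over all $[0,1]$-valued $D$; the bookkeeping $s'\approx s$ only records that thresholding a $[0,1]$-valued circuit, or simulating it by a randomised boolean one, costs a constant factor.

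One inclusion is immediate: every boolean circuit is a particular $[0,1]$-valued circuit, so enlarging the distinguisher class can only lower the entropy, and the real-valued entropy is at most the boolean one. For the reverse inclusion I would imitate the proof of the worst-case statement (Theorem~\ref{MetricWorstCase_passing_to_realvalued}) by a layer-cake decomposition. Writing a $[0,1]$-valued $D$ as $D=\int_0^1 D_t\,dt$ with the boolean thresholds $D_t=\I{D>t}$, linearity gives $\mathbf{E}D(X,Z)=\int_0^1\mathbf{E}D_t(X,Z)\,dt$. Feeding each $D_t$ into the boolean characterisation yields $\mathbf{E}D_t(X,Z)\leqslant h(D_t)+\epsilon$, and integrating over $t$ gives $\mathbf{E}D(X,Z)\leqslant\int_0^1 h(D_t)\,dt+\epsilon$.

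The whole argument then rests on the min-entropy identity $\int_0^1 h(D_t)\,dt=h(D)$. Here one checks that $h(D)=2^{-k}\max_{z}\,\sigma_k\!\big(D(\cdot,z)\big)$, where $\sigma_k(v)$ denotes the sum of the largest $2^k$ coordinates of $v$, and that, \emph{column by column}, $\int_0^1\sigma_k\!\big(D_t(\cdot,z)\big)\,dt=\sigma_k\!\big(D(\cdot,z)\big)$, because $\sigma_k$ is piecewise linear with coordinate slopes in $\{0,1\}$ so that the threshold integral reconstructs the sum of the top values. If $h$ commuted with the layer-cake integral this would close the equivalence at once.

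The hard part — and the step where min-entropy genuinely differs from every other R\'enyi order — is exactly this commutation. For the unconditional and for the worst-case conditional entropy, $h$ is a \emph{sum} (an expectation over the fixed marginal of $Z$) of the column functionals $\sigma_k(D(\cdot,z))$, and a sum passes through the integral, so the proof of Theorem~\ref{MetricWorstCase_passing_to_realvalued} transfers directly. In the \emph{relaxed} case $Z'$ is free, the optimal witness concentrates on the single best column, and $h$ becomes a \emph{maximum} over $z$; since $\int_0^1\max_z(\cdots)\,dt\geqslant\max_z\int_0^1(\cdots)\,dt$ with the inequality typically strict, the identity breaks and the plain layer-cake only delivers the too-weak bound $\int_0^1 h(D_t)\,dt$. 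To finish one must instead show directly that the concave, piecewise-linear map $g$ attains its maximum over the cube $[0,1]^{N}$ (with $N=2^{n+m}$) at a boolean point, rounding fractional coordinates toward $\{0,1\}$ while tracking which column remains active in the outer maximum. I expect this column-tracking rounding to be the real obstacle: it is precisely where the extreme-point structure of the min-entropy ball must be invoked, and it is the analogue of the step that, for the strictly convex characterisations of collision and Shannon entropy, provably fails and produces the separations studied in Section~\ref{sec:5}.
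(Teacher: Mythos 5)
Your reduction to the support-function characterisation, the easy inclusion, and the per-column layer-cake identity $\int_0^1\sigma_k\left(D_t(\cdot,z)\right)dt=\sigma_k\left(D(\cdot,z)\right)$ are all correct, and your diagnosis of where the argument stalls is exactly right: for the relaxed set $h(D)=2^{-k}\max_z\sigma_k\left(D(\cdot,z)\right)$ is a \emph{maximum} rather than an expectation over columns, so $\int_0^1 h\left(D_t\right)dt\geqslant h(D)$ points the wrong way and thresholding only yields a useless bound. But your proposal then stops: the ``column-tracking rounding'' you defer is not a technicality, it is the whole theorem, so as written this is not a proof. For comparison, the paper's own proof is a one-line reduction to Theorem \ref{MetricWorstCase_passing_to_realvalued} (``remove the constraint $\sum_x p(x,z)=\PrD{Z}(z)$''), and it silently fails at precisely the point you flagged: in the worst-case argument the normal cone at the maximizer is a \emph{within-column} monotonicity condition, which every threshold $\I{D>t}$ preserves, whereas in the relaxed case the maximizer concentrates on a single best column, thresholding can change which column wins, and the thresholds leave the normal cone.

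In fact the obstacle you identified is fatal: the intended statement (boolean versus $[0,1]$-valued; the displayed equation has a typo with $\det[0,1]$ on both sides) is false. Take $m=1$, $k=1$, $2^n\geqslant 4$, and let $\PrD{X,Z}$ put mass $0.4$ on $\left(x_1,z_1\right)$ and mass $0.2$ on each of $\left(x_1,z_2\right),\left(x_2,z_2\right),\left(x_3,z_2\right)$. For the $[0,1]$-valued $D^{*}$ equal to $1$ at $\left(x_1,z_1\right)$, to $\nicefrac{1}{2}$ at the three points of column $z_2$, and to $0$ elsewhere, $\mathbf{E}D^{*}(X,Z)=0.7$ while $\max_{\PrD{Y,Z'}\in\cY}\mathbf{E}D^{*}(Y,Z')=\frac{1}{2}\max\left(1,\tfrac{1}{2}+\tfrac{1}{2}\right)=0.5$, an advantage of $0.2$. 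For boolean $D$ one may assume $D\subseteq\supp{\PrD{X,Z}}$ (adding zero-mass points only raises the maximum over $\cY$), and checking the finitely many cases gives advantage at most $0.1$, attained by $D=\I{\left\{\left(x_1,z_1\right),\left(x_1,z_2\right)\right\}}$. Hence with $\epsilon=0.15$ the relaxed metric min-entropy of $X|Z$ is at least $1$ against all deterministic boolean functions but strictly below $1$ against $[0,1]$-valued ones, with no role played by circuit size. The same $\PrD{X,Z}$ does not contradict Theorem \ref{MetricWorstCase_passing_to_realvalued}: there the support function is a $\PrD{Z}$-weighted sum over columns and the boolean $D=\I{\supp{\PrD{X,Z}}}$ recovers advantage $0.2$. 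So your instinct was sound, and the correct conclusion is that the boolean/real-valued equivalence holds for the worst-case conditional metric min-entropy but genuinely fails for the relaxed variant.
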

The idea of the proof is to rephrase the problem as a task of separating convex sets, as discussed in Section \ref{Relationship_to_ConvexAnalysis}. The standard proof by reduction requires to construct a boolean distinguisher from the given possibly real-valued one. In terms of convex analysis, it becomes a task of finding an appropriate (satysfying some restrictions) separating hyperlane. Technically this is done by calculating \emph{Lagrange Multipliers}.
The details are given in the Appendix, Theorem \ref{app:MetricWorstCase_passing_to_realvalued} and\ref{app:MetricRelaxed_passing_to_realvalued}. Passing further to randomized circuits can be realized (with a loss) using Theorem \ref{MetricHILL_Conversion}.

\subsection{Negative results for Renyi Entropy of order $\alpha < \infty$}

Having shown some positive results for the min-entropy based metric entropy, we will show a surprising property: for any other R\'{e}nyi Entropy, there exists a random variable such that its entropy against deterministic boolean circuits is \emph{strictly smaller} that the entropy against real-valued circuits (and therefore also randomized circuits). Before we show the actual proof, let us give some geometric intuitions why is the min-entropy so special. The reason is, that the set of all distributions having min-entropy at least $k$, after encoding probabilities as vectors, is given by linear inequalities of the form $0 \leqslant p_i \leqslant 2^{-k}$ and $\sum\limits_{i} p_i = 1$. Since that all inequality constraints form a hypercube whose faces are given by $0-1$ vectors, they ''match" perfectly to the boolean distinguishers very well. Compare this to the colision entropy, where the entropy (collision) constraint is $\sum\limits_{i} p_i^2 \leqslant 2^{-k}$ and the corresponding shape is clearly an ellipsoid.

From the characterization given in Lemma \ref{RenyiEntropy_ExtremeDistributions} we inmediatelly obtain
\begin{proposition}\label{MetricEntropy_BooleanFooled}
Let $\mathcal{Y}_k$ be the set of all distributions over $\{0,1\}^n$ with the $\alpha$-Entropy at least $k$.
Then the set of distributions $X$ over $\{0,1\}^n$ which are $\epsilon$-nonindistinugishable from $\mathcal{Y}_k$  by boolean functions (i.e. $\mathbf{H}^{\textup{Metric},\mathrm{det}\{0,1\},\epsilon}(X) \geqslant k$) 
is described by the following system of inequalities
\begin{equation}
 \PrD{X}:\, \langle D, \PrD{Y} \rangle \leqslant p_D |D| + \epsilon \quad D \in \cD,\  \textup{} p_D \textup{ satisfies } (\ref{eq:RenyiEntropy_ExtremeDistributions})
\end{equation}
\end{proposition}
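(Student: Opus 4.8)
The statement is an immediate corollary of the two results it cites, so the plan is simply to chain Theorem \ref{MetricEntropy_Characterization} with Lemma \ref{RenyiEntropy_ExtremeDistributions} and then prune the redundant constraints. The substantive work has already been done in those two results; the proposition only records their combination in geometric form.

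First I would set $\cY = \mathcal{Y}_k$ and take $\cD$ to be the class of all boolean functions on $\{0,1\}^n$, and check that the hypotheses of Theorem \ref{MetricEntropy_Characterization} are met. Closure under complements is immediate, since $\mathbf{1}-D$ is boolean whenever $D$ is. For compact convexity, I would note that for $\alpha > 1$ the requirement $\mathbf{H}_\alpha(Y) \geqslant k$ is equivalent to $\sum_x \PrD{Y}(x)^\alpha \leqslant 2^{-(\alpha-1)k}$; since $t \mapsto t^\alpha$ is convex, the left-hand side is a convex function of the probability vector, so $\mathcal{Y}_k$ is the intersection of the simplex with a sublevel set of a convex function, hence closed, bounded and convex, and it is non-empty because the uniform distribution lies in it whenever $k \leqslant n$. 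Thus Theorem \ref{MetricEntropy_Characterization} applies and $\mathbf{H}^{\textup{Metric},\mathrm{det}\{0,1\},\epsilon}(X)\geqslant k$ is equivalent to
\[
 \langle D, \PrD{X} \rangle = \mathbf{E}D(X) \leqslant \max_{Y \in \mathcal{Y}_k}\mathbf{E}D(Y) + \epsilon \qquad \text{for every boolean } D .
\]

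Next I would substitute the explicit maximum computed in Lemma \ref{RenyiEntropy_ExtremeDistributions}, which splits into two cases. For $|D| < 2^k$ the maximum equals $p_D\,|D|$, where $p_D$ is the greatest solution of the system (\ref{eq:RenyiEntropy_ExtremeDistributions}); substituting yields precisely the advertised inequality $\langle D, \PrD{X}\rangle \leqslant p_D|D| + \epsilon$. For $|D| \geqslant 2^k$ the maximum equals $1$, so the corresponding constraint reads $\mathbf{E}D(X) \leqslant 1 + \epsilon$, which is vacuously true since $\mathbf{E}D(X) \leqslant 1$; these constraints may therefore be dropped. Collecting the surviving constraints gives exactly the system in the statement (where $\PrD{X}$ plays the role written there as $\PrD{Y}$).

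There is no genuine obstacle here. The only points requiring attention are the compactness/convexity check that licenses the characterization theorem, and the observation that the $|D| \geqslant 2^k$ branch of Lemma \ref{RenyiEntropy_ExtremeDistributions} contributes no effective constraint, so that the geometric description reduces to the half-spaces indexed by distinguishers with $|D| < 2^k$, matching the separating-hyperplane viewpoint of Section \ref{Relationship_to_ConvexAnalysis}.
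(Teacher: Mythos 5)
Your proposal is correct and matches the paper's intent exactly: the paper offers no separate proof, stating only that the proposition follows ``immediately'' from Lemma \ref{RenyiEntropy_ExtremeDistributions} combined with the general characterization of Theorem \ref{MetricEntropy_Characterization}, which is precisely the chaining you carry out. Your added checks (closure under complements, compact convexity of $\mathcal{Y}_k$, and the vacuousness of the $|D|\geqslant 2^k$ constraints) are the right details to make the ``immediate'' step explicit.
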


\begin{corollary}
For $1\leqslant \alpha < \infty$ there exist a random variable $X\in\{0,1\}^n$ and $\epsilon_n > 0$ such that 
\begin{equation*}
 \mathbf{H}_{\alpha}^{\textup{Metric},\mathrm{det} \{0,1\},0}(X) > \mathbf{H}_{\alpha}^{\textup{Metric},\mathrm{rand},\{0,1\},\epsilon_n}(X) 
\end{equation*}
\end{corollary}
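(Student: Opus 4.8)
The plan is to translate the statement into the geometric language of Section~\ref{Relationship_to_ConvexAnalysis} and to exploit the single structural feature separating $\alpha<\infty$ from $\alpha=\infty$: for $\alpha<\infty$ the feasible set $\cY_k=\{Y:\mathbf{H}_{\alpha}(Y)\geqslant k\}$ has a curved boundary, whereas the constraints coming from \emph{boolean} distinguishers only carve out a polytope around it. Fixing a value $k$ in the interior of the admissible range (say $1<k<n-1$, so that by Lemma~\ref{RenyiEntropy_ExtremeDistributions} the extremal value $p_D$ is well defined and the boundary of $\cY_k$ meets the interior of the simplex), I would introduce the \emph{boolean relaxation}
\[
 P_k \;=\; \bigl\{\PrD{X}\in\cP :\ \langle D,\PrD{X}\rangle \leqslant p_D\,|D|\ \text{ for every boolean }D\bigr\},
\]
where $p_D$ solves \eqref{eq:RenyiEntropy_ExtremeDistributions}. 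By Proposition~\ref{MetricEntropy_BooleanFooled}, $P_k$ is exactly the set of $X$ with $\mathbf{H}_{\alpha}^{\textup{Metric},\mathrm{det}\{0,1\},0}(X)\geqslant k$, and since $\max_{Y\in\cY_k}\langle D,Y\rangle=p_D|D|$ for boolean $D$ we have $\cY_k\subseteq P_k$.

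The key claim is that this inclusion is \emph{strict}. Indeed $P_k$ is the intersection of the simplex with finitely many half--spaces (one per boolean $D$ with $|D|<2^k$; the remaining constraints are vacuous), hence a polytope, whereas $\cY_k=\{X\in\cP:\ \sum_x \PrD{X}(x)^{\alpha}\leqslant 2^{-(\alpha-1)k}\}$ (respectively the Shannon sublevel set for $\alpha=1$) has a strictly convex boundary portion inside the simplex and is therefore not polytopal. Were $\cY_k=P_k$, the body $\cY_k$ would be cut out by finitely many supporting hyperplanes and thus be a polytope, a contradiction; so I may fix a distribution $X\in P_k\setminus\cY_k$.

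With such an $X$ the two estimates are immediate. On one side $X\in P_k$ gives $\mathbf{H}_{\alpha}^{\textup{Metric},\mathrm{det}\{0,1\},0}(X)\geqslant k$. On the other side $X\notin\cY_k$ and $\cY_k$ is compact convex, so the separation principle underlying Theorem~\ref{MetricEntropy_Characterization} yields a $[0,1]$--valued function $D^{*}$ and a margin $\gamma>0$ with $\langle D^{*},X\rangle > \max_{Y\in\cY_k}\langle D^{*},Y\rangle+\gamma$; hence condition~(ii) of Theorem~\ref{MetricEntropy_Characterization} fails for every $\epsilon<\gamma$, giving $\mathbf{H}_{\alpha}^{\textup{Metric},[0,1],\epsilon}(X)<k$. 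Choosing $\epsilon_n\in(0,\gamma)$ and replacing $D^{*}$ by a dyadic--valued $\tilde D$ with $\|D^{*}-\tilde D\|_{\infty}$ small enough to keep the margin above $\epsilon_n$ (this $\tilde D$ is realized \emph{exactly} by an unbounded randomized boolean circuit outputting $1$ with probability $\tilde D(x)$) yields $\mathbf{H}_{\alpha}^{\textup{Metric},\mathrm{rand}\{0,1\},\epsilon_n}(X)<k\leqslant \mathbf{H}_{\alpha}^{\textup{Metric},\mathrm{det}\{0,1\},0}(X)$, the desired strict inequality.

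The main obstacle is the strictness $\cY_k\subsetneq P_k$ together with a usable lower bound on $\gamma$. The non-polytopality argument above settles existence cleanly, but to make it explicit and to read off $\epsilon_n$ I would instead build $X$ from sorted probabilities: writing the decreasing rearrangement $q_1\geqslant\cdots\geqslant q_{2^n}$, the boolean constraints are exactly the partial-sum inequalities $\sum_{i\leqslant d}q_i\leqslant p_d\,d$ for all $d$, while membership in $\cY_k$ is the \emph{single} inequality $\sum_i q_i^{\alpha}\leqslant 2^{-(\alpha-1)k}$. I would exhibit a ``staircase'' profile saturating two boolean constraints $d_1<d_2$ at once, front-loading enough mass to push $\sum_i q_i^{\alpha}$ strictly past the threshold while keeping every partial sum feasible, which certifies $X\in P_k\setminus\cY_k$ with an explicit $\gamma$ (and hence an $\epsilon_n$ that can be taken polynomially small in $n$). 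Verifying that such a profile exists for each $\alpha<\infty$ — equivalently, that the partial-sum constraints do not force the $\alpha$-norm constraint, which is exactly what fails to happen only at $\alpha=\infty$, where the single constraint $d=1$ already imposes $q_1\leqslant 2^{-k}$ — is the one place where finiteness of $\alpha$ is genuinely used.
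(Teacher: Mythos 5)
Your proposal is correct and follows essentially the same route as the paper's own proof: both argue that the set cut out by the boolean constraints is a polytope while $\cY_k$ has a curved boundary for $\alpha<\infty$, pick $X$ in the difference, separate it from the compact convex $\cY_k$ by a real-valued function with positive margin, and simulate that function by an unbounded randomized boolean circuit up to a small absolute error. Your closing remarks about an explicit staircase witness go beyond what the paper does here (and beyond what the existential statement requires), but the core argument coincides.
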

\begin{proof}
Note that since the class $\mathrm{det}\{0,1\}$ of all boolean functions on $\{0,1\}^n$ is finite
and since for every $D$ there is only finitely many solutions $p_D$, the set of solutions of (\ref{eq:RenyiEntropy_ExtremeDistributions}) is a convex polyhedron on 
the simplex of all probability measures over $\{0,1\}^n$ thought as a subset of the space $\mathbb{R}^{2^n}$.
On the other hand, the set $\cY$ of all distributions $Y$ with entropy $\mathbf{H}_{\alpha}(Y)$ at least $k$, being its subset, cannot be a polyhedron as it is defined by the smooth function
($v\rightarrow \sum\limits_{i=1}^{2^n} v_i^{\alpha}$ if $1 < \alpha < \infty$ and $v\rightarrow \sum\limits_{i=1}^{2^n} v_i\log v_i$ for $\alpha=1$). Therefore it must be strictly smaller. Thus there is a distribution $X\not\in\cY$ such that 
$\mathbf{H}_{\alpha}^{\textup{Metric},\mathrm{det} \{0,1\},0}(X) \geqslant k$ for some small number $\epsilon_n$. 
Since $\PrD{X}\not\in \cY$, it can be strictly separated from $\cY$, i.e. for some $[0,1]$-valued function $D$
we have $\mathbf{E}D(X) - \mathbf{E}D(Y) \geqslant \epsilon_n$ for all $Y\in \cY$. But this function can be simulated
by a randomized boolean circuits with arbitrary small absolute error (coming from a finite precision of the computation),
let us say with error at most $\frac{1}{2}\epsilon_n$. It remains to observe, that according to the definition it means
$\mathbf{H}_{\alpha}^{\textup{Metric},\mathrm{rand},\{0,1\},\epsilon_n/2}(X) < k$.
\end{proof}
\noindent This result does not show how large the gap for the metric entropy, being seen by a deterministic or randomized adversary, can be. It is not even clear if there is a difference between a \emph{deterministic unbounded} circuits and \emph{efficient but randomized} ones. We provide concrete separation results for the two cases: the Shannon and the collision entropy.

\paragraph{Colision Entropy - a gap between all deterministic circuits and efficient randomized ones}

\begin{theorem}
For every $k\leqslant n-2$, there exists a random variable $X\in\{0,1\}^n$ such that 
$\mathbf{H}_{2}^{\textup{Metric},\mathrm{det},\{0,1\},0}(X) \geqslant k$ but $\mathbf{H}_{2}(X) \leqslant k - \bigOm{ \log k }$. Moreover, we have $\mathbf{H}_{2}^{\textup{Metric}, \mathrm{rand}\{0,1\},n+\poly{k},\bigTheta{2^{-k}}}(X) \leqslant k - \bigOm{\log k}$.
\end{theorem}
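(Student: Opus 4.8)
The plan is to read off from Corollary~\ref{MetricColisionEntropy_Characterization} exactly which distributions have metric collision entropy at least $k$ against unbounded deterministic boolean circuits with $\epsilon=0$, and then to place a distribution on the very boundary of that set. Writing $d=|D|$ and noting that the bound $p_D|D|=d2^{-n}+\sqrt{d(2^n-d)\left(2^{-k-n}-2^{-2n}\right)}=:F(d)$ depends only on $d$, the largest admissible value of $\mathbf{E}D(X)$ for $|D|=d$ is attained by the indicator of the $d$ heaviest atoms of $X$. Hence $\mathbf{H}_2^{\textup{Metric},\det\{0,1\},0}(X)\geqslant k$ is equivalent to the single family of majorization inequalities $\sum_{i\leqslant d}p^{\downarrow}_i\leqslant F(d)$ for all $d$, where $p^{\downarrow}_1\geqslant p^{\downarrow}_2\geqslant\cdots$ is the decreasing rearrangement of $\PrD{X}$.

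I would then take $X$ to \emph{saturate} all of these constraints: place $p^{\downarrow}_i=F(i)-F(i-1)$ on $2^k$ atoms. Because $F$ is concave and increasing on $[0,2^k]$ (here the hypothesis $k\leqslant n-2$ keeps the term $d2^{-n}$ and the factor $2^n-d$ under control), the $p^{\downarrow}_i$ are nonnegative and nonincreasing, and the identity $F(2^k)=1$ makes them sum to one; so $X$ is a bona fide distribution that meets the characterization with equality, which is assertion~(i). For the collision entropy, $F'(d)\approx\frac{1}{2} 2^{-k/2}d^{-1/2}$ over the whole range, so $p^{\downarrow}_i=\bigTheta{2^{-k/2}i^{-1/2}}$ and
\begin{equation*}
 \|\PrD{X}\|_2^2 = \sum_{i=1}^{2^k}\left(p^{\downarrow}_i\right)^2 = \bigTheta{2^{-k}\sum_{i=1}^{2^k}1/i} = \bigTheta{k2^{-k}}.
\end{equation*}
The harmonic sum is the whole story: it yields $\mathbf{H}_2(X)=-\log\|\PrD{X}\|_2^2=k-\log k-\bigO{1}$, i.e.\ assertion~(ii). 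Conceptually, the boolean tests only probe the $\ell_1$-type partial sums, which the construction holds exactly at their ceiling, whereas the collision entropy is the $\ell_2$ mass that the same profile inflates by a factor $\bigTheta{k}$.

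Assertion~(iii) exploits this $\ell_1/\ell_2$ mismatch directly. Since $X\notin\cY_k$ it can be separated by a real-valued functional, and the natural one is the density test $D^{*}(x)=\PrD{X}(x)/p^{\downarrow}_1\in[0,1]$. A randomized boolean circuit realizes $D^{*}$ by computing $\PrD{X}(x)$ and flipping a biased coin; if $X$ is taken to be the dyadic staircase approximating the saturating profile (constant on blocks of size $2^j$), then $\PrD{X}(x)$ depends only on the position of the top set bit of $x$ and takes $\bigO{k}$ distinct values, so $D^{*}$ has size $n+\poly{k}$. Now $\mathbf{E}D^{*}(X)=\|\PrD{X}\|_2^2/p^{\downarrow}_1=\bigTheta{k2^{-k/2}}$, while Cauchy--Schwarz bounds $\max_{Y\in\cY_{k'}}\mathbf{E}D^{*}(Y)\leqslant\|\PrD{X}\|_2\,2^{-k'/2}/p^{\downarrow}_1=\bigTheta{\sqrt{k}\,2^{-k'/2}}$. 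Choosing $k'=k-\bigOm{\log k}$ keeps the advantage above $\bigTheta{2^{-k}}$, so $D^{*}$ distinguishes $X$ from \emph{all} of $\cY_{k'}$ with that advantage and the randomized metric entropy is forced down to $k-\bigOm{\log k}$; note that it is exactly the small threshold $\bigTheta{2^{-k}}$ that lets the certified entropy fall this far.

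The hard part is the quantitative bookkeeping that makes the gap genuinely $\bigTheta{\log k}$ rather than smaller. I must control the constant in $\|\PrD{X}\|_2^2=\bigTheta{k2^{-k}}$ from both sides through the harmonic asymptotics, and, for the efficient version, verify that replacing the exact saturating distribution by an efficiently samplable dyadic staircase still obeys every bound $\sum_{i\leqslant d}p^{\downarrow}_i\leqslant F(d)$ (so that assertion~(i) is not lost) while perturbing both the collision entropy and the distinguishing advantage by only constant factors. The delicate region is $d$ near $2^k$, where $d2^{-n}$ and $(2^n-d)$ are no longer negligible; this is precisely where $k\leqslant n-2$ is needed and where the estimates must be carried out exactly rather than asymptotically.
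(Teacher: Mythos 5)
Your proposal is correct and follows essentially the same route as the paper: you build the same saturating distribution $\PrD{X}(x^{d})=\gamma(d)-\gamma(d-1)$ from the boolean characterization in Corollary~\ref{MetricColisionEntropy_Characterization}, use concavity of $\gamma$ to verify the majorization constraints, get the collision entropy from the same harmonic sum $\bigTheta{2^{-k}k}$, and separate via the density test $D(x)\propto\PrD{X}(x)$ with Cauchy--Schwarz, simulated by a coin-flipping boolean circuit. The only cosmetic differences are your normalization of the distinguisher by $p^{\downarrow}_{1}$ and the proposed dyadic-staircase modification for efficiency (the paper instead computes $\ell$ bits of $\PrD{X}(x^{d})$ directly inside the circuit, which avoids having to re-verify the constraints for a perturbed distribution).
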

\begin{remark}
The proof gives us actually the separation even between deterministic and real-valued circuits.
\end{remark}
\begin{proof}
Fix a number $k\leqslant n-2$. For $d=1,\ldots,2^k$ let $D$ be a boolean function such that $|D| = d$ and $p(d) = p_D$ where $p_D$ is given by (\ref{eq:ColisionEntropy_ExtremeDistributions}).
The sequence $p(d)$ is well defined as the solutions $p_D$ of (\ref{eq:RenyiEntropy_ExtremeDistributions}) depend only on $|D|$. Let $\gamma(d) = p(d)\cdot d$. Then
\begin{equation}
 \gamma(d) = 2^{-n}d + \sqrt{\left(2^n - d\right) d \left(2^{-k-n} - 2^{-2n} \right) },
\end{equation}
Consider the set $S = \left\{x\in \{0,1\}^n:\ x = \left(w,0^{n-k}\right) \textup{ for some } w\in\{0,1\}^k \right\}$ (an injection of $\{0,1\}^k$ into $\{0,1\}^n$). Enumerate its elements by $x^{1},x^{2},\ldots $ where $x^{d}$ starts with the $k$-digit binary expansion of $d-1$ and define
\begin{equation}
 \PrD{X}\left(x^{1}\right) = \gamma_1,\quad \PrD{X}\left(x^{d}\right) = \gamma\left(d\right) - \gamma\left(d-1\right) \textup{ for } d=2,\ldots,2^{k},\quad \PrD{X}(x) = 0 \textup{ if } x\not\in S
\end{equation}
Extend $\gamma(d)$ by the same formula to $d \in \left[1,2^k\right]$. We will make use of the following properties of $\gamma$
\begin{claim}
The function $\gamma(d)$, extended to $d\in \left[1,2^k\right]$, is increasing and concave.
\end{claim}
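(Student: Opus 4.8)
The plan is to peel off the linear part of $\gamma$ and recognize the remaining term as a nondecreasing concave function composed with a concave function. First I would write
\[
\gamma(d) = 2^{-n}d + \sqrt{c\cdot g(d)}, \qquad c = 2^{-k-n}-2^{-2n},\quad g(d) = d(2^n - d).
\]
Here the constant $c$ is strictly positive: since $k\leqslant n-2 < n$ we have $2^{-k} > 2^{-n}$, hence $c = 2^{-n}(2^{-k}-2^{-n}) > 0$, so the square root is real and $\gamma$ is well defined (and smooth) on the whole interval. The function $g$ is a downward parabola with vertex at $d = 2^{n-1}$, and the hypothesis $k\leqslant n-2$ gives $2^k \leqslant 2^{n-2} < 2^{n-1}$, so the entire range $[1,2^k]$ lies strictly to the left of the vertex.

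For monotonicity I would argue that both summands are increasing on $[1,2^k]$. The linear term is obviously increasing, and since $g$ is increasing to the left of its vertex and $t\mapsto\sqrt{t}$ is increasing, the term $\sqrt{c\,g(d)}$ is increasing as well; their sum is therefore increasing. Equivalently, one can check directly that $\gamma'(d) = 2^{-n} + \frac{c(2^n-2d)}{2\sqrt{c\,g(d)}} > 0$, the second summand being positive because $2d \leqslant 2^{k+1} \leqslant 2^{n-1} < 2^n$ (this last chain is exactly $k+1\leqslant n-1$, i.e. $k\leqslant n-2$). For concavity I would deliberately avoid computing $\gamma''$ and instead invoke the composition rule: $g$ is concave (quadratic with negative leading coefficient), scaling by the positive constant $c$ preserves concavity, and $t\mapsto\sqrt{t}$ is concave and nondecreasing, so $d\mapsto\sqrt{c\,g(d)}$ is concave as a nondecreasing concave function of a concave function. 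Adding the affine term $2^{-n}d$ leaves concavity intact, so $\gamma$ is concave.

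The only genuinely delicate point, and the place where the hypothesis $k\leqslant n-2$ is actually used, is the range check that $[1,2^k]$ stays to the left of the vertex $2^{n-1}$ of $g$. This single fact does double duty: it makes $g$ increasing on the interval (needed for monotonicity) and keeps us bounded away from the endpoint $d = 2^n$ where $g$ vanishes and the derivative of the square-root term would blow up. Once that interval containment is established, the composition argument makes both assertions immediate, so I expect no real computational obstacle beyond verifying $c > 0$ and locating the vertex.
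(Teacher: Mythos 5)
Your proposal is correct. The paper proves the claim by brute force: it computes $\frac{\partial \gamma}{\partial d}=2^{-n} + \frac{2^n-2 d}{2}\left(d \left(2^n-d\right) \right)^{-1/2} A$ and then the second derivative, which collapses to the clean expression $-4^{n-1}\left(d\left(2^n-d\right)\right)^{-3/2}A$ with $A=\left(2^{-k-n}-2^{-2n}\right)^{1/2}$ (the simplification uses $d\left(2^n-d\right)+\frac{1}{4}\left(2^n-2d\right)^2=4^{n-1}$), concluding that $\gamma$ increases for $d\leqslant 2^{n-1}$ and is concave on all of $\left(0,2^n\right)$. You reach the same two facts by structural reasoning: monotonicity because both summands are increasing to the left of the vertex of the parabola $g(d)=d\left(2^n-d\right)$, and concavity via the composition rule (nondecreasing concave $\sqrt{\cdot}$ applied to the concave $c\,g$, plus an affine term). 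The two routes are equally elementary and prove the same interval containment $\left[1,2^k\right]\subset\left[1,2^{n-1}\right]$ from $k\leqslant n-2$, which you correctly identify as the only place the hypothesis enters; your version trades the second-derivative algebra for a standard convexity lemma, while the paper's explicit $\gamma''$ buys nothing extra since it is not reused elsewhere. Your observation that $c>0$ requires $k<n$ is a detail the paper leaves implicit, and it is worth having on record. No gaps.
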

\begin{proof}
We have $\frac{\partial \gamma}{\partial d}=2^{-n} + \frac{2^n-2 d}{2}\cdot \left(d \left(2^n-d\right) \right)^{-1/2}\cdot A$ and 
 $\frac{\partial^2 \gamma}{\partial d^2} = -4^{n-1}\left(d \left(2^n-d\right)\right)^{-3/2}\cdot A$, where $A = \left(2^{-k-n} - 2^{-2n}\right)^{1/2}$. Thus, $\gamma$ is increasing if $d\leqslant 2^{n-1}$ and concave for $d\leqslant 2^{n}$. 
\end{proof}
\noindent Since $\gamma(d)$ decreases with $d$ and $\gamma\left(2^k\right) = 1$, it follows that $\PrD{X}$ is a probability measure on $\{0,1\}^n$. Next, we calculate the metric colision entropy and the colision entropy of $X$.   
\begin{claim}
We have $\mathbf{H}_{2}^{\textup{Metric},\mathrm{det} \{0,1\},0}(X) \geqslant k$. 
\end{claim}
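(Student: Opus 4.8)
The plan is to verify the characterization of metric collision entropy from Corollary~\ref{MetricColisionEntropy_Characterization}. Since the class of all boolean functions is closed under complements and $\epsilon=0$, it suffices to check that for every boolean $D$ on $\{0,1\}^n$ one has $\mathbf{E}D(X)\leqslant p_D|D|=\gamma(|D|)$, where $\gamma(d)=p(d)\cdot d$ and $p(d)$ is the collision value from~(\ref{eq:ColisionEntropy_ExtremeDistributions}). Equivalently, by Lemma~\ref{RenyiEntropy_ExtremeDistributions}, I must show $\mathbf{E}D(X)\leqslant\max_{Y\in\mathcal{Y}_k}\mathbf{E}D(Y)$ for all boolean $D$. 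Writing $d=|D|$, two regimes appear: if $d\geqslant 2^k$ then $\max_{Y\in\mathcal{Y}_k}\mathbf{E}D(Y)=1$ and the inequality is automatic because $\mathbf{E}D(X)\leqslant 1$; hence the entire content lies in the regime $d<2^k$, where the target bound reads $\mathbf{E}D(X)\leqslant\gamma(d)$.

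First I would exploit the way $X$ was built. The mass that $X$ assigns to the $d$-th support point is exactly the increment $\gamma(d)-\gamma(d-1)$, with $\gamma_1=\gamma(1)$ at the first point. By the preceding Claim, $\gamma$ is increasing and concave on $[1,2^k]$ (here $k\leqslant n-2<n-1$, so we stay strictly below the vertex $2^{n-1}$), hence these increments are nonnegative and \emph{nonincreasing} in $d$. In other words, along the chosen enumeration $x^{1},x^{2},\ldots$ the probabilities $\PrD{X}(x^{1})\geqslant\PrD{X}(x^{2})\geqslant\cdots$ are already sorted in decreasing order, and $X$ is supported on $S$.

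The remaining step is a simple rearrangement. A boolean $D$ with $|D|=d$ contributes $\mathbf{E}D(X)=\sum_{x:\,D(x)=1}\PrD{X}(x)$, and since only points of $S$ carry mass, this sum is largest when $D$ selects the $d$ heaviest atoms of $X$; because the atoms are sorted decreasingly, these are $x^{1},\ldots,x^{d}$. The resulting maximum telescopes: $\sum_{i=1}^{d}\PrD{X}(x^{i})=\gamma(1)+\sum_{i=2}^{d}\bigl(\gamma(i)-\gamma(i-1)\bigr)=\gamma(d)$. Thus $\mathbf{E}D(X)\leqslant\gamma(d)=p_D|D|$ for every $D$ with $d<2^k$, which together with the trivial case $d\geqslant 2^k$ establishes the claim via Corollary~\ref{MetricColisionEntropy_Characterization}. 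I expect the only genuinely load-bearing point to be the monotonicity of the increments, i.e.\ the concavity of $\gamma$; once that is in hand (it was proved in the Claim above), the maximization reduces to the elementary fact that to collect the most mass under a cardinality-$d$ indicator one picks the $d$ largest atoms.
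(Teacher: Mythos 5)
Your proof is correct and follows essentially the same route as the paper: invoke the characterization of metric collision entropy, use the concavity of $\gamma$ to see that the atoms $\PrD{X}(x^{d})=\gamma(d)-\gamma(d-1)$ are sorted decreasingly, and bound $\mathbf{E}D(X)$ by the telescoping sum of the $|D|$ heaviest atoms, which equals $\gamma(|D|)=p_D|D|$. The only difference is that you handle the regime $|D|\geqslant 2^k$ explicitly, which is a slight (and harmless) refinement of the paper's argument.
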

\begin{proof}
Since $\gamma(d)$ is a concave function, the sequence $\gamma(d) - \gamma(d-1) = \PrD{X}\left(x^{d}\right)$ is decreasing. Using this, for any boolean function $D$ we obtain
\begin{align}
 \mathbf{E}D(X) = & \sum_{x}\PrD{X}(x) \cdot D(x) \\
 \leqslant & \max\limits_{I \subset \{0,1\}^n:\, |I| = |D|} \sum\limits_{i \in I}^{d}\PrD{X}\left(x^{i}\right) \\
 \leqslant & \sum\limits_{i=1}^{d} \PrD{X}\left(x^i\right) \\
 = & \gamma_{d} = p(d)\cdot |D|
\end{align}
and by the characterization in Corollary \ref{MetricColisionEntropy_Characterization}, the claim follows. 
\end{proof}
\begin{claim}
We have $\mathbf{H}_{2}(X) \leqslant k - \bigOm{\log k}$ 
\end{claim}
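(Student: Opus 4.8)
The plan is to turn the claim into a lower bound on the collision probability and then harvest a factor of $k$ from a harmonic sum. Recall that for $\alpha=2$ Definition~\ref{def:RenyiEntropy} reads $\mathbf{H}_{2}(X) = -\log\left(\sum_{x}\PrD{X}(x)^2\right)$, so proving $\mathbf{H}_{2}(X)\leqslant k-\bigOm{\log k}$ is exactly the same as establishing
\[
 \sum_{x}\PrD{X}(x)^2 \;\geqslant\; \bigOm{k}\cdot 2^{-k}.
\]
In other words, the whole point is that the collision probability of $X$ exceeds the ``ideal'' value $2^{-k}$ by a multiplicative factor that grows linearly in $k$.

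First I would exploit the cumulative structure of $X$. By construction $\sum_{i=1}^{d}\PrD{X}(x^{i}) = \gamma(d)$, hence $\PrD{X}(x^{d}) = \gamma(d)-\gamma(d-1)$ with the convention $\gamma(0)=0$, and the collision probability equals $\sum_{d=1}^{2^k}\left(\gamma(d)-\gamma(d-1)\right)^2$. Since $\gamma$ is concave (the concavity established in the preceding Claim), its derivative $\gamma'$ is decreasing, so the mean value theorem gives $\gamma(d)-\gamma(d-1)=\gamma'(\xi_d)\geqslant\gamma'(d)$ for some $\xi_d\in(d-1,d)$. It therefore suffices to bound $\sum_{d=1}^{2^k}\gamma'(d)^2$ from below.

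Next I would lower-bound $\gamma'(d)$ on the range $d\leqslant 2^k$. Starting from the explicit formula $\gamma'(d)=2^{-n}+\frac{2^n-2d}{2}\left(d(2^n-d)\right)^{-1/2}A$ with $A=\sqrt{2^{-k-n}-2^{-2n}}$, I would drop the positive term $2^{-n}$, use $2^n-d\leqslant 2^n$, and note that the hypothesis $d\leqslant 2^{k}\leqslant 2^{n-2}$ keeps $2^n-2d\geqslant 2^{n}/2$, while $n\geqslant k+2$ forces $A\geqslant\tfrac{\sqrt 3}{2}\,2^{-(k+n)/2}$. Combining these yields $\gamma'(d)\geqslant c\,2^{-k/2}d^{-1/2}$ for an absolute constant $c>0$, so $\gamma'(d)^2\geqslant c^2\,2^{-k}/d$. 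Summing the harmonic series then gives $\sum_{d=1}^{2^k}\gamma'(d)^2\geqslant c^2\,2^{-k}\sum_{d=1}^{2^k}\tfrac1d=\bigTheta{k}\cdot 2^{-k}$, which is precisely the bound announced above and hence delivers $\mathbf{H}_{2}(X)\leqslant k-\log\bigTheta{k}=k-\bigOm{\log k}$.

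The main obstacle is localized entirely in the derivative estimate: one must check that discarding $2^{-n}$ and replacing $2^n-d$ by $2^n$ and $2^n-2d$ by $2^n/2$ all move in the lower-bounding direction, which is exactly where the constraint $k\leqslant n-2$ enters. Everything else is mechanical, and it is worth emphasizing that the factor $k$ — and thus the entire $\log k$ entropy deficit — originates solely in the divergence of the harmonic series $\sum_{d\leqslant 2^k}1/d=\bigTheta{k}$.
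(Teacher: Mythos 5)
Your proposal is correct and follows essentially the same route as the paper: both reduce the claim to the lower bound $\sum_{x}\PrD{X}(x)^2 \geqslant \bigOm{k}\cdot 2^{-k}$ and extract the factor $k$ from the harmonic sum $\sum_{d\leqslant 2^k} 1/d = \bigTheta{k}$. The only (harmless) difference is technical: you bound the increment $\gamma(d)-\gamma(d-1)$ from below by $\gamma'(d)$ via the mean value theorem and the already-established concavity, whereas the paper rationalizes the difference of square roots $\sqrt{d(2^n-d)}-\sqrt{(d-1)(2^n-d+1)}$ and computes a two-sided $\bigTheta{\cdot}$ estimate it does not actually need.
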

\begin{proof}
Observe that
\begin{equation}\label{eq:ColisionEntropy_FoolingDistribution}
 \PrD{X}\left(x^{d}\right) = 2^{-n} + \left(\sqrt{d\left(2^n-d\right)} - \sqrt{(d-1)\left(2^n-d+1\right)} \right)
\sqrt{2^{-k-n} - 2^{-2n} }.
\end{equation}
Hence,
\begin{align}
 \sum\limits_{x}\PrD{X}(x)^2 = 2\cdot 2^{-n} - 2^{k-2n} + \left(2^{-k-n}-2^{-2n}\right) \sum\limits_{d=1}^{2^k} \left(\sqrt{d\left(2^n-d\right)} - \sqrt{(d-1)\left(2^n-d+1\right)} \right)^2.
\end{align}
Note that 
\begin{align}
 \sqrt{d\left(2^n-d\right)} - \sqrt{(d-1)\left(2^n-d+1\right)} = &
 \frac{2^n-2d+1}{ \sqrt{d\left(2^n-d\right)} + \sqrt{(d-1)\left(2^n-d+1\right) } } \\
 = & \bigTheta{d^{-1/2}\left(2^n-d\right)^{-1/2} \left( 2^n-2d \right)  }
\end{align}
Using this we obtain
\begin{align}
 \sum\limits_{x}\PrD{X}(x)^2 = & 2\cdot 2^{-n} - 2^{k-2n}  + \left(2^{-k-n}-2^{-2n}\right) \bigTheta{\sum\limits_{d=1}^{2^k} \left(2^n-d\right)^{-1}d^{-1}\left(2^n-2d\right)^2} \\
 = & \bigTheta{2^{-n}} +  \bigTheta{2^{-k-n}} \bigTheta{ \sum\limits_{d=1}^{2^k} \left(  d^{-1}\left(2^n-d\right)+\left(2^n-d\right)^{-1}d -2 \right) } \\
 = & \bigTheta{2^{-n}} +  \bigTheta{2^{-k-n}}\bigTheta{2^{n}k} = \bigTheta{2^{-k}k}.
\end{align}
Since $\bigTheta{2^{-k}k} = 2^{-k + \bigTheta{\log k} + \bigTheta{1}}$, the result follows.
\end{proof}
\noindent By combining the last two claims we obtain the first part of the theorem.

\begin{claim}
We have $\mathbf{H}_{2}^{\textup{Metric}, \mathrm{rand}\{0,1\},n+\poly{k},\bigTheta{2^{-k}}}(X) \leqslant k - \bigOm{\log k}$ 
\end{claim}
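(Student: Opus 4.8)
The plan is to exploit the $\ell_2$ (ellipsoidal) geometry of the collision-entropy constraint, which boolean distinguishers cannot match but a real-valued --- hence randomized-boolean-approximable --- distinguisher can. Concretely, I would separate $X$ from the set $\mathcal{Y}_{\tilde k}=\{Y:\,\mathbf{H}_2(Y)\geqslant \tilde k\}$ using the \emph{likelihood distinguisher} $D=\PrD{X}$ itself. This is a legal $[0,1]$-valued function because $\max_x\PrD{X}(x)=\gamma(1)=\bigTheta{2^{-k/2}}\leqslant 1$, and I will pick the threshold $\tilde k$ just above the true collision entropy of $X$. Writing $2^{-k''}:=\sum_x\PrD{X}(x)^2=\bigTheta{2^{-k}k}$ (the value obtained in the previous claim, so that $\mathbf{H}_2(X)=k''=k-\log k-\bigO{1}$), I take $\tilde k=k''+\bigO{1}$, which is still of the form $k-\bigOm{\log k}$ yet strictly larger than $k''$, so that $X\notin\mathcal{Y}_{\tilde k}$.

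Next I compute the distinguishing advantage. On one side $\mathbf{E}D(X)=\langle\PrD{X},\PrD{X}\rangle=2^{-k''}$. On the other side, for every $Y\in\mathcal{Y}_{\tilde k}$ I bound $\mathbf{E}D(Y)$ by Cauchy--Schwarz \emph{after centering against the uniform distribution} $U$ --- this is what enforces $\sum_x\PrD{Y}(x)=1$ and keeps the bound non-vacuous, since $\|\PrD{Y}-U\|_2^2=\|\PrD{Y}\|_2^2-2^{-n}\leqslant 2^{-\tilde k}-2^{-n}$:
\begin{equation*}
\mathbf{E}D(Y)=2^{-n}+\langle \PrD{X}-2^{-n}\mathbf{1},\,\PrD{Y}-U\rangle\leqslant 2^{-n}+\sqrt{\left(2^{-k''}-2^{-n}\right)\left(2^{-\tilde k}-2^{-n}\right)}.
\end{equation*}
Subtracting, the separation is at least
\begin{equation*}
\mathbf{E}D(X)-\max_{Y\in\mathcal{Y}_{\tilde k}}\mathbf{E}D(Y)\geqslant \sqrt{2^{-k''}-2^{-n}}\left(\sqrt{2^{-k''}-2^{-n}}-\sqrt{2^{-\tilde k}-2^{-n}}\right).
\end{equation*}
For $\tilde k=k''+\bigO{1}$ the bracket is $\bigTheta{2^{-k''/2}}$, so the whole expression is $\bigTheta{2^{-k''}}=\bigTheta{2^{-k}k}$, which comfortably exceeds the target advantage $\bigTheta{2^{-k}}$. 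Because the class is closed under complements, this already shows that the metric collision entropy against \emph{unbounded} real-valued distinguishers is $<\tilde k=k-\bigOm{\log k}$.

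It remains to realise $D=\PrD{X}$ as a randomized boolean circuit of size $n+\poly{k}$. On input $x$ the circuit spends $\bigO{n}$ gates testing whether $x\in S$ (the last $n-k$ coordinates vanish) and, if so, reading off the index $d$; it then uses $\poly{k}$ gates to evaluate $\gamma(d)-\gamma(d-1)$ --- an arithmetic expression involving a single square root --- to $\poly{k}$ bits of precision, and finally outputs $1$ with that bias by comparing the value against its random tape. The acceptance probability approximates $\PrD{X}(x)$ within $2^{-\poly{k}}$ pointwise, so $\mathbf{E}D(X)$ and every $\mathbf{E}D(Y)$ shift by at most $2^{-\poly{k}}$ and the advantage stays $\bigTheta{2^{-k}k}>\bigTheta{2^{-k}}$.

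The main obstacle I anticipate is the quantitative separation step rather than the bookkeeping. A naive Cauchy--Schwarz bound over the ball $\|\PrD{Y}\|_2\leqslant 2^{-\tilde k/2}$ alone is vacuous, because its maximiser $\PrD{Y}\propto\PrD{X}$ violates $\sum_x\PrD{Y}(x)=1$; the centering against $U$ is therefore essential, and one must carry the lower-order $2^{-n}$ terms to be sure the bracket remains positive and of order $2^{-k''/2}$. This is precisely where the previous claim --- giving $\mathbf{H}_2(X)=k-\log k-\bigO{1}$ \emph{strictly} below $\tilde k$ --- is used, together with the estimate $\gamma(1)\leqslant 1$ that legalises $D$ as a $[0,1]$-valued distinguisher. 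The circuit-complexity part is routine, relying only on the standard fact that the square-root-based likelihood is approximable to $\poly{k}$ bits in $\poly{k}$ gates.
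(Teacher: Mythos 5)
Your proposal is correct and follows essentially the same route as the paper: it uses the likelihood distinguisher $D=\PrD{X}$, separates it from the set of high-collision-entropy distributions via Cauchy--Schwarz using the estimate $\sum_x\PrD{X}(x)^2=\bigTheta{2^{-k}k}$ from the previous claim, and then simulates $D$ by a randomized boolean circuit of size $n+\poly{k}$ whose acceptance probability approximates $\PrD{X}(x)$ to within an exponentially small error. The only (harmless) deviation is that you center against the uniform distribution before applying Cauchy--Schwarz; the paper applies the inequality directly and compensates by taking the threshold $k-c\log k$ with $c$ sufficiently small, so your remark that the uncentered bound is vacuous is overstated, but your sharper variant is equally valid and yields the same $\bigTheta{2^{-k}k}$ advantage.
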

\begin{proof}
Let $D$ be a real valued (!) function defined as $D(x) = \PrD{X}(x)$. Let $c$ be a positive constant (to be determined later). For every distribution $Y$ over $\{0,1\}^n$ satisfying $\mathbf{H}_2(Y) \geqslant k - c\log k$, by applying the Cauchy-Schwarz Inequality and using the estimate on the colision entropy of $X$, we obtain
\begin{align}
 \mathbf{E} D(X) - \mathbf{E}D(Y) = & \sum\limits_{x}\PrD{X}(x)\cdot D(x) - \sum\limits_{x}\PrD{Y}(x)\cdot D(x) \\
 \geqslant & \sum\limits_{x}\PrD{X}(x)\cdot D(x) - \left(\sum\limits_{x}\PrD{Y}(x)^2 \right)^{1/2} \left(\sum\limits_{x}D(x)^2 \right)^{1/2} \\
 = & \sum\limits_{x} \PrD{X}(x)^2 - \left(\sum\limits_{x}\PrD{Y}(x)^2 \right)^{1/2} \left(\sum\limits_{x}\PrD{X}(x)^2 \right)^{1/2} \\
 = & 2^{-\mathbf{H}_2(X)} \left(1- 2^{\mathbf{H}_2(X)/2 - \mathbf{H}_2(Y)/2}\right) \\
 = & \bigTheta{2^{-k} k}\left(1 - 2^{-\bigTheta{\log k} + c\log k} \right) 
\end{align} 
which is $\bigTheta{2^{-k} k}$ provided that $c$ is sufficiently small. We will show how to simulate $D$ with a randomized efficient boolean circut $D'$. Let $\ell$ be chosen so that $2^{-\ell} \ll 2^{-k}k$, for instance $\ell = \bigOm{k}$. 
\begin{algorithm}
\caption{Distinguishing between $X$ and distributions $Y$ with $\mathbf{H}_2(Y)\geqslant k-\bigTheta{\log k}$}
\begin{algorithmic}[1]
\Require{ $x\in\{0,1\}^n$}
\Ensure{$D'(x)$}
\If{ $x\not\in S$}
\State \Return 0
\Else
\State $d\gets $ a number such that $x = x^{d}$
\For{$j \gets 1,\ldots,\ell$}
\State $r_j:=$ the $j$-th digit of the binary expansion of $\PrD{X}(x^{d})$ 
\State $b_j \gets \{0,1\}$ at random (flip a coin)
\EndFor
\State $j\gets $ the smallest number such that $b_j = 1$ or $0$ if does not exist
\State \Return $r_j$
\EndIf
\end{algorithmic}
\end{algorithm} 

\noindent It is easily seen that for $x=x^{d}$ we have $\mathbf{E}_{b_1,\ldots,b_{\ell} \leftarrow \{0,1\}^{\ell}}{D'(x)} = \sum\limits_{j=1}^{\ell}2^{-j}r_j$.
Therefore, for every $x\in\{0,1\}^n$ we have $\left| \mathbf{E}_{b_1,\ldots,b_{\ell} \leftarrow \{0,1\}^k}{D'(x)} - D(x) \right| \leqslant 2^{-\ell-1} $. Since $b_j$ are indepdent from $X$ and $Y$ it follows that
\begin{equation}
 \mathbf{E}D'(X) - \mathbf{E}D'(Y) \geqslant \mathbf{E}D'(X) - \mathbf{E}D'(Y) - 2^{-\ell-1}.
\end{equation}
Hence, for all $Y$ over $\{0,1\}^n$ with $\mathbf{H}_2(Y)\geqslant k$ we have
\begin{equation}
 \mathbf{E}D'(X) - \mathbf{E}D'(Y) = \bigTheta{2^{-k}k}.
\end{equation}
Finally, note that the complexity of $D'$ is at most $\bigO{n + \poly{\ell}} = \bigO{n + \poly{k}} $.
\end{proof}
\end{proof}
\begin{corollary}
There exists a random variable $X\in\{0,1\}^n$ such that:
\begin{enumerate}[label = (\roman{*})]
 \item $X$ has the collision metric entropy $k = \bigTheta{\log n}$ against all deteterministic boolean functions, with $\epsilon = 0$
 \item $X$ has the collision metric entropy $k - \bigOm{\log \log n}$ against randomized circuits of size $\bigO{n}$, with $\epsilon = \poly{1/n} $
\end{enumerate}
\end{corollary}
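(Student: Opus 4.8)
The plan is to obtain this corollary as a direct specialization of the preceding theorem, choosing the entropy level to scale logarithmically with $n$. Concretely, I would fix a constant $c > 0$ and set $k = c\log n$, so that $k = \bigTheta{\log n}$ and $k \leqslant n-2$ for all sufficiently large $n$; the theorem then applies verbatim for this value of $k$ and produces the desired random variable $X \in \{0,1\}^n$.

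Item (i) is then immediate: the theorem guarantees $\mathbf{H}_{2}^{\textup{Metric},\mathrm{det},\{0,1\},0}(X) \geqslant k = \bigTheta{\log n}$, which is exactly the claimed collision metric entropy against all deterministic boolean functions with $\epsilon = 0$. For item (ii) I would simply rewrite the three security parameters from the ``moreover'' clause of the theorem under the substitution $k = c\log n$. The entropy deficit $\bigOm{\log k}$ becomes $\bigOm{\log \log n}$, since $\log k = \bigTheta{\log \log n}$. The circuit size $n + \poly{k} = n + \poly{\log n}$ collapses to $\bigO{n}$, because any fixed power of $\log n$ is eventually dominated by the linear term $n$. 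Finally, the distinguishing advantage $\bigTheta{2^{-k}} = \bigTheta{2^{-c\log n}} = \bigTheta{n^{-c}}$ is $\poly{1/n}$. Assembling these three estimates reproduces item (ii) word for word.

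There is essentially no obstacle here: all the genuine content was already established in proving the separation for a general parameter $k$, and the corollary only performs the asymptotic bookkeeping needed to land in the attractive regime of polynomially small advantage and \emph{linear}-size randomized circuits. The one point I would verify is that the hidden constants (in particular those in the size bound inherited from the hard-function estimate of Lemma \ref{hard_functions_1}) are absolute rather than growing with $k$ in a way that would spoil the choice $k = \bigTheta{\log n}$; since they are indeed independent of $k$, the substitution is legitimate, and the corollary follows. The value of isolating this statement is precisely to exhibit a concrete separation in which an unbounded deterministic adversary is provably weaker, by $\bigOm{\log \log n}$ bits, than an efficient randomized one.
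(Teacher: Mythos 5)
Your proof is correct and is exactly the route the paper intends: the corollary is stated without proof precisely because it is the instantiation $k=\Theta(\log n)$ of the preceding theorem, under which $\bigOm{\log k}$ becomes $\bigOm{\log\log n}$, $n+\poly{k}$ becomes $\bigO{n}$, and $\bigTheta{2^{-k}}$ becomes $\poly{1/n}$. One tiny slip: the hidden constants you would want to check come from the explicit construction and the simulation algorithm in the collision-entropy theorem (not from Lemma \ref{hard_functions_1}, which is used only for the Metric-versus-Smooth separation), but they are indeed independent of $k$, so the substitution goes through as you claim.
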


\paragraph{Shannon entropy- even larger gap}\ 

\noindent For the Metric Shannon Entropy we provide the following even more stronger separation between randomized and deterministic distinguishers for the Shannon Entropy:
\begin{corollary}
For some absolute constant $c \in (0,1)$, for every $n$ there exists a random variable $X\in\{0,1\}^n$ such that:
\begin{enumerate}[label = (\roman{*})]
 \item Metric Shannon Entropy of $X$ is $k \geqslant c n$, against all deteterministic boolean functions and $\epsilon = 0$
 \item Metric Shannon Entropy of $X$ is $k - \bigOm{n}$, against all randomized circuits and $\epsilon = \bigOm{1}$.
\end{enumerate} 
\end{corollary}
The proof is long and requires a lot of technical calculations, thus is left to the Appendix.

\section{Conclusions}
We developed a new ``geometric" way of looking at metric-type computational entropy and show that it can be usefull in some important situations, especially for the leakage-resilent cryptography. Although the tools of convex analysis seems to be complicated and unintuitive, they can yield some powerfull results as we demonstrated having solved the problem of the derandomization of generalized metric-type entropy. We believe that this nonstandard approach can be helpful in improving our understanding of the computational entropy.

\section{Acknowledgements}
I would like to express special thanks to Stefan Dziembowski and Krzysztof Pietrzak, for their helpful suggestions and discussions.

\bibliographystyle{amsalpha}
\bibliography{./ComputationalEntropy}

\appendix 

\section{Proofs}

\begin{reptheorem}{MetricHILL_Conversion}
Let $\cP$ be the set of all probability measures over $\Omega$. Suppose that we are given a class $\mathcal{D}$ of $[0,1]$-valued functions on $\Omega$, with the following property: if $D\in \cD$ then $D^{c}=^{\textup{def}}\mathbf{1}-D \in \cD$. For $\delta > 0$, let $\mathcal{D'}$ be the class consisting of all convex combinations of length $\mathcal{O}\left(\frac{\log|\Omega|}{\delta^2}\right)$ over $\mathcal{D}$. Let $\mathcal{C}\subset \cP$ be any arbitrary convex subset of probability measures and $X \in \cP$ be a fixed distribution. Consider the following statements:
\begin{enumerate}[label = \roman{*}]
\item \label{itm:HILLMetricCoversion_a} $X$ is $\left(\mathcal{D},\epsilon+\delta\right)$ indistinguishable from \emph{some} distribution $Y\in \mathcal{C}$ (HILL Entropy)
 \item \label{itm:HILLMetricCoversion_b} $X$ is $\left(\mathcal{D'},\epsilon\right)$ indistinguishable from the set of \emph{all} distribution $Y\in \mathcal{C}$ (Metric Entropy)
\end{enumerate}
Then (\ref{itm:HILLMetricCoversion_b}) implies (\ref{itm:HILLMetricCoversion_a}).  
\end{reptheorem}
\begin{proof}
This result  was formulated in \cite{Barak2003} in a less general form, namely $\Omega = \{0,1\}^{n}$, $\mathcal{C}$ is the set of distributions with min-entropy at least $k$, and $\mathcal{D},\mathcal{D'}$ are the classes of $[0,1]$-valued circuits of size $s$ and $\mathcal{O}\left( s \cdot \frac{n}{\delta^2} \right)$ respectively. The inspection of the proof shows that: 
(a) the chosen space $\Omega$ can be an arbitrary finite set, and the number $n$ appearing in the assertion is equal to $\log|\Omega|$, (b) the chosen set $\mathcal{C}$ can be replaced by an arbitrary convex set of distributions, (c) the complexity of the class $\mathcal{D'}$ is chosen only to ensure that $\mathcal{D'}$ contains all convex combinations of length $\mathcal{O}\left(\frac{\log |\Omega|}{\delta^2} \right)$ of elements of $\mathcal{C}$.
\end{proof}

\subsection{Separation of Metric and Smooth Entropy}

\begin{replemma}{hard_functions_1}\label{app:hard_functions_1}
For any $C \geqslant 1$ and sufficiently large $\ell$ there exists a boolean function $f$ over $\{0,1\}^{\ell}$, such that $\mathrm{bias}(f) = 1-2^{1-C}$ and for all circuits $D$ of size $\bigO{\nicefrac{2^{\ell-C}\delta^2}{\ell-C-2\log(1/\delta}}$ we have
\begin{equation*}
 \Pr{x\leftarrow f^{-1}(\{1\})}{D(x) = f(x)} + \Pr{x\leftarrow f^{-1}(\{0\})}{D(x) = f(x)} < 1+\delta
\end{equation*}
\end{replemma}
\begin{proof}
Chose a set $A$ by sampling $m=2^{\ell-C}$ elements $x\in\{0,1\}^{\ell}$ without replacement. The random variables $D(x)$ for $x\in A$ are not independent. However, the Hoeffding Inequality still holds for sampling without replacement and gives us
\begin{equation}
 \Pr{A}{\mathbf{E}D\left(U_A\right) - \mathbf{E}D(U) > \frac{1}{2}\delta} \leqslant \exp\left( -\delta^2 2^{\ell-C} \right).
\end{equation}
Let $B = A^{c}$. Since the set $B$ can be viewed as chosen by sampling $2^{\ell}-2^{\ell-C}$ elements from $\{0,1\}^{\ell}$ without replacement, applying the Hoeffding Inequality again, we have
\begin{equation}
 \Pr{B}{\mathbf{E} D^{c}\left(U_B\right) - \mathbf{E}D^{c}(U) > \frac{1}{2}\delta} \leqslant \exp\left( -\delta^2 2^{\ell}\left(1-2^{-C}\right) \right)
\end{equation}
Therefore, for every fixed circuit $D$ the inequality
\begin{equation*}
  \mathbf{E}D\left(U_A\right) + \mathbf{E}D^{c}\left(U_B\right) > 1+\delta
\end{equation*}
holds with probability at most $2\exp\left(-2^{\ell-C}\delta^2\right)$ over choosing $A,B$.
By a union bound over all $\exp(\bigO{s\log s}) < \frac{1}{2}\exp\left(2^{\ell-C}\delta^2\right)$ circuits of size $s$, we obtain that there exists set $A$ and $B = A^{c}$ such that for \emph{every} circuit $D$ of size $s$ we have
\begin{equation*}
 \mathbf{E}D\left(U_A\right) + \mathbf{E}D^{c}\left(U_B\right) \leqslant 1+\delta
\end{equation*}
We define $f$ to be $\I{A}$ and the proof is finished.
\end{proof}

\begin{remark}
If the assertion of the lemma\ref{hard_functions_1} is satisfied by a function $f$ then also by $1-f$. Since $\mathrm{bias}(f) = 1-2\cdot 2^{-C}$, replacing $f$ with $1-f$ if necessary, we may assume that $\#\left\{x:\, f(x)=1\right\} = \left(\frac{1}{2}-\frac{1}{2}\mathrm{bias}(f) \right)2^{\ell} = 2^{\ell-C}$. This in turn implies that for all circuits $D$
\begin{align*}
 \Pr{x}{f(x) = D(x)}  & = \\
 = &\ 2^{-C}\Pr{x\leftarrow f^{-1}(\{1\})}{D(x) = 1} + \left(1-2^{-C}\right) \Pr{x\leftarrow f^{-1}(\{0\})}{D(x) = 0} \\
 < &\ 1-2^{-C} + \delta = \frac{1}{2}+\frac{1}{2}\mathrm{bias}(f) + \delta.
\end{align*}
Thus, we have retrieved the classical result on $\delta$-hard functions, as for every function $f$, the value of $f(x)$ can be guessed trivially (using a constant function) for at least $\frac{1}{2}+\frac{1}{2}\mathrm{bias}(f)$ fraction of inputs $x$.
\end{remark}

\begin{reptheorem}{Separation_MetricSmooth_Conditional}\label{app:Separation_MetricSmooth_Conditional}
For sufficiently large $n$, and for any $C > 0$, $k<n-C$ and $\epsilon > 0$ there exists a pair of jointly distributed random variables $X\in \{0,1\}^n$, $Z\in\{0,1\}^m$ such that 
\begin{enumerate}[label = (\roman{*})]
 \item $\mathbf{H}^{1/2}_{\infty}(X|Z) \leqslant k+1$
 \item $\mathbf{H}^{\text{Metric},\det  [0,1], s, \epsilon}(X|Z) \geqslant k+C$ for $ s =  \bigOm{\frac{2^{k+m}\epsilon^4}{(k+m)\log \left(2^{k+m}\epsilon^2 \right)}}$
\end{enumerate}
\end{reptheorem}
\begin{proof}
Fix a distribution $Z$ over $\{0,1\}^m$. For every $z$, chose a $2^k$-element subsets $A(z)$ of $S=\{0,1\}^{k+C}$. Let $B(z)= A(z)^c$. Let $X$ be a distribution (jointly distributed with $Z$) such that $X|Z=z$ is uniform over $A(z)$. We observe that $\mathbf{H}_{\infty}(X|Z)=k$ and $\mathbf{H}_{\infty}^{1/2}(X|Z) \leqslant k+1$, since $\Delta\left(X|Z=z,Y_z\right) \leqslant 1/2$ for every distribution $Y_z$ over $\{0,1]\}^n$ such that $\mathbf{H}_{\infty}\left(Y_z\right)\geqslant k+1$. 
Let $Y|Z=z$ be uniform over $S$. Assuming that $\Hmtr{X|Z}{\{0,1\},s,\epsilon} < k+C$ with $\epsilon = \left(1-2^{-C}\right)\delta$, from the definition of Metric Entropy (replacing $D$ with $D^c$ if necessary) we obtain for some $D$ of size $s$
\begin{align}
 \epsilon \leqslant & \ \mathbf{E}D(X,Z) - \mathbf{E}D\left(U_S\times Z\right) = \nonumber \\
  = & \ \mathbf{E}_{z\leftarrow Z}\left[ \mathbf{E}D(X|Z=z,z) - \mathbf{E}D\left(U_S, z\right)  \right] \nonumber \\
  = & \ \mathbf{E}_{z\leftarrow Z}\left[ \mathbf{E}D\left(U_{A(z)},z\right) - \frac{\left|A(z)\right|}{|S|} \mathbf{E}D\left(U_{A(z)}\right) - \frac{|B(z)|}{|S|}\mathbf{E}D\left(U_{B(z),z},z\right)  \right] \nonumber \\
  = & \mathbf{E}_{z\leftarrow Z}\left[\left(1-2^{-C}\right) \mathbf{E}D\left(U_{A(z)},z\right) - \left(1-2^{-C}\right)\mathbf{E}D\left(U_{B(z),z},z\right) \right] \nonumber \\
  = & \left(1-2^{-C}\right)\mathbf{E}_{z\leftarrow Z}\left[ \mathbf{E}D\left(U_{A(z)},z\right) + \mathbf{E}D^{c}\left(U_{A(z)},z\right) - 1 \right]
\end{align}
Therefore, for every distribution $Z$ there exists
a circuit $D$ of size $s$ such that
\begin{equation}
 \mathbf{E}_{z\leftarrow Z}\left[ \ExV{x\leftarrow A(z)}{D(x,z)} + \ExV{x\leftarrow B(z)}{D^{c}(x,z)} \right] \geqslant 1+\delta
\end{equation}
by a min-max theorem and obtain that there exists a circuit $D$ (not efficient itself but being a convex combination of circuits of size $s$) such that
\begin{equation}
 \textup{for all distributions }Z : \quad \mathbf{E}_{z\leftarrow Z}\left[ \ExV{x\leftarrow A(z)}{D(x,z)} + \ExV{x\leftarrow B(z)}{D^{c}(x,z)} \right] \geqslant 1+\delta.
\end{equation}
By a standard approximation via Chernoff Bounds, for some circuit of size $s'=\bigO{\nicefrac{(k+m) s}{\epsilon^2}}$ we get
\begin{equation}
 \textup{for all distributions }Z : \quad \mathbf{E}_{z\leftarrow Z}\left[ \ExV{x\leftarrow A(z)}{D(x,z)} + \ExV{x\leftarrow B(z)}{D^{c}(x,z)} \right] \geqslant 1+\delta/2.
\end{equation}
Especially, for every $z$ we obtain
\begin{equation}\label{ConditionalSeparation_CircuitGuess}
  \ExV{x\leftarrow A(z)}{D(x,z)} + \ExV{x\leftarrow B(z)}{D^{c}(x,z)}  \geqslant 1+\delta/2
\end{equation}
Observe that this inequality is valid independently on the choice of $A(z)$. We argue, that if $A(z)$ are chosen at random, this inequality becomes a `hard task' for small circuits. More precisely, we make use of the following lemma on hard functions
\begin{lemma}\label{ConditionalHardFunction}
Let $\delta \in (0,1)$, $C>0$ and $\cD$ be a class of boolean randomized functions on $\{0,1\}^{\ell+m}$ of cardinality at most $\exp\left(c\cdot 2^{\ell+m-C}\delta^2\right)$ for universal constant $c$. Then there exists a function $f$ on $\{0,1\}^{\ell+m}$ such that $\mathrm{bias}(f(\cdot,z)) = 1-2^{1-C}$ for every $z$, with the following property: for every $D\in\cD$ there exists at least one $z$ satisfying
\begin{equation}
 \Pr{x:\ f(x,z) = 1}{D(x,z) = f(x,z)} + \Pr{x:\ f(x,z) = 0}{D(x,z) = f(x,z)} < 1+\delta.
\end{equation}
\end{lemma}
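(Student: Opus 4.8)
The plan is to prove \ref{ConditionalHardFunction} as the conditional analogue of Lemma \ref{hard_functions_1}, keeping the same probabilistic template (Hoeffding together with a union bound) but exploiting \emph{independence across the $2^m$ values of $z$} to pay for the enlarged distinguisher class. First I would construct $f$ at random: for each $z\in\{0,1\}^m$, independently across $z$, sample a set $A(z)\subseteq\{0,1\}^{\ell}$ of cardinality $2^{\ell-C}$ without replacement, set $B(z)=A(z)^{c}$, and define $f(x,z)=\I{A(z)}(x)$. Since $A(z)$ has a $2^{-C}$ fraction of the points, the requirement $\mathrm{bias}(f(\cdot,z))=1-2^{1-C}$ holds for every $z$ by construction, so only the hardness property needs to be verified. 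For randomized $D$ I would work with the acceptance probability, so each $D$ is effectively a $[0,1]$-valued function and Hoeffding applies.

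Next I would establish the per-coordinate estimate. For a fixed $D$ and a fixed $z$, the quantity in the statement is exactly $\ExV{x\leftarrow A(z)}{D(x,z)}+\ExV{x\leftarrow B(z)}{D^{c}(x,z)}$, whose expectation over the random choice of $A(z)$ equals $1$ (each point lands in $A(z)$ with probability $2^{-C}$ and in $B(z)$ with probability $1-2^{-C}$). As in the proof of \ref{hard_functions_1}, I would apply the Hoeffding inequality for sampling without replacement separately to the two terms (the first with $2^{\ell-C}$ samples, the second with $|B(z)|=2^{\ell}(1-2^{-C})$ samples); since the sum exceeds $1+\delta$ only if one of the two terms exceeds its mean by $\delta/2$, a union bound over the two events gives that, for fixed $D$ and $z$, the sum is $\geqslant 1+\delta$ with probability at most $2\exp(-c_0\,2^{\ell-C}\delta^2)$ for an absolute constant $c_0>0$.

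Then I would use independence and a final union bound. Because the sets $A(z)$ are chosen independently, for a fixed $D$ the events ``the sum is $\geqslant 1+\delta$ at $z$'' are independent across the $2^m$ values of $z$, so the probability that $D$ exceeds the threshold at \emph{every} $z$ simultaneously is at most $(2\exp(-c_0\,2^{\ell-C}\delta^2))^{2^m}=2^{2^m}\exp(-c_0\,2^{\ell+m-C}\delta^2)$. Taking a union bound over $\cD$ and using $|\cD|\leqslant\exp(c\cdot 2^{\ell+m-C}\delta^2)$, the probability that some $D\in\cD$ beats the threshold at all $z$ is at most $\exp\!\big(2^m\ln 2+(c-c_0)\,2^{\ell+m-C}\delta^2\big)$. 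For a sufficiently small universal constant $c<c_0$ and $2^{\ell-C}\delta^2$ above an absolute constant, this is strictly below $1$, so some realization of $f$ works against all $D\in\cD$ at once, which is exactly the claimed property (for each $D$ the complementary event yields at least one $z$ with the good inequality).

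The main obstacle, and the reason the conditional case is harder than the unconditional one, is that the admissible class is a factor $2^m$ larger in the exponent than in Lemma \ref{hard_functions_1}; the only way to absorb this is the independence across $z$, where each of the $2^m$ coordinates contributes an independent factor $\exp(-c_0\,2^{\ell-C}\delta^2)$ to the failure probability and their product precisely matches the enlarged cardinality bound. The one delicate point is that the per-coordinate factor of $2$ (from the two Hoeffding events) compounds to $2^{2^m}$, forcing the mild lower bound $2^{\ell-C}\delta^2=\bigOm{1}$ so that the $2^m\ln 2$ term is dominated; this is automatically satisfied in the parameter regime where the surrounding separation theorem is meaningful.
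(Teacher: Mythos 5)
Your proposal is correct and follows essentially the same route as the paper's own proof: a random choice of $A(z)$ independently for each $z$, the Hoeffding inequality for sampling without replacement applied separately to $\ExV{x\leftarrow A(z)}{D(x,z)}$ and $\ExV{x\leftarrow B(z)}{D^{c}(x,z)}$, independence across the $2^m$ coordinates to raise the per-coordinate failure probability to the power $2^m$, and a final union bound over $\cD$. You are in fact slightly more careful than the paper, which absorbs the compounded factor $2^{2^m}$ into its $\bigOm{\cdot}$ notation without flagging the resulting requirement $2^{\ell-C}\delta^2 = \bigOm{1}$ that you make explicit.
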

\begin{proof}
Fix a function $D\in\cD$. For every $z$ chose a set $A(z)$ by sampling $m=2^{\ell-C}$ elements $x\in\{0,1\}^{\ell}$ without replacement. The random variables $D(x,z)$ for $x\in A$ are not independent. However, the Hoeffding Inequality holds for sampling without replacement (see \cite{Serfling1974} for instance) and gives us
\begin{equation}\label{ineq:ConditionalHardFunctions_Chernoff1}
 \Pr{A(z)}{\mathbf{E}D\left(U_{A(z)},z\right) - \mathbf{E}D(U,z) \geqslant \delta/2} \leqslant \exp\left( -\bigOm{\delta^2 2^{\ell-C}} \right).
\end{equation}
Let $B(z) = A(z)^{c}$. Since the set $B(z)$ can be viewed as chosen by sampling $2^{\ell}-2^{\ell-C}$ elements from $\{0,1\}^{\ell}$ without replacement, applying the Hoeffding Inequality again, we have
\begin{equation}\label{ineq:ConditionalHardFunctions_Chernoff2}
 \Pr{B(z)}{\mathbf{E} D^{c}\left(U_{B(z)},z\right) - \mathbf{E}D^{c}(U,z) \geqslant \delta/2} \leqslant \exp\left( -\bigOm{\delta^2 2^{\ell}\left(1-2^{-C}\right)} \right)
\end{equation} 
Since $\mathbf{E}D(U) + \mathbf{E}D^{c}(U) = 1$, inequalities (\ref{ineq:ConditionalHardFunctions_Chernoff1}) and (\ref{ineq:ConditionalHardFunctions_Chernoff2}) for every $z$ yield 
\begin{equation}
 \Pr{A(z),B(z)}{\mathbf{E}D\left(U_{A(z)},z\right) + \mathbf{E} D^{c}\left(U_{B(z)},z\right) \leqslant 1+\delta} \geqslant
 1-\exp\left( -\bigOm{\delta^2 2^{\ell-C}} \right)
\end{equation}
Thus, probability that all values $z$ are `bad' is equal to
\begin{equation}
 \Pr{A,B}{\text{for every } z:\  \mathbf{E}D\left(U_{A(z)},z\right) + \mathbf{E} D^{c}\left(U_{B(z)},z\right) \leqslant 1+\delta} \leqslant
 \exp\left( -\bigOm{\delta^2 2^{\ell+m-C}} \right),
\end{equation}
and by a union bound over all members of $\cD$ the result follows.
\end{proof}
\noindent Note that condition $\mathrm{bias}{f} = 1-2^{1-C}$ in Lemma \ref{ConditionalHardFunction} means that either $|f| = 2^{C}$ or $\left|f^{c}\right| = 2^{C}$. Clearly, the lemma is valid also for $f^c$. Thus, without losing generality, let $|f| = 2^{C}$. Apply Lemma \ref{ConditionalHardFunction} to $\ell = k + C$.
Define the sets $A(z)$ as $A(z) = \left\{x:\,f(x,z) = 1\right\}$ and $B(z) = \left\{x:\, f^{c}(x,z) = 1\right\}$.
Since $s' = \bigO{\nicefrac{(k+m)s}{\epsilon^2}}$, inequality \ref{ConditionalSeparation_CircuitGuess} contradicts the lemma provided that 
\begin{equation}
  \exp\left(s'\log s'\right)\exp\left(-\bigOm{2^{k+m}\epsilon^2 }\right) < 1
\end{equation}
or in other words if
\begin{equation}
 \frac{(k+m)s}{\epsilon^2} < c\cdot \frac{2^{k+m}\epsilon^2}{\log \left(2^{k+m}\epsilon^2 \right)},
\end{equation}
which is equivalent to 
\begin{equation}
s < c \frac{2^{k+m}\epsilon^4}{(k+m)\log \left(2^{k+m}\epsilon^2 \right)}
\end{equation}
\end{proof}

\subsection{Characterizations of R\'{e}nyi Metric Entropy}

\begin{replemma}{RenyiEntropy_ExtremeDistributions}\label{app:RenyiEntropy_ExtremeDistributions}
Let $\alpha>1$ be fixed, let $D:\{0,1\}^{n}\rightarrow \{0,1\}$ be a function and $\mathcal{Y}_k = \left\{Y\in\{0,1\}^{n}:\, \mathbf{H}_{\alpha}(Y)\geqslant k \right\}$. Then 
\begin{equation}
 \max\limits_{Y\in \mathcal{Y}_k} \mathbf{E}D(Y) = 
\left\{
\begin{array}{rl}
 p_D \cdot |D|,  & \textup{if } |D| < 2^k \\
 1, & \textup{otherwise}
\end{array}
\right.
\end{equation}
where $p_D$, for $|D| \leqslant 2^k$, is the greatest number satisfying the following system 
\begin{equation}\label{app:eq:RenyiEntropy_ExtremeDistributions}
\left\{
\begin{array}{rcl}
 p_D^{\alpha}|D| + q_D^{\alpha}|D^c| &=& 2^{-(\alpha-1)k}  \\
 p_D |D| + q_D |D^c| &=&  1 \\
 p_D,q_D & \geqslant & 0
\end{array}\right. 
\end{equation}
Moreover, the solution $p_D$ is unique provided that $k < n-1$.
\end{replemma}
\begin{proof}
First we prove that $\max\limits_{Y\in\mathcal{Y}_k} \mathbf{E}D(Y) < 1$ is equivalent to $|D|<2^k$. 
Suppose that $\max\limits_{Y\in\mathcal{Y}_k} \mathbf{E}D(Y) < 1$. If $\left| D^{-1}(1) \right| \geqslant 2^k$ then for $Y$ being uniform over $D^{-1}(1)$ we get a contradiction as $\mathbf{E}D(Y) = 1$ and $\mathbf{H}_{\alpha}(Y)\geqslant \Hmin{Y} \geqslant k$. The other direction follows from the following Lemma, proved in the Appendix:
\begin{lemma}\label{HighEntropy_LargeSupport}
Let $X\in\{0,1\}^n$ be a random variable satisfying $\mathbf{H}_{\alpha}(X)\geqslant k$.
Then $\left| \mathrm{supp}\left(\PrD{X}\right) \right| \geqslant 2^k$. 
\end{lemma}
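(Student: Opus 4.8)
The plan is to deduce the claim from the elementary and well-known bound that the R\'{e}nyi entropy of any order never exceeds the logarithm of the support size, i.e.\ $\mathbf{H}_{\alpha}(X) \leqslant \log\left|\supp{\PrD{X}}\right|$, with equality exactly for the uniform distribution on the support. Granting this, the hypothesis $\mathbf{H}_{\alpha}(X)\geqslant k$ forces $\log\left|\supp{\PrD{X}}\right| \geqslant k$, which is precisely $\left|\supp{\PrD{X}}\right| \geqslant 2^{k}$. So the whole task reduces to establishing this single inequality, and the natural tool is the convexity of $t\mapsto t^{\alpha}$ (equivalently, the power-mean inequality).

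Concretely, I would first unwind Definition \ref{def:RenyiEntropy}. Write $N = \left|\supp{\PrD{X}}\right|$ and let $p_1,\ldots,p_N$ be the nonzero values of $\PrD{X}$, so that $\sum_{i} p_i = 1$. Since $\mathbf{E}_{x\leftarrow X}\PrD{X}(x)^{\alpha-1} = \sum_i p_i^{\alpha}$, the hypothesis for $\alpha>1$ reads $\sum_{i} p_i^{\alpha} \leqslant 2^{-(\alpha-1)k}$. The heart of the proof is then one application of Jensen's inequality to the convex map $t\mapsto t^{\alpha}$, averaging over the $N$ atoms with uniform weights $1/N$:
\begin{equation*}
 \frac{1}{N}\sum_{i} p_i^{\alpha} \geqslant \left(\frac{1}{N}\sum_{i} p_i\right)^{\alpha} = N^{-\alpha},
\end{equation*}
whence $\sum_{i} p_i^{\alpha} \geqslant N^{1-\alpha}$. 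Chaining this with the entropy hypothesis gives $N^{1-\alpha} \leqslant 2^{(1-\alpha)k}$.

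The last step, and the only point demanding care, is extracting $N \geqslant 2^{k}$ from $N^{1-\alpha} \leqslant 2^{(1-\alpha)k}$: because $\alpha>1$ the exponent $1-\alpha$ is strictly negative, so the map $x\mapsto x^{1/(1-\alpha)}$ is \emph{decreasing} on the positive reals and \emph{reverses} the inequality, yielding $N\geqslant 2^{k}$. I expect this sign/direction flip to be the main (indeed, essentially the only) subtlety. The two boundary orders are recovered by the same principle: for $\alpha\to\infty$ the min-entropy bound $\max_i p_i \leqslant 2^{-k}$ combined with $1 = \sum_i p_i \leqslant N\max_i p_i$ gives $N\geqslant 2^{k}$ directly, and for $\alpha\to 1$ the Shannon bound $-\sum_i p_i\log p_i \leqslant \log N$ (concavity of $\log$) does the same, so the statement holds uniformly across all R\'{e}nyi orders used in the paper.
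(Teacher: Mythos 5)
Your proof is correct and follows essentially the same route as the paper's: apply Jensen's inequality to the convex map $t\mapsto t^{\alpha}$ over the support to get $\left|\supp{\PrD{X}}\right|^{1-\alpha}\leqslant \sum_{x}\PrD{X}(x)^{\alpha}\leqslant 2^{(1-\alpha)k}$, then use that the exponent $1-\alpha$ is negative to conclude. Your explicit treatment of the boundary orders $\alpha\to 1$ and $\alpha\to\infty$ is a small addition the paper leaves implicit, but the core argument is identical.
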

\noindent Assume that $\max\limits_{Y\in\mathcal{Y}_k} \mathbf{E}D(Y) < 1$ and let $Y$ be a distribution maximizing $\mathbf{E}D(\cdot)$ over the set $\mathcal{Y}_k$. We will show, that $Y$ may be assumed to be uniform if conditioned on the sets $D^{-1}(0)$ and $D^{-1}(1)$. The first part is clear because modifying the distribution $Y$ outside the support of $D$ we do not change the value $\mathbf{E}D(X)$. To prove the second one,
define $\PrD{Y'}(x)$ to be $\frac{1}{\left|D^{-1}(1)\right|}\sum\limits_{x'\in D^{-1}(1)} \PrD{Y}(x') $ if $x\in D^{-1}(1)$ and $\PrD{Y}$ otherwise. By Jensen's inequality we get $\sum\limits_{x\in D^{-1}(1)} \PrD{Y'}(x)^{\alpha} \leqslant \sum\limits_{x\in D^{-1}(1)} \PrD{Y}(x)^{\alpha}$ and thus $\mathbf{H}_{\alpha}(Y') \geqslant \mathbf{H}_{\alpha}(Y)$. Since $D$ is boolean, we also have $\mathbf{E}D(Y') = \mathbf{E}D(Y)$. Therefore,
for some $p=p_D,q=q_D$ we have
\begin{equation}
 \PrD{Y}(x) = p\I{D^{-1}(1)}(x) + q\I{D^{-1}(0)}(x),
\end{equation}
where $p,q$ should be chosen so that $Y$ is a proability measure and satisfies the constraint $\mathbf{H}_{\alpha}(Y)\geqslant k$. These two conditions are exactly equations (\ref{eq:RenyiEntropy_ExtremeDistributions}). Note that since the maximizier $Y \in \cY_k$ for $D$ exists, this system certainly has a solution. To prove that this solution is unique, we observe that after substituting $\gamma = p_D |D|$ the first equation becomes $f(\gamma) = 0$ where $f(\gamma) = \gamma^{\alpha} |D|^{1-\alpha} + (1-\gamma)^{\alpha} \left|D^c\right|^{1-\alpha} - 2^{-k(\alpha-1)}$ and $0 \leqslant \gamma \leqslant 1/|D|$. Observe that the function $f$ is strictly convex and, provided that $k < n-1$, we have $f(0) = \left(2^{n}-|D|\right)^{1-\alpha}-2^{-(\alpha-1)k} < 0$. Therefore, there can be at most one solution $\gamma \geqslant 0$. 
\end{proof}

\subsection{Metric Min-Entropy Against Different Distinguishers}

\begin{reptheorem}{MetricWorstCase_passing_to_realvalued}\label{app:MetricWorstCase_passing_to_realvalued}
Let $X\in\{0,1\}^n$ and $Z\in\{0,1\}^n$ be random variables. Then we have $\Hmtr{X|Z}{\textup{det},\{0,1\},s,\epsilon} = \Hmtr{X|Z}{\textup{det},[0,1],s',\epsilon}$ where $s'\approx s$. 
\end{reptheorem}
\begin{proof}
We need only to show that if $\Hmtr{X|Z}{\textup{det},\{0,1\},s,\epsilon}\geqslant k$ then also
$\Hmtr{X|Z}{\textup{det},[0,1],s',\epsilon} \geqslant k$ for $s'\approx s$. Let $\cY$ be the set of distributions of the random variables of the form $(Y,Z)$ where $Y\in\{0,1\}^n$ and 
$\Hmin{Y|Z}\geqslant k$. Suppose, that $\Hmtr{X|Z}{\textup{det},[0,1],s',\epsilon} < k$. According to the definition, there exists a $[0,1]$-valued function $D$ of complexity $s'$ such that
\begin{equation}\label{ineq:MinEntropy_Distinguishing}
 \mathbf{E}D(X,Z) - \max\limits_{\PrD{Y,Z} \in \cY}\mathbf{E}D(Y,Z) \geqslant \epsilon
\end{equation}
We shall show that $D$ can be replaced by a boolean distinguisher $D'$ of (almost) the same complexity.
Let $\PrD{Y_0,Z}$ be a distribution that maximizies $\mathbf{E}D(\cdot)$ over $\cY$. It means that $p_0 = \PrD{Y_0,Z} $ is a solution of the following \emph{constrained optimization problem} in $\mathbb{R}^{2^{n+m}}$:
\begin{equation}\label{eq:MinEntropy_Optimization}
\begin{array}{rl}
 \underset{p}{\text{maximize}} & \sum\limits_{x,z} p(x,z) \cdot D(x,z)  \\
 \text{s.t.} & \left\{ \begin{array}{rll}
   \sum\limits_{x,z} p(x,z) &  = 1 &   \\
  \sum\limits_{x} p(x,z) & = \PrD{Z}(z), & \text{ for every } z\\
  -p(x,z) & \leqslant     0, &  \text{ for all } x,z \\
  p(x,z)  & \leqslant   2^{-k}\PrD{Z}(z), &  \text{ for all } x,z  
\end{array}  \right.
\end{array}
\end{equation}
where the constraints in this optimization problem describe the set $\cY$. We can assume that $p_0$ is chosen to be flat (for every $x,z$ either $p_0(x,z)/\PrD{Z}(z) = \PrD{Y_0|Z}(x) = 2^{-k}$ or $p_0(x,z) = 0$) as otherwise we would have $p_0 = t p_1 + (1-t)p_2$ where $p_1,p_2\in \cY$ and then either $p = p_1$ or $p = p_2$ gives $\langle D, p \rangle \geqslant \langle D, p_0 \rangle$. The proof will be complete, if we will find a function $D'$ satisfying the following conditions: 
\begin{enumerate}[label=(\emph{\alph{*}}), ref=\alph{*}]
 \item $D'$ is boolean \label{MinEntropy_ChangeDistinguisher_a}
 \item $\PrD{Y_0,Z}$ is a maximizier for $D'$ over $\cY$, (i.e. $\mathbf{E}D'\left(Y_0,Z\right) \geqslant \mathbf{E}D'(Y,Z)$ for all $\PrD{Y,Z} \in \cY$) \label{MinEntropy_ChangeDistinguisher_b}
 \item $\mathbf{E}D'(X,Z) - \mathbf{E}D'(Y_0,Z) \geqslant \epsilon$ \label{MinEntropy_ChangeDistinguisher_c}
 \item $D'$ has the complexity $s$ \label{MinEntropy_ChangeDistinguisher_d}
\end{enumerate}
Consider now the condition in (\ref{MinEntropy_ChangeDistinguisher_b}). It can be rewritten as $\langle D', \PrD{Y_0,Z}-\PrD{Y,Z} \rangle \geqslant 0$ for all $\PrD{Y,Z}\in \cY$ (we indetify functions $D',\PrD{X,Z},\PrD{Y,Z}$ on $\{0,1\}^{2^{n+m}}$ with vectors of $\mathbb{R}^{2^{n+m}}$). 
The set of all such $D'\in \mathbb{R}^{2^{n+m}}$ is the \emph{normal cone of $\cY$ at $\PrD{Y_0,Z}$}. 
\begin{claim}\label{MinEntropy_ConeCondition}
The normal cone of $\cY$ at $p_0=\PrD{Y_0,Z}$, i.e. all real valued functions $D'$ for (\ref{MinEntropy_ChangeDistinguisher_b}), is decribed by the following condition:
there exist the \emph{Lagrange Multipliers}: $\lambda^2(z),\lambda^{3}(x,z),\lambda^{4}(x,z)\geqslant 0$ such that
\begin{equation}\label{eq:MinEntropy_LagrangeMultipliers}
 D'(x,z) = \lambda_2(z) - \lambda_3(x,z) + \lambda_4(x,z)
\end{equation}
and satisfying the so called \emph{complementary slackness condition}: $\lambda_3(x,z),\lambda_4(x,z)$ can be nonzero only if the corresponding costraint is \emph{active}, i.e. if $p_0(x,z) = 0$ or $p_0(x,z) = 2^{-k}\PrD{Z}(z)$ respectively.  
\end{claim}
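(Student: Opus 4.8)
The plan is to recognise Claim \ref{MinEntropy_ConeCondition} as nothing more than the Karush--Kuhn--Tucker optimality condition for the \emph{linear} program (\ref{eq:MinEntropy_Optimization}). Condition (\ref{MinEntropy_ChangeDistinguisher_b}) says exactly that $p_0$ maximises the linear functional $p \mapsto \langle D', p\rangle$ over $\cY$, equivalently that $\langle D', p - p_0\rangle \leqslant 0$ for all $p \in \cY$; this is precisely the statement that $D'$ lies in $N_{\cY}(p_0)$, the normal cone of $\cY$ at $p_0$. Since $\cY$ is cut out by the affine constraints listed in (\ref{eq:MinEntropy_Optimization}), it is a polyhedron, and the whole claim reduces to describing this normal cone through the gradients of the active constraints.

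First I would dispatch the easy inclusion: every $D'$ of the form (\ref{eq:MinEntropy_LagrangeMultipliers}) with the stated sign and slackness conditions does lie in $N_{\cY}(p_0)$. For an arbitrary $p = \PrD{Y,Z} \in \cY$ write
\begin{align*}
\langle D', p - p_0\rangle = & \sum_{z}\lambda_2(z)\sum_{x}\bigl(p(x,z) - p_0(x,z)\bigr) \\
& - \sum_{x,z}\lambda_3(x,z)\bigl(p(x,z) - p_0(x,z)\bigr) + \sum_{x,z}\lambda_4(x,z)\bigl(p(x,z) - p_0(x,z)\bigr).
\end{align*}
The first sum vanishes because every element of $\cY$ shares the marginal $\PrD{Z}$, so $\sum_x p(x,z) = \sum_x p_0(x,z) = \PrD{Z}(z)$. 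In the second sum complementary slackness forces $\lambda_3(x,z) \neq 0$ only where $p_0(x,z) = 0$, and there $p(x,z) \geqslant 0$ yields a nonpositive contribution; in the third sum $\lambda_4(x,z) \neq 0$ only where $p_0(x,z) = 2^{-k}\PrD{Z}(z)$ is maximal, and there $p(x,z) \leqslant 2^{-k}\PrD{Z}(z)$ again yields a nonpositive contribution. Hence $\langle D', p - p_0\rangle \leqslant 0$, which is (\ref{MinEntropy_ChangeDistinguisher_b}).

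The main work is the converse: if $p_0$ maximises $\langle D', \cdot\rangle$ over $\cY$, such multipliers must exist. Here I would invoke the KKT theorem for linear programs (equivalently the polyhedral normal-cone formula, which itself follows from Farkas' lemma applied to the cone of feasible directions of $\cY$ at $p_0$); because all constraints are affine, no constraint qualification is needed and the conditions are both necessary and sufficient. Stationarity of the Lagrangian expresses $D'$ as a combination of the active constraint gradients: the gradient of the marginal equality $\sum_x p(x,z) = \PrD{Z}(z)$ is the indicator of the column $z$, contributing a term $\lambda_2(z)$ constant in $x$ (the multiplier of the redundant normalisation $\sum_{x,z}p(x,z)=1$ being absorbed into it); the gradients $-e_{x,z}$ of the lower bounds contribute $-\lambda_3(x,z)$ with $\lambda_3 \geqslant 0$; and the gradients $e_{x,z}$ of the upper bounds $p(x,z) \leqslant 2^{-k}\PrD{Z}(z)$ contribute $+\lambda_4(x,z)$ with $\lambda_4 \geqslant 0$. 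This is exactly the form (\ref{eq:MinEntropy_LagrangeMultipliers}), and the KKT complementary-slackness conditions are precisely the requirement that $\lambda_3(x,z)$ (resp.\ $\lambda_4(x,z)$) be nonzero only when the bound $p_0(x,z)=0$ (resp.\ $p_0(x,z)=2^{-k}\PrD{Z}(z)$) is active.

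The one delicate point I expect is bookkeeping the sign of $\lambda_2(z)$: it arises from equality (marginal) constraints and is therefore a priori free, so strictly speaking one either states it as unconstrained or, noting that each column is automatically active, rewrites the marginal equalities as the inequalities $\sum_x p(x,z) \leqslant \PrD{Z}(z)$ (which normalisation forces back to equality) in order to read off $\lambda_2(z) \geqslant 0$. Either way the representation (\ref{eq:MinEntropy_LagrangeMultipliers}) with $\lambda_3,\lambda_4\geqslant 0$ and complementary slackness is exactly what is needed downstream, since it shows that $D'$ equals the single value $\lambda_2(z)$ on all coordinates where $p_0(\cdot,z)$ is strictly between its bounds and is used to build the boolean distinguisher by thresholding at $\lambda_2(z)$.
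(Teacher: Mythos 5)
Your proposal is correct and follows essentially the same route as the paper: both reduce condition (\ref{MinEntropy_ChangeDistinguisher_b}) to membership in the normal cone of the polyhedron from (\ref{eq:MinEntropy_Optimization}) and then invoke the standard description of that cone via gradients of active constraints (KKT for a linear program). Your treatment is in fact a bit more complete --- you verify the easy inclusion by direct computation and explicitly flag the sign of $\lambda_2(z)$, which the paper handles only by its one-line remark about replacing the marginal equalities with inequalities.
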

\begin{proof}
We can replace the first two (equaility-type) constraints by the inequalities $\sum\limits_{x,z} p(x,z) \leqslant 1$ and $\sum\limits_{x} p(x,z) \leqslant \PrD{Z}(z)$, as at the maximizier the equaility will be achieved beacuse of $D'(x,z)\geqslant 0$. Moreover, the first inequality can be dropped as it is implied by the second one. Now, the claim follows by standard facts from convex optimization: the normal cone of a set described by linear inequalities (a polyhedron) is a cone generated by the gradients of the `active' constraints. 
\end{proof}
\noindent It is easy to see, that the above can be stated equivalently as follows:
\begin{claim}\label{MinEntropy_ConeCondition2}
The normal cone of $\cY$ at $\PrD{Y_0,Z}$, consists of all real valued functions $D'$ satisfying 
\begin{equation}
 D'\left(x_1,z\right) \geqslant D'\left(x_2,z\right) \quad \textup{for every  } z,x_1,x_2 \textup{ such that } \PrD{Y_0,Z=z}\left(x_1,z\right) = 2^{-k},\ \PrD{Y_0,Z=z}\left(x_2,z\right) = 0
\end{equation}
\end{claim}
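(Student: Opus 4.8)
The plan is to show that the normal-cone description via Lagrange multipliers from Claim \ref{MinEntropy_ConeCondition} and the coordinatewise monotonicity condition of Claim \ref{MinEntropy_ConeCondition2} cut out exactly the same set of functions $D'$. The single structural fact driving everything is the \emph{flatness} of the maximizer $p_0 = \PrD{Y_0,Z}$ already exploited above: for each $z$ with $\PrD{Z}(z) > 0$ and each $x$ we have $p_0(x,z) \in \{0,\,2^{-k}\PrD{Z}(z)\}$, so at every coordinate exactly one of the two box constraints $-p(x,z)\leqslant 0$ and $p(x,z)\leqslant 2^{-k}\PrD{Z}(z)$ is active. Writing $S(z) = \{x:\, \PrD{Y_0|Z=z}(x) = 2^{-k}\}$ for the support slice, I would reformulate the monotonicity condition as the single inequality $\min_{x\in S(z)} D'(x,z) \geqslant \max_{x\notin S(z)} D'(x,z)$ for every $z$.

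For the forward implication I would start from a function $D'$ in the normal cone, i.e.\ admitting the representation $D'(x,z) = \lambda_2(z) - \lambda_3(x,z) + \lambda_4(x,z)$ with nonnegative multipliers obeying complementary slackness. Fixing $z$ and splitting on membership in $S(z)$: for $x \in S(z)$ the nonnegativity constraint is inactive, so $\lambda_3(x,z) = 0$ and $D'(x,z) = \lambda_2(z) + \lambda_4(x,z) \geqslant \lambda_2(z)$; for $x \notin S(z)$ the upper box constraint is inactive, so $\lambda_4(x,z) = 0$ and $D'(x,z) = \lambda_2(z) - \lambda_3(x,z) \leqslant \lambda_2(z)$. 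Chaining these through the common threshold $\lambda_2(z)$ gives $D'(x_1,z) \geqslant \lambda_2(z) \geqslant D'(x_2,z)$ for every $x_1 \in S(z)$, $x_2 \notin S(z)$, which is precisely the condition of Claim \ref{MinEntropy_ConeCondition2}.

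For the converse I would reverse this construction coordinatewise. The monotonicity condition says exactly that the interval $[\max_{x\notin S(z)} D'(x,z),\, \min_{x\in S(z)} D'(x,z)]$ is nonempty for each $z$, so I can pick any threshold $\lambda_2(z)$ inside it and set $\lambda_4(x,z) = D'(x,z) - \lambda_2(z)$ on $S(z)$ (zero elsewhere) and $\lambda_3(x,z) = \lambda_2(z) - D'(x,z)$ off $S(z)$ (zero elsewhere); by the choice of $\lambda_2(z)$ these are nonnegative and, by the flatness dichotomy, each is supported only on active constraints, which recovers the representation of Claim \ref{MinEntropy_ConeCondition}. I do not anticipate a real obstacle, only a bookkeeping subtlety: the multiplier $\lambda_2(z)$ attached to the fixed-marginal equality constraint is naturally unrestricted in sign, and it can be taken nonnegative exactly when $D'$ is nonnegative on $S(z)$ — which holds automatically in the regime of interest, where $D'$ is boolean (or $[0,1]$-valued). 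Degenerate slices ($\PrD{Z}(z) = 0$, or $S(z) = \{0,1\}^n$ when $k=n$, making the off-support set empty and the condition vacuous) are handled by the same threshold choice.
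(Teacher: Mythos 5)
Your proposal is correct and follows exactly the route the paper intends: the paper states Claim \ref{MinEntropy_ConeCondition2} as an immediate consequence of Claim \ref{MinEntropy_ConeCondition} ("it is easy to see"), and your argument is precisely the omitted verification, using the flatness of $\PrD{Y_0,Z}$ so that complementary slackness kills $\lambda_3$ on the support slice and $\lambda_4$ off it, with $\lambda_2(z)$ serving as the common threshold in both directions. Your side remark about the sign of $\lambda_2(z)$ is a genuine (minor) imprecision in the paper's Claim \ref{MinEntropy_ConeCondition} itself, and you resolve it correctly by noting that nonnegativity of $D'$ on the support slice (automatic for boolean or $[0,1]$-valued distinguishers) lets one choose $\lambda_2(z)\geqslant 0$.
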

\noindent The definition of $\PrD{Y_0,Z}$ implies that $D$ belongs to the normal cone of $\cY$ at $\PrD{Y_0,Z}$ and thus satisfies the assertion of Claim \ref{MinEntropy_ConeCondition}. From Claim \ref{MinEntropy_ConeCondition2} it follows that
also every treshold of $D$: any function of the form $D'(x,z) = \I{ \left\{ D(x,z) > t \right\}}$ is also in the normal cone. Thus, every such $D'$ satisfies (\ref{MinEntropy_ChangeDistinguisher_a}),(\ref{MinEntropy_ChangeDistinguisher_b}), and (\ref{MinEntropy_ChangeDistinguisher_d}). Finally, since have 
\begin{equation}
 \epsilon \leqslant \mathbf{E}D(X,Z) - \mathbf{E}D\left(Y_0,Z\right) = \int\limits_{t\in[0,1]}\left( \PrD{X,Z}{\left[D(X,Z) > t\right] } - \PrD{X,Z}{\left[D(X,Z) > t\right] }\right),
\end{equation}
for some $t$ the corresponding function $D'$ satisfies also (\ref{MinEntropy_ChangeDistinguisher_b}). This proves the first part of the theorem.

To prove the second one, suppose that there exists a $[0,1]$-valued function $D$ (possibly computationally ineffecient) satisfying (\ref{ineq:MinEntropy_Distinguishing}). As in the proof od the first part, let $\PrD{Y_0,Z}$ be a flat distribution maximizing $\mathbf{E}D(\cdot)$ over $\cY$. Due to the first part of the theorem, we may assume that $D$ is boolean. Suppose now, that $D\left(x_0,z_0\right) = 0$ for some $\left(x_0,z_0\right) \not\in \supp{X,Z}$. By (\ref{MinEntropy_ConeCondition2}), we obtain that $D(x,z) = 0$ for all $(x,z) \in \supp{X,Z}$. But then we have $\mathbf{E}D\left(Y_0,Z\right) = 1$ which contraddicts to (\ref{ineq:MinEntropy_Distinguishing}) as $\epsilon > 0$.
\end{proof}

\begin{reptheorem}{MetricRelaxed_passing_to_realvalued}\label{app:MetricRelaxed_passing_to_realvalued}
Let $X\in\{0,1\}^n$ and $Z\in\{0,1\}^n$ be random variables. Then we have $\mathbf{H}^{\textup{Metric-rlx}, \det[0,1], s,\epsilon}{X|Z}= \mathbf{H}^{\textup{Metric-rlx}, \det[0,1], s',\epsilon}{X|Z}$ where $s'\approx s$. 
\end{reptheorem}
\begin{proof}
The proof follows easily by inspecting the previous proof for the case of the metric min-entropy. Namelly, for the relaxed definition we only need to remove the condition $ \sum\limits_{x} p(x,z)  = \PrD{Z}(z)$ from the description of the optimization problem given by equation \ref{eq:MinEntropy_Optimization}.
\end{proof}

\subsection{Shannon Entropy against different distinnguishers}

\begin{proof}
Enumerate elements of $\{0,1\}^k$ by $x_1,x_2,\ldots $ where $x_d$ is the $k$-digit
binary expansion of $d-1$. We will construct the distribution $X$ explicity in the following way:
for every $d = 1,\ldots, 2^k$ let $p = p(d)$ be a solution of (\ref{eq:ShannonEntropy_ExtremeDistributions}) (we will prove later that this solution is unique). Define the sequence $\gamma(d) = p(d)\cdot d$.
Let $X$ be a distribution on $\{0,1\}^k$ defined by $\PrD{X}(x_1) = \gamma_1$ and $\PrD{X}\left(x_d\right) = \gamma\left(d\right) - \gamma\left(d-1\right)$. To prove that this construction works we need to show that $X$ is a probability measure and satisfy claimed estimates on its entropy and pseudoentropy. This task involves a lot of calculus to study the solutions of (\ref{eq:ShannonEntropy_ExtremeDistributions}). The proof will be divided into subsequently claims.

\begin{claim}
Let $n > 1$ and $k< n-1$. Then for every real number $d\in \left[1,2^k\right]$, the system (\ref{eq:ShannonEntropy_ExtremeDistributions}) has a unique solution $(p,q)=(p(d),q(d))$. Moreover, for $d<2^{k}$, this solution is a smooth function of $d$. 
\end{claim}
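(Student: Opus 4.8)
The plan is to collapse the two-variable system into a single scalar equation and then read off everything from elementary concavity together with the implicit function theorem, reusing the reparametrization that already appears in the proof of Lemma \ref{RenyiEntropy_ExtremeDistributions}. Writing $|D|=d$ and $|D^c|=2^n-d$, and setting $\gamma = p\,d$ (so that the normalization constraint forces $q=(1-\gamma)/(2^n-d)$), the first equation of (\ref{eq:ShannonEntropy_ExtremeDistributions}) becomes $H(\gamma)=k$, where
\begin{equation*}
 H(\gamma) = h(\gamma) + \gamma\log d + (1-\gamma)\log(2^n-d), \qquad h(\gamma) = -\gamma\log\gamma - (1-\gamma)\log(1-\gamma).
\end{equation*}
Here $\gamma$ ranges over $[0,1]$, and the constraint $p,q\ge 0$ corresponds exactly to $\gamma\in[0,1]$ through a bijection, so the original system is solvable uniquely iff $H(\gamma)=k$ has a unique root. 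Since the binary entropy $h$ is strictly concave and the two remaining terms are linear in $\gamma$, the function $H$ is strictly concave on $[0,1]$; this is the structural fact driving the whole argument.

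Next I would compute the endpoint values $H(0)=\log(2^n-d)$ and $H(1)=\log d$ and feed in the hypotheses. Because $d\le 2^k$ and $k<n-1$ we get $2^n-d>2^{n-1}$, hence $H(0)>n-1>k$; and $H(1)=\log d\le k$, with equality only when $d=2^k$. A strictly concave function on $[0,1]$ whose left endpoint value strictly exceeds $k$ while its right endpoint value is at most $k$ meets the level $k$ at exactly one interior point: the super-level set $\{\gamma:H(\gamma)\ge k\}$ is an interval containing $0$, so it has the form $[0,\gamma(d)]$, and $\gamma=\gamma(d)$ is the only root of $H(\gamma)=k$, lying on the strictly decreasing branch past the maximizer $\gamma^{*}=d\,2^{-n}$. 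This yields existence and uniqueness at once, and shows the unique root is automatically the \emph{greatest} solution demanded by Lemma \ref{RenyiEntropy_ExtremeDistributions}; the degenerate case $d=2^k$ is dispatched separately by a chord estimate, which forces $H\ge k$ on all of $[0,1]$ with equality only at $\gamma=1$.

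For smoothness on $d<2^k$ I would apply the implicit function theorem to $F(\gamma,d)=H(\gamma;d)-k$, which is real-analytic on the open region $\gamma\in(0,1)$, $d\in(1,2^k)$. The nondegeneracy input is $\partial_\gamma F(\gamma(d),d)=H'(\gamma(d))\neq 0$: since $\gamma(d)$ sits strictly beyond the maximizer $\gamma^{*}$ on the decreasing branch, $H'(\gamma(d))<0$, so the theorem gives that $\gamma(d)$, and hence $p(d)=\gamma(d)/d$ and $q(d)=(1-\gamma(d))/(2^n-d)$, are smooth (in fact analytic) functions of $d$.

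The main obstacle is the boundary behaviour near $d=2^k$. There the root migrates to $\gamma=1$, where $q=0$ and $H'(\gamma)\to-\infty$, so the implicit function theorem cannot be invoked; this is exactly why smoothness is asserted only for $d<2^k$, and I must verify that for each fixed $d$ strictly below $2^k$ the inequality $H(1)=\log d<k$ keeps the root $\gamma(d)$ strictly inside $(\gamma^{*},1)$, so that $H'(\gamma(d))$ stays finite and strictly negative. The only other delicate point is establishing $H(0)>k$ rigorously from $k<n-1$, including the extreme case $d=2^k$ where $\log(2^n-2^k)=k+\log(2^{n-k}-1)>k$, since this strict inequality is precisely what excludes a second root on the increasing branch and thereby forces uniqueness.
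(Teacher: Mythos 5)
Your proposal is correct and follows essentially the same route as the paper: the same reparametrization $\gamma = pd$ reducing the system to a single scalar equation, the same endpoint evaluations $\log(2^n-d)>k\geqslant \log d$ for existence, the same convexity/concavity structure (your concave $H$ is the negative of the paper's convex $F$, whose derivative sign the paper computes explicitly) for uniqueness, and the implicit/inverse function theorem for smoothness on $d<2^k$. Your treatment of the boundary case $d=2^k$ and of why the nondegeneracy condition holds strictly inside is slightly more explicit than the paper's, but the argument is the same.
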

\begin{proof}
The proof will be splited into three parts \newline
\underline{The existence of a solution}. First, we parametrize the solutions of the second equation of (\ref{eq:ShannonEntropy_ExtremeDistributions}) by $p(\gamma) = \frac{\gamma}{d}$ and $q(\gamma)= \frac{1-\gamma}{2^n-d}$ for $\gamma\in [0,1]$. Now, the left side of the first equation of (\ref{eq:ShannonEntropy_ExtremeDistributions}), can be viewed as a function $F$ of $\gamma$. Namely, for fixed $d$, define
\begin{equation}\label{eq:ShanonEntropy_ExtremeDistributions_AuxiliaryFunction}
 F(\gamma) = \gamma\log \gamma + (1-\gamma)\log(1-\gamma) - \gamma\log d - (1-\gamma)\log\left(2^{n}-d\right)
\end{equation}
Then the system \ref{eq:ShannonEntropy_ExtremeDistributions} is equivalent to the equation
\begin{equation}\label{eq:ShanonEntropy_ExtremeDistributions_SimplifiedEquation}
 F(\gamma) = -k,\quad \gamma \in [0,1]
\end{equation}
Observe that
\begin{equation}
 F(0) = -\log\left(2^{n}-d\right) <  -k \leqslant -\log d = F(1)
\end{equation}
and therefore, by the Darboux Principle, we conclude that with some $\gamma\in[0,1]$ we have $F(\gamma) = -k$. It follows that there exists numbers $p,q$ being a solution of (\ref{eq:ShannonEntropy_ExtremeDistributions}). 
\newline
\underline{The uniquness and smoothness}. We calculate the derivative of $F$ with respect to $\gamma$:
\begin{align}
 \frac{\partial F}{\partial \gamma} = & \log \gamma - \log(1-\gamma) + 
\log\left(2^n-d\right) - \log d  \\
 = & \log\left(\frac{\gamma}{1-\gamma}\right) - \log\left(\frac{d}{2^n -d}\right).
\end{align}
Hence, the function $F(\gamma)$ increases if $\gamma > \frac{d}{2^{n}}$ and decreases for $\gamma < \frac{d}{2^{n}}$. Since $F(0) < -k$ there cannot be a solution of $F(\gamma) = -k$ for $\gamma < \frac{d}{2^n}$.
Therefore, the solution $p(\gamma),q(\gamma)$ exists only for some $\gamma > \frac{d}{2^n}$ which satisfy $F(\gamma)=-k$ and it is unique as the function $F(\gamma)$ is then increasing. Finally, this number $\gamma = \gamma(d)$ a $C^{\infty}$ function of $d$ by the Inverse Function Theorem, if only $\gamma(d) < 1$ or equivalently if $d<2^{k}$.
\end{proof}
\begin{claim}
Let $p(d),q(d)$ be the unique solution of the system (\ref{eq:ShannonEntropy_ExtremeDistributions}). Define $\gamma(d) = p(d)\cdot d$. Then
\begin{equation}
 \frac{\partial \gamma }{\partial d} = \frac{p-q}{\log p - \log q}
\end{equation}
and
\begin{equation}
 \frac{\partial^2 \gamma}{\partial d^2} = -\frac{\frac{d}{p}\left(\frac{\partial p}{\partial d}\right)^2 + \frac{2^n-d}{q}\left(\frac{\partial q}{\partial d}\right)^2}{\log p(d) - \log q(d)}
\end{equation}
Especially, $\gamma(d)$ is a concave function.
\end{claim}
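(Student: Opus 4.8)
The plan is to treat the defining relation for $\gamma(d)$ as an implicit equation in the two variables $(\gamma,d)$ and apply implicit differentiation twice. Recall that after the substitution $p=\gamma/d$, $q=(1-\gamma)/(2^n-d)$, the system (\ref{eq:ShannonEntropy_ExtremeDistributions}) is equivalent to $F(\gamma,d)=-k$ with $F$ as in (\ref{eq:ShanonEntropy_ExtremeDistributions_AuxiliaryFunction}), which I now regard as a smooth function of both arguments on the region $d<2^k$ (smoothness of $\gamma(d)$ there was established in the previous claim). Writing $\gamma'=\partial\gamma/\partial d$, the whole computation rests on the two normalization relations $pd=\gamma$ and $q(2^n-d)=1-\gamma$.

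First I would record the first-order partials. A direct calculation gives $F_\gamma=\log(\gamma/d)-\log\bigl((1-\gamma)/(2^n-d)\bigr)=\log p-\log q$ and $F_d=-\gamma/d+(1-\gamma)/(2^n-d)=q-p$. Differentiating $F(\gamma(d),d)=-k$ gives $F_\gamma\,\gamma'+F_d=0$, hence $\gamma'=-F_d/F_\gamma=(p-q)/(\log p-\log q)$, which is the first assertion. Differentiating $p=\gamma/d$ and $q=(1-\gamma)/(2^n-d)$ also yields the auxiliary formulas $p'=(\gamma'-p)/d$ and $q'=(q-\gamma')/(2^n-d)$, which I will need below.

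For the second derivative, differentiating $F_\gamma\gamma'+F_d=0$ once more produces the standard formula $\gamma''=-\bigl[F_{\gamma\gamma}(\gamma')^2+2F_{\gamma d}\gamma'+F_{dd}\bigr]/F_\gamma$. I would then compute $F_{\gamma\gamma}=\tfrac{1}{\gamma}+\tfrac{1}{1-\gamma}$, $F_{\gamma d}=-\tfrac1d-\tfrac1{2^n-d}$, and $F_{dd}=\tfrac{\gamma}{d^2}+\tfrac{1-\gamma}{(2^n-d)^2}$. The crux is the algebraic identity showing that the bracketed numerator equals the claimed symmetric expression. Using $pd=\gamma$ and $q(2^n-d)=1-\gamma$, one finds $\tfrac{d}{p}(p')^2+\tfrac{2^n-d}{q}(q')^2=\tfrac{(\gamma'-p)^2}{\gamma}+\tfrac{(\gamma'-q)^2}{1-\gamma}$; expanding the squares, the coefficient of $(\gamma')^2$ is exactly $F_{\gamma\gamma}$, the coefficient of $\gamma'$ is $2F_{\gamma d}$ (since $p/\gamma=1/d$ and $q/(1-\gamma)=1/(2^n-d)$), and the constant term is $F_{dd}$ (since $p^2/\gamma=p/d$ and $q^2/(1-\gamma)=q/(2^n-d)$). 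Matching term by term gives the second assertion. This bookkeeping is the main obstacle: everything collapses only because of the two normalization identities, which turn a messy quotient-rule computation into the clean form.

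Finally, concavity follows from a sign analysis of this expression. The numerator $\tfrac{d}{p}(p')^2+\tfrac{2^n-d}{q}(q')^2$ is a sum of nonnegative terms (squares divided by positive quantities), while the denominator $\log p-\log q$ is strictly positive: the admissible root constructed in the previous claim satisfies $\gamma>d/2^n$, and cross-multiplying shows $\gamma>d/2^n$ is equivalent to $p>q$. Hence $\gamma''\leqslant 0$ throughout $[1,2^k]$, so $\gamma$ is concave, completing the claim.
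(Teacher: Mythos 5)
Your proof is correct and follows essentially the same route as the paper: both differentiate the implicit relation $F(\gamma(d),d)=-k$ twice and use the normalization identities $pd=\gamma$, $q\left(2^n-d\right)=1-\gamma$ to collapse the second-derivative numerator into $\frac{d}{p}(p')^2+\frac{2^n-d}{q}(q')^2$ — you organize the second differentiation through the Hessian formula for implicit functions, while the paper differentiates the first-order identity $\gamma'(\log p-\log q)=p-q$ directly, but the underlying algebra is identical. Your justification that $\log p-\log q>0$ (deduced from $\gamma>d/2^n$, which was established in the uniqueness argument) is in fact slightly more careful than the paper's bare ``Clearly, $\gamma''<0$'', which silently relies on the separate, later claim that $p>q$.
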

\begin{proof}
For every $d$ we have $F(\gamma(d)) = -k$. Deriverating this equation with respect to $d$, we obtain
\begin{align}
 0 = \frac{\partial{F(\gamma)}}{\partial{d}} = & \ \gamma'\log \gamma - \gamma'\log(1-\gamma) + \gamma'\log\left(2^n-d\right)- \gamma'\log d - \frac{\gamma}{d} + \frac{1-\gamma}{2^n-d} \\
 = & \ \gamma'\log\left(\frac{\gamma}{d}\right) - \gamma'\log\left(\frac{1-\gamma}{2^n-d}\right) - \left(\frac{\gamma}{d} - \frac{1-\gamma}{2^n-d} \right) \\
 = & \ \gamma'\left(\log p-\log q\right) - ( p -  q)
\end{align}
From this we obtain the first identity. Taking the second derivative with respect to $d$ we get
\begin{align}
 0 = & \gamma''( \log p - \log q ) + \gamma'\left( p'/p - q'/q \right) - (p' - q') \\
   = & \gamma''( \log p - \log q ) + d\left(p'\right)^2 / p + \left(2^n-d\right)\left(q'\right)^2/q  
\end{align}
Clearly, $\gamma'' < 0$. 
\end{proof}

\begin{claim}
For every $d\in \left[1,2^k\right]$ we have $p>q$. 
\end{claim}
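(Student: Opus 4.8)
The plan is to observe that the desired inequality $p > q$ is, after clearing denominators, \emph{exactly} the inequality $\gamma > d/2^n$ that was already isolated in the preceding uniqueness argument. I would start from the parametrization established there, namely $p = \gamma/d$ and $q = (1-\gamma)/(2^n - d)$, where $\gamma = \gamma(d)$ is the unique solution of $F(\gamma) = -k$.

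First I would record that the denominators are strictly positive: since $1 \leqslant d \leqslant 2^k$ and $k < n-1$, we have $0 < d < 2^n$, so both $d > 0$ and $2^n - d > 0$. Hence I may cross-multiply without reversing the inequality. Writing out $p > q$ gives
\begin{equation*}
 \frac{\gamma}{d} > \frac{1-\gamma}{2^n - d} \iff \gamma\left(2^n - d\right) > (1-\gamma) d \iff \gamma \cdot 2^n > d \iff \gamma > \frac{d}{2^n},
\end{equation*}
the cross terms $\gamma d$ cancelling on both sides.

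It then remains only to invoke the fact, already proved when establishing uniqueness, that the relevant solution satisfies $\gamma > d/2^n$: since $F$ is strictly decreasing on $\left[0, d/2^n\right]$ with $F(0) = -\log\left(2^n - d\right) < -k$, the equation $F(\gamma) = -k$ has no root below $d/2^n$, so its unique root lies strictly above it. This immediately yields $p > q$ for every $d \in \left[1, 2^k\right)$; at the endpoint $d = 2^k$ one has $\gamma = 1$, hence $q = 0 < p$, so the inequality persists on the whole closed interval. There is no genuine obstacle here — the content is entirely a restatement of the branch selection already performed — and the only point requiring care is confirming the positivity of $2^n - d$ (guaranteed by $k < n-1$) so that the cross-multiplication is valid.
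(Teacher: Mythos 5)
Your proof is correct, but it follows a genuinely different route from the paper's. The paper argues by sign permanence: if $p=q$ at some $d$, then the normalization $pd+q(2^n-d)=1$ forces $p=q=2^{-n}$, which contradicts the entropy equation because $k<n$; since $p$ and $q$ are continuous in $d$, the sign of $p-q$ is therefore constant, and it is positive at the endpoint $d=2^k$ where $(p,q)=(2^{-k},0)$. You instead work pointwise: the algebraic identity
\begin{equation*}
 \frac{\gamma}{d} > \frac{1-\gamma}{2^n-d} \iff \gamma > \frac{d}{2^n}
\end{equation*}
reduces the claim to the branch selection $\gamma(d)>d/2^n$ that was already carried out in the uniqueness argument (where it was shown that $F$ decreases below $d/2^n$, that $F(0)<-k$, and hence that the root must lie above the minimizer). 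Your approach is more self-contained in the sense that it needs no continuity of the solution in $d$ --- a point on which the paper is slightly delicate, since smoothness was only established for $d<2^k$ --- and it makes transparent \emph{why} $p>q$ holds (the maximizing distribution must overweight the set where $D=1$ relative to uniform). The paper's argument is shorter on the page but leans on the global structure of the solution curve. The only care your version requires, which you correctly address, is that $F(d/2^n)=-n\neq -k$ so the root is strictly above $d/2^n$, and that the endpoint $d=2^k$ (where $\gamma=1$ and $q=0$) is handled separately.
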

\begin{proof}
Suppose that $p=q$ for some $d$. Then $p=q=2^{-n}$ what contraddicts to the first equation. 
Since $p$ and $q$ are continous with respecto to $d$, we have either $p>q$ or $p<q$. The first holds for $d=2^{k}$.
\end{proof}

\begin{claim}
Suppose that $k < c n$ for some sufficiently small absolute constant $c$. Then we have 
\begin{equation}\label{eq:Shannon_AuxiliaryFunction_DerivativeAsimptotic}
 \gamma'(d) = \bigO{ \frac{n-k}{d(n-\log d)^2} }
\end{equation}
\end{claim}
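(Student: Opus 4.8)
The plan is to estimate the explicit derivative formula $\frac{\partial \gamma}{\partial d} = \frac{p-q}{\log p - \log q}$ from the preceding claim by bounding the numerator from above and the denominator from below. Since $p > q > 0$ (by the earlier claim), we have $\gamma'(d) > 0$, so the absolute value hidden in the $\bigO{\cdot}$ is immaterial and it suffices to produce an upper bound. With $p = \gamma/d$ and $q = (1-\gamma)/(2^n-d)$ the numerator is controlled trivially: $p - q \leqslant p = \gamma/d \leqslant 1/d$. Hence essentially all the work goes into showing that the denominator has order $n - \log d$, together with the elementary observation that, because $k < cn$, the quantities $n-\log d$ and $n-k$ agree up to a constant factor.

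The crucial preliminary step is to show that $\gamma$ stays bounded away from $1$. Writing $\beta = 1-\gamma$, I would prove $\beta \leqslant \frac{k}{n-1}$, which for small $c$ forces $\beta$ below a constant strictly less than $1$. This follows directly from the defining equation $F(\gamma(d)) = -k$: rewriting it in terms of $\beta$ gives
\[
 h(\beta) + (1-\beta)\log d + \beta\log\left(2^n-d\right) = k, \qquad h(\beta) = -\beta\log\beta - (1-\beta)\log(1-\beta).
\]
Dropping the two nonnegative terms $h(\beta)$ and $(1-\beta)\log d$ yields $\beta\log(2^n-d)\leqslant k$, and since $d\leqslant 2^k\leqslant 2^{n-1}$ implies $\log(2^n-d)\geqslant n-1$, the bound $\beta \leqslant k/(n-1)$ follows.

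With this in hand the denominator is bounded below by expanding
\[
 \log p - \log q = \log\left(2^n-d\right) - \log d - \log\beta + \log(1-\beta).
\]
Here $\log(2^n-d) - \log d \geqslant (n - \log d) - 1$, the term $-\log\beta$ is nonnegative (as $0 < \beta < 1$), and $\log(1-\beta) \geqslant -1$ by the previous step; hence $\log p - \log q \geqslant (n-\log d) - 2 \geqslant \tfrac{1}{2}(n-\log d)$ for large $n$, using $n - \log d \geqslant n - k \geqslant (1-c)n$. Combining with the numerator bound gives $\gamma'(d) \leqslant \frac{2}{d(n-\log d)}$.

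Finally I would convert this into the stated form using $k < cn$: since $\log d \geqslant 0$ we have $n - \log d \leqslant n \leqslant \frac{1}{1-c}(n-k)$, so $\frac{1}{n-\log d} \leqslant \frac{1}{1-c}\cdot\frac{n-k}{(n-\log d)^2}$ and therefore $\gamma'(d) \leqslant \frac{2}{1-c}\cdot\frac{n-k}{d(n-\log d)^2} = \bigO{\frac{n-k}{d(n-\log d)^2}}$. The only genuinely delicate point will be the estimate on $\beta$, i.e.\ controlling the solution of the transcendental entropy equation so that $\gamma$ is bounded away from $1$; every other step is routine manipulation of the explicit derivative formula.
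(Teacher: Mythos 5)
Your proof is correct, but it takes a genuinely different route from the paper's. The paper differentiates the rearranged defining equation $\gamma=\bigl(\log(2^n-d)-k+\mathbf{H}(\gamma)\bigr)/\bigl(\log(2^n-d)-\log d\bigr)$ with respect to $d$, obtaining an implicit linear relation in which $\gamma'$ appears on both sides; it then estimates the inhomogeneous term as $\Theta\left(\frac{n-k}{d(n-\log d)^2}\right)$ and shows the coefficient multiplying $\gamma'$ on the right is either negative or $\bigO{\frac{\log n}{n-k}}$, splitting into cases according to whether $\gamma(d)$ is below or above $\tfrac12$, before solving for $\gamma'$. You instead start from the closed-form derivative $\gamma'=\frac{p-q}{\log p-\log q}$ established in the immediately preceding claim, bound the numerator trivially by $1/d$, and reduce everything to the single a priori estimate $1-\gamma\leqslant k/(n-1)$, which you extract cleanly from $F(\gamma)=-k$ by dropping the two nonnegative terms $h(\beta)$ and $(1-\beta)\log d$. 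This buys you two things: you avoid the implicit equation for $\gamma'$ entirely, and your bound $\gamma\geqslant 1-2c>\tfrac12$ shows that the paper's case $\gamma(d)<\tfrac12$ is in fact vacuous under the hypothesis $k<cn$, so the case analysis (and the appeal to $\gamma(1)=\frac{n-k}{n}+\bigO{\frac1n}$) is unnecessary. The only loss is that your argument yields only the upper bound $\bigO{\cdot}$, whereas the paper's computation also exhibits the matching lower-order behaviour of the leading term; since the claim asserts only an upper bound, and since under $k<cn$ one has $n-\log d=\Theta(n-k)$ so that $\frac{1}{n-\log d}$ and $\frac{n-k}{(n-\log d)^2}$ are of the same order, this costs you nothing here. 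All the individual steps you give check out.
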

\begin{proof}
Recall, that the number $\gamma(d)$ is a solution of the equation $F(\gamma(d)) = -k$ where $F$ is a function defined by equation
(\ref{eq:ShanonEntropy_ExtremeDistributions_AuxiliaryFunction}). This equation may be rewriten as 
\begin{equation}\label{eq:Shannon_AuxiliaryFunction_v2}
 \gamma = \frac{\log \left(2^n-d\right) - k + \mathbf{H}(\gamma)}{\log\left(2^n-d\right) - \log d}
\end{equation}
where $\mathbf{H}(\gamma) = -\gamma\log \gamma -(1-\gamma)\log (1-\gamma)$ is the Shannon Entropy of a random variable taking two values with probabilities $\gamma$ and $1-\gamma$ respectively. 
Since $d \leqslant 2^{k}\leqslant 2^{n-2}$, we have the following estimates
\begin{align}
 \log\left(2^{n}-d\right)-\log d \geqslant n - \log d -1 \geqslant \frac{n-\log d}{2} \label{ineq:estimate1}\\
 1\geqslant 1-2^{-n}d \geqslant \frac{1}{2} \label{ineq:estimate2}
\end{align}
Thus, by (\ref{ineq:estimate1}), (\ref{ineq:estimate2}) and the fact that $\mathbf{H}(\gamma)\in [0,1]$, we get
\begin{align}\label{ineq:estimate3}
 \gamma(d) = \Theta\left( \frac{n-k}{n-\log d} \right)
\end{align}
Differentiating with respect to $d$ at a point $d<2^k$ we obtain
\begin{equation}
 \gamma'(d) = \log(e)\frac{\left(2^n-d\right) \log \left(2^n-d\right)+d \log (d)-k
   2^n + 2^{n}\mathbf{H}(\gamma)}{d \left(2^n-d\right) \left(\log
   \left(2^n-d\right)-\log (d)\right)^2} - \frac{\gamma'(d)\log\left( \frac{\gamma(d)}{1-\gamma(d)} \right)}{\log
   \left(2^n-d\right)-\log (d)}
\end{equation}
From the inequalities (\ref{ineq:estimate1}) and (\ref{ineq:estimate2}) it follows that the first term in the expression above is equal to $\Theta\left( \frac{n-k}{d(n-\log d)^2} \right)$. Now we will estimate the second term.
Consider the case $\gamma(d) < \frac{1}{2}$. Then
\begin{align}
 \left| \frac{\log\left( \frac{\gamma(d)}{1-\gamma(d)} \right)}{\log
   \left(2^n-d\right)-\log (d)} \right| \leqslant \left| \frac{\log\left( \frac{\gamma(1)}{1-\gamma(1)} \right)}{\log
   \left(2^n-2^k\right) - k}  \right| = \frac{\log \left(\frac{n-k}{k} \right) + \bigO{1}}{n-k -1} = \bigO{\frac{\log n}{ n-k}}
\end{align}
where we have used the fact that $\gamma(1) = \frac{n-k}{n} + \bigO{\frac{1}{n}}$ implied by (\ref{eq:Shannon_AuxiliaryFunction_v2}), and the assumption $k\leqslant n-1$. 
If $\gamma(d) > \frac{1}{2}$ then the second term is negative. 
Thus, provided that $k < c n$ for sufficiently small constant $c$, the result follows.
\end{proof}

\begin{claim}
We have $ \mathbf{H}_{1}^{\textup{Metric},\{0,1\},0}(X) = k $.
\end{claim}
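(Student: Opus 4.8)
The plan is to invoke the characterization of Metric Shannon Entropy from Corollary~\ref{MetricShannonEntropy_Characterization}, which reduces the statement to comparing, for every boolean distinguisher $D$, the mass $\mathbf{E}D(X)$ against $\gamma(|D|)=p_D|D|$, where $p_D$ solves the system~(\ref{eq:ShannonEntropy_ExtremeDistributions}) at the relevant threshold. Both inequalities $\mathbf{H}_1^{\textup{Metric},\{0,1\},0}(X)\geqslant k$ and $\leqslant k$ then follow from the monotonicity and concavity of $\gamma$ already proven in the preceding claims, and the overall argument closely parallels the one given above for the collision entropy.

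First I would establish the lower bound. Since $\gamma$ is concave (the preceding claim shows $\gamma''<0$), the increments $\PrD{X}(x_d)=\gamma(d)-\gamma(d-1)$ form a nonincreasing sequence, so $\PrD{X}(x_1)\geqslant\PrD{X}(x_2)\geqslant\cdots$. Consequently, for any boolean $D$ with $|D|=d\leqslant 2^{k}$ the sum $\mathbf{E}D(X)=\sum_{x:D(x)=1}\PrD{X}(x)$ is maximized by charging the $d$ atoms of largest mass, namely $x_1,\ldots,x_d$, whence $\mathbf{E}D(X)\leqslant\sum_{i=1}^{d}\PrD{X}(x_i)=\gamma(d)=p_D|D|$. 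Distinguishers with $|D|>2^{k}$ are handled trivially, as for them the extreme value equals $1\geqslant\mathbf{E}D(X)$. By Corollary~\ref{MetricShannonEntropy_Characterization} with $\epsilon=0$ this gives $\mathbf{H}_1^{\textup{Metric},\{0,1\},0}(X)\geqslant k$.

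For the upper bound I would exhibit a single distinguisher witnessing tightness. Fix any $d\leqslant 2^{k}$ and take $D=\mathbf{1}_{\{x_1,\ldots,x_d\}}$; by construction $\mathbf{E}D(X)=\gamma(d)=p_D|D|$ exactly, so $X$ sits precisely on the boundary of the region for threshold $k$. Now fix an arbitrary $k'>k$ and let $\gamma^{(k')}(d)$ denote the corresponding extreme value. Writing the defining constraint as $F(\gamma)=-k$ (equation~(\ref{eq:ShanonEntropy_ExtremeDistributions_SimplifiedEquation})) and recalling that on the relevant branch $\gamma>d/2^{n}$ the function $F$ is strictly increasing, the strictly smaller target $-k'<-k$ forces $\gamma^{(k')}(d)<\gamma(d)$. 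Hence $\mathbf{E}D(X)=\gamma(d)>\gamma^{(k')}(d)$, which is the extreme value $p_D|D|$ at threshold $k'$, so condition (ii) of Corollary~\ref{MetricShannonEntropy_Characterization} fails at $k'$ and $\mathbf{H}_1^{\textup{Metric},\{0,1\},0}(X)<k'$. Letting $k'\to k^{+}$ gives $\mathbf{H}_1^{\textup{Metric},\{0,1\},0}(X)\leqslant k$, and combined with the lower bound this proves the equality.

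The main obstacle is the upper bound: one must carefully track the dependence of the extreme value $\gamma(d)$ on the entropy threshold $k$ and confirm the required strict monotonicity through the sign of $\partial F/\partial\gamma$ on the correct branch $\gamma>d/2^{n}$. The lower bound, by contrast, is essentially immediate once the already-established concavity of $\gamma$ is used to order the atoms of $X$ and reduce the estimate to a cumulative sum.
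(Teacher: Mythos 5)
Your proposal is correct, and for the direction the paper actually proves it coincides with the paper's argument: both use the concavity of $\gamma$ to conclude that the atoms $\PrD{X}(x_d)=\gamma(d)-\gamma(d-1)$ are nonincreasing, so that any boolean $D$ with $|D|=d\leqslant 2^k$ collects at most $\sum_{i=1}^{d}\PrD{X}(x_i)=\gamma(d)=p_D|D|$, and then invoke the characterization of Corollary \ref{MetricShannonEntropy_Characterization}. (The paper's proof also spends a line verifying that $\PrD{X}$ is a probability measure via the telescoping sum and $\gamma(2^k)=1$; you take this as given from the construction, which is acceptable since it is part of the setup.) Where you genuinely go beyond the paper is the upper bound: the paper states the claim as an equality but its proof only establishes $\mathbf{H}_{1}^{\textup{Metric},\{0,1\},0}(X)\geqslant k$, which is all that the enclosing separation theorem needs. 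Your argument for $\leqslant k$ — take $D=\mathbf{1}_{\{x_1,\ldots,x_d\}}$ so that $\mathbf{E}D(X)=\gamma(d)$ exactly, and observe that on the branch $\gamma>d/2^{n}$ where $F$ is strictly increasing the solution of $F(\gamma)=-k'$ strictly decreases as $k'$ increases, so condition (ii) of the characterization fails at every threshold $k'>k$ — is sound and fills this gap. The only point worth flagging is that no limit $k'\to k^{+}$ is really needed: showing the defining condition fails for every $k'>k$ already bounds the supremum by $k$; and one should note $d\leqslant 2^{k}<2^{k'}$ so the nontrivial case of the characterization applies at threshold $k'$, which your argument implicitly uses.
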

\begin{proof}
Define for the convinience $\gamma(0) = 0$. Observe, that the numbers $\gamma(d) $ are increasing and since $\gamma\left(2^k\right)=1$ we have $\sum\limits_{x}\PrD{X}(x) = 1$ (a telescopic sum). Therefore we have indeed defined a probability measure. Let $D$ be any boolean function on $\{0,1\}^n$ such that $d = |D| < 2^k$. 
Since $\gamma(d)$ is concave then $\PrD{X}\left(x_d\right) = \gamma(d) - \gamma(d-1)$ is decreasing with $d$.  
Therefore
\begin{align}
 \mathbf{E}D(X) = & \sum_{x}\PrD{X}(x) \cdot D(x) \\
 \leqslant & \max\limits_{|I| = d} \sum\limits_{i \in I}^{d}\PrD{X}\left(x_i\right) \\
 \leqslant & \sum\limits_{i=1}^{d} \PrD{X}\left(x_i\right) \\
 = & \gamma_{d} = p(d)\cdot |D|
\end{align}
and by the chatacterization in Lemma, the first part follows. 
\end{proof}

\begin{claim}
We have $\mathbf{H}_1(X) = \bigO{ k^2 n^{-1} + k n^{-1} \log n }$. 
\end{claim}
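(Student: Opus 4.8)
The plan is to evaluate the Shannon entropy directly from the construction, writing $p_d := \PrD{X}\left(x_d\right) = \gamma(d)-\gamma(d-1)$ so that $\mathbf{H}_1(X) = \sum_{d=1}^{2^k} p_d\log\frac{1}{p_d}$. The crucial structural observation is that the mass is highly concentrated on the first atom: by (\ref{eq:Shannon_AuxiliaryFunction_v2}) one has $\gamma(1) = \frac{n-k}{n} + \bigO{1/n}$, so $p_1$ is bounded away from both $0$ and $1$ (using $k < cn$), while \emph{all} remaining atoms together carry only mass $1-\gamma(1) = \bigO{k/n}$. I would therefore peel off the first term separately: since $p_1 = \bigTheta{1}$ and $\log(1/p_1) = -\log\gamma(1) = \bigO{k/n}$ (as $-\log(1-x) = \bigO{x}$ for $x$ bounded away from $1$), its contribution $p_1\log(1/p_1)$ is already $\bigO{k/n}$, comfortably inside the target bound.

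For the tail $d \geq 2$ I would exploit concavity of $\gamma$. Since $\gamma'$ is decreasing, $p_d = \gamma(d)-\gamma(d-1) \leq \gamma'(d-1)$, and by (\ref{eq:Shannon_AuxiliaryFunction_DerivativeAsimptotic}) this gives $p_d \leq b_d := C\frac{n-k}{d(n-\log d)^2}$ for an absolute constant $C$ (passing from $d-1$ to $d$ costs only constants, since $d-1\geq d/2$ and $n-\log(d-1) = \bigTheta{n-\log d}$). In particular $p_d \leq b_d \leq \frac{C}{n-k} < 1/e$ for all $d\geq 2$ and large $n$. Because $t\mapsto t\log(1/t)$ is increasing on $(0,1/e)$, each tail term obeys $p_d\log(1/p_d) \leq b_d\log(1/b_d)$, and as $d\mapsto b_d\log(1/b_d)$ is decreasing the tail sum is dominated by $\int_1^{2^k} b(t)\log\frac{1}{b(t)}\,dt$ with $b(t) = C\frac{n-k}{t(n-\log t)^2}$.

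It then remains to evaluate this integral after expanding $\log\frac{1}{b(t)} = \log t + 2\log(n-\log t) - \log((n-k)C)$. Substituting $u = \log t$ (equivalently $v = n-u$ over $[n-k,n]$) reduces the three pieces to elementary integrals: $\int_0^k \frac{u}{(n-u)^2}\,du = \frac{k}{n-k} + \log(1-k/n) = \bigO{k^2/n^2}$, which after the prefactor $n-k$ gives $\bigO{k^2/n}$; $\int_{n-k}^n \frac{\log v}{v^2}\,dv = \bigO{\frac{k\log n}{n^2}}$, which gives $\bigO{(k/n)\log n}$; and the last piece, carrying the negative coefficient $-\log((n-k)C)$ against total tail mass $\bigO{k/n}$, only lowers the bound. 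Adding the $\bigO{k/n}$ from the first atom yields $\mathbf{H}_1(X) = \bigO{k^2/n + (k/n)\log n}$. I expect the main subtlety to be exactly the $2\log(n-\log t)$ piece: the crude estimate $\sum_d p_d\log(n-\log d) \leq \log n$ only gives $\bigO{\log n}$, and sharpening this to $\bigO{(k/n)\log n}$ relies on the measure concentrating near $d=1$ so that $v = n-\log t$ stays close to $n$ — precisely what $\int_{n-k}^n v^{-2}\log v\,dv$ quantifies. The two boundary atoms, the heavy one at $d=1$ (where monotonicity of $t\log(1/t)$ fails) and the one at $d=2^k$ (where (\ref{eq:Shannon_AuxiliaryFunction_DerivativeAsimptotic}) is invoked at $d=2^k-1$), are dealt with by the separate arguments above.
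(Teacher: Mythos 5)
Your proof is correct and follows essentially the same route as the paper's: concavity of $\gamma$ to bound the atoms by $\gamma'$, the asymptotic estimate (\ref{eq:Shannon_AuxiliaryFunction_DerivativeAsimptotic}), and a sum-to-integral comparison evaluated via the substitution $u=\log t$. The only (welcome) refinement is your sharp $\bigO{k/n}$ estimate of the first atom, where the paper settles for the cruder additive constant $1$ in (\ref{ineq:ShannonGap_EntropyEstimate}).
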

\begin{proof}
Now we estimate the entropy of $X$. By definition
\begin{align}
 \mathbf{H}\left(X\right) = & - \gamma(1)\log \gamma(1) - \sum\limits_{d=2}^{2^k}\left(\gamma(d)-\gamma(d-1)\right)\log \left(\gamma(d)-\gamma(d-1)\right) 
\end{align}
Since $\gamma$ is concave, we have $\gamma(d) - \gamma(d-1) \leqslant \gamma'(d-1)$.
The function $t\rightarrow -t\log t$ is increasing for $t\leqslant \frac{1}{2}$ and for $d\geqslant 2$ and sufficiently large $n$, by concavity again we have $\gamma'(d-1) \leqslant \gamma'(1) \leqslant \frac{1}{2}$. Hence,
\begin{align}
 \mathbf{H}\left(X\right) \leqslant & - \gamma(1) \log \gamma(1) - \sum\limits_{d=1}^{2^k-1} \gamma'(d) \log  \gamma'(d) 
\end{align}
The function $d \rightarrow -\gamma'(d) \log \gamma'(d)$ is decreasing, as $\gamma'(d)$ decreases and $\gamma'(d)\leqslant \gamma'(1)\leqslant \frac{1}{2}$. Thus
\begin{align}\label{ineq:ShannonGap_EntropyEstimate}
 \mathbf{H}(X) \leqslant & \ \gamma(1) \log \gamma(1) - \gamma'(1)\log\gamma'(1) -\int\limits_{1}^{2^{k}-1} \gamma'(d) \log \gamma'(d) \ \mbox{d} d \nonumber \\
 \leqslant & \ 1-\int\limits_{1}^{2^k} \gamma'(d) \log \gamma'(d) \ \mbox{d} d
\end{align}
Using the estimate (\ref{eq:Shannon_AuxiliaryFunction_DerivativeAsimptotic}), for some constant $C > 1$ we obtain
\begin{align}\label{ineq:ShannonGap_IntegralEstimate1}
 -\int\limits_{1}^{2^k} \gamma'(d) \log \gamma'(d) \ \mbox{d} d \leqslant & -\int \limits_{1}^{2^k} \frac{C(n-k)}{d(n-\log d)^2}\log \left( \frac{C(n-k)}{d(n-\log d)^2} \right) \mbox{d} d + \bigO{C} \\
 = & -C \int\limits_{1}^{2^k} \frac{n-k}{d(n-\log d)^2}\log \left( \frac{n-k}{d(n-\log d)^2} \right) \mbox{d} d + \bigO{ C\log C }
\end{align}

Integrating and using the inequality $\ln (1+x) \leqslant x$ for $x > -1$, we get
\begin{align}\label{ineq:ShannonGap_IntegralEstimate2}
 -\int\limits_{1}^{2^k} \frac{n-k}{d(n-\log d)^2}\log \left( \frac{n-k}{d(n-\log d)^2} \right) \mbox{d} d = &
\ln 2 \cdot k + \ln^2 2 \cdot (n-k)\log \left(\frac{n-k}{n}\right) + \ln 2 \cdot \log (n-k) \nonumber \\ & +\ln 2 \cdot \frac{n-k}{n}\log \frac{n-k}{n^2} + \frac{2k}{n} \nonumber \\
= & \ln 2 \left( k + (n-k)\ln \left(1-\frac{k}{n}\right) \right) + \ln\left(1-\frac{k}{n}\right) + \frac{k}{n}\log n \nonumber \\ & + \left(1-\frac{k}{n}\right)\log\left(1-\frac{k}{n}\right)+\frac{2k}{n} \nonumber \\
\leqslant & (1+\ln 2)\cdot k^2 n^{-1} + k n^{-1}\log n
\end{align}
Finally, inequalities (\ref{ineq:ShannonGap_EntropyEstimate}), (\ref{ineq:ShannonGap_IntegralEstimate1}) and (\ref{ineq:ShannonGap_IntegralEstimate2}) yield the result.
\end{proof}
\noindent The proof follows by claims.
\end{proof}
\noindent This result directly implies the following one:
\begin{corollary}\label{ShannonMetric_vs_Shannon_Separation}
For some absolute constant $c>0$ and every sufficiently large $n$ there exists a random variable such that $\mathbf{H}^{\textup{Metric},\det\{0,1\},\epsilon}_1(X) = c n$ but $\mathbf{H}^{}_1(X) \leqslant c n / 2$. 
\end{corollary}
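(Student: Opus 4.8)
The plan is to instantiate the random variable $X$ built in the proof of the preceding result, taking its free parameter $k$ to be a small constant fraction of $n$. Recall that that construction produces, for every $k < n-1$, a distribution $X \in \{0,1\}^n$ supported on a $2^k$-element set, and that its two concluding claims assert: (i) the metric Shannon entropy against all boolean deterministic distinguishers with $\epsilon = 0$ equals $k$, i.e.\ $\mathbf{H}^{\textup{Metric},\det\{0,1\},0}_1(X) = k$; and (ii) the true Shannon entropy satisfies $\mathbf{H}_1(X) = \bigO{k^2 n^{-1} + k n^{-1}\log n}$, this estimate being valid whenever $k < c_0 n$ for the sufficiently small absolute constant $c_0$ supplied by the asymptotic bound on $\gamma'(d)$.

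I would then set $k = c n$, where $c$ is an absolute constant to be pinned down, chosen at most $c_0$ so that estimate (ii) applies (and so that $k < n-1$ for large $n$). Claim (i) gives $\mathbf{H}^{\textup{Metric},\det\{0,1\},0}_1(X) = k = c n$ at once, which is the first half of the statement. For the second half, substituting $k = c n$ into (ii) gives $\mathbf{H}_1(X) \leqslant C_1\left(c^2 n + c\log n\right)$, where $C_1$ is an absolute constant absorbing the implied constant of the $\bigO{\cdot}$.

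The decisive observation is that the metric entropy $cn$ is \emph{linear} in $c$ while the dominant term $C_1 c^2 n$ of the upper bound on the true entropy is \emph{quadratic} in $c$; thus the ratio of true to metric entropy is of order $c$ and can be driven below $1/2$ simply by shrinking $c$. Concretely I would fix $c = \min\left(c_0, \tfrac{1}{4C_1}\right)$, giving $C_1 c^2 n \leqslant \tfrac{1}{4} c n$, and then invoke $\log n = o(n)$ to absorb the lower-order term $C_1 c \log n \leqslant \tfrac{1}{4} c n$ for all sufficiently large $n$; summing the two contributions yields $\mathbf{H}_1(X) \leqslant \tfrac{1}{2} c n$, completing the claimed separation.

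The only delicate point, and what I expect to be the main (if mild) obstacle, is the simultaneous calibration of the single constant $c$: it must be small enough both to stay inside the regime $k < c_0 n$ where the entropy asymptotics of the preceding proof are valid, and to force the quadratic term $C_1 c^2 n$ beneath the linear target $cn/2$. Because every estimate in that proof is uniform in $n$ once $k < c_0 n$, and because the $\log n$ contribution is negligible for large $n$, this bookkeeping is routine rather than substantive.
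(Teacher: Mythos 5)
Your proposal is correct and is exactly the argument the paper intends: it invokes the preceding construction with $k = cn$ for a sufficiently small absolute constant $c$, using the claim $\mathbf{H}_{1}^{\textup{Metric},\det\{0,1\},0}(X)=k$ together with the bound $\mathbf{H}_1(X)=\bigO{k^2n^{-1}+kn^{-1}\log n}$, and calibrates $c$ so that the quadratic term falls below $cn/2$. The paper states the corollary as a direct consequence without writing out this bookkeeping, and your filled-in version of it is sound.
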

\noindent Now we give separation between randomized and deterministic distinguishers for the Shannon Entropy:
\begin{corollary}
For some absolute constant $c \in (0,1)$, for every $n$ there exists a random variable $X\in\{0,1\}^n$ such that:
\begin{enumerate}[label = (\roman{*})]
 \item Metric Shannon Entropy of $X$ is $k \geqslant c n$, against all deteterministic boolean functions and $\epsilon = 0$
 \item Metric Shannon Entropy of $X$ is $k - \bigOm{n}$, against all randomized circuits and $\epsilon = \bigOm{1}$.
\end{enumerate} 
\end{corollary}
\begin{proof}
\noindent We will make use of the following result, which says that Shannon Entropy is continuous (almost Lipschitz) with respect to the statistical distance. The proof is technical and is given in the Appendix.
\begin{lemma}\label{ShannonEntropy_Lipchitz}
Let $X,Y\in\{0,1\}^n$ be random variables. Then $\left| \mathbf{H}_1(X) - \mathbf{H}_1(Y)\right| = \bigO{n\Delta(X,Y)}-2\Delta(X,Y)\log\Delta(X,Y)$.
\end{lemma}
\begin{corollary}\label{DifferentShannon_Distinguishing}
Let $X,Y\in\{0,1\}^n$ be random variables such that  $\mathbf{H}_1\left(Y\right) - \mathbf{H}_1(X) = d \geqslant 1$. Then
\begin{equation*}
 \Delta(X,Y) \geqslant \Omega(d/n)
\end{equation*}
\end{corollary}
\begin{proof}
Let $\epsilon = \Delta(X,Y)$. Lemma \ref{ShannonEntropy_Lipchitz} gives us $\left| \mathbf{H}_1(X) - \mathbf{H}_1(Y)\right| < c n \epsilon + 2 \epsilon\log (1/\epsilon)$ for some absolute constant $c$. If $2\log (1/\epsilon) > cn$ then $\epsilon \leqslant 2^{-cn/2}$ and for sufficiently large $n$ we get $1\leqslant d\leqslant 4\epsilon \log (1/\epsilon) \leqslant 4\cdot 2^{-cn/2}\cdot (cn/2) < 1$. Hence we must have $\left| \mathbf{H}_1(X) - \mathbf{H}_1(Y)\right| \leqslant 2cn\epsilon$ for large $n$. For the remaining (finitely many) cases $n=1,\ldots,N=N(c)$ for every $n$ we find a number $\gamma_n$ such that $\left| \mathbf{H}_1(X) - \mathbf{H}_1(Y)\right| \leqslant \gamma_n\Delta(X,Y)$, under the constraint $1\leqslant \left|\mathbf{H}_1(X) - \mathbf{H}_1(Y)\right|$. By a compactness argument $\gamma_n$ are well defined and for the number $\gamma = \max\left(\gamma_1,\ldots,\gamma_{N}, 2c\right)$ we have $\left| \mathbf{H}_1(X) - \mathbf{H}_1(Y)\right| < \gamma\Delta(X,Y)$ for all $n$. Especially, $\Delta(X,Y) > \gamma^{-1}\left| \mathbf{H}_1(X) - \mathbf{H}_1(Y)\right|$, provided that $\left| \mathbf{H}_1(X) - \mathbf{H}_1(Y)\right| \geqslant 1$.
\end{proof}
\noindent 
Let $X$ be distribution from Corollary \ref{ShannonMetric_vs_Shannon_Separation}. Consider the set $\cY$ of all distributions $Y\in\{0,1\}^n$ with Shannon Entropy at least $\frac{3}{4}c n$. 
By Corollary \ref{DifferentShannon_Distinguishing} we obtain that for every distribution 
$\PrD{Y} \in \cY$ there exists a $[0,1]$-valued function $D$ such that
$\mathbf{E}D(X) - \mathbf{E}D(Y) \geqslant \Omega(1)$. But it means that $\mathbf{H}_1^{\text{HILL}\det [0,1],\Omega(1)} \leqslant \frac{3}{4}c n$. Since there is no restriction on the complexity, the same holds for Metric entropy. Since for unbounded circuits, Metric Entropy against $[0,1]$-valued and boolean randomized distinguishers is the same (up to a
arbitrary small absolute error), the result follows.
\end{proof}

\begin{replemma}{HighEntropy_LargeSupport}
Let $X\in\{0,1\}^n$ be a random variable satisfying $\mathbf{H}_{\alpha}(X)\geqslant k$.
Then $\left| \mathrm{supp}\left(\PrD{X}\right) \right| \geqslant 2^k$. 
\end{replemma}
\begin{proof}
Suppose that the distribution of $X$ is supported on some set $S$. The entropy constraint yields
\begin{equation}
 \sum\limits_{x\in S}\left(\PrD{X}(x)\right)^{\alpha} \leqslant 2^{-(\alpha-1)k}
\end{equation}
on the other side, the Jensen inequality gives us
\begin{equation}
 |S|^{-\alpha} = \left(\frac{1}{|S|}\sum\limits_{x\in S}\PrD{X}(x) \right)^{\alpha} \leqslant \frac{1}{|S|}\sum\limits_{x\in S}\left(\PrD{X}(x)\right)^{\alpha}
\end{equation}
From these two inequalities it follows that $|S| \geqslant 2^k$.
\end{proof}

\begin{replemma}{ShannonEntropy_Lipchitz}
Let $X,Y\in\{0,1\}^n$ be random variables. Then 
\begin{equation*}
\left| \mathbf{H}_1(X) - \mathbf{H}_1(Y)\right| \leqslant \bigO{n\Delta(X,Y)}-2\Delta(X,Y)\log\Delta(X,Y). 
\end{equation*}
\end{replemma}
\begin{proof}
Suppose that distributions of $X,Y$ are chosen to maximize $\left| \mathbf{H}_1(X) - \mathbf{H}_1(Y)\right|$ under the constraint with $\Delta(X,Y) = \epsilon$. Assume that $\mathbf{H}_1(X) \leqslant k \leqslant \mathbf{H}_1(Y)$. 
Consider the sets $S^{-} = \left\{x:\, \PrD{Y}(x) < \PrD{X}(x) \right\}$ and $S^{+} = \left\{x:\, \PrD{Y}(x) > \PrD{X}(x) \right\}$. We can assume that they are nonempty as otherwise $\PrD{X} = \PrD{Y}$.
The proof is divided into claims and starts with the following useful inequality:
\begin{claim}\label{ineq:Shannon_TwoPoint_Monotone}
Let $H(p,q) = -p\log p -q\log q$. Suppose that $0 \leqslant p\leqslant q$ and $p+q\leqslant 1$. Then 
\begin{align}
 H(p+\epsilon,q-\epsilon) < &\ H(p,q), \quad -p \leqslant \epsilon < 0 \label{ineq:Shannon_TwoPoint_Monotone1} \\
 H(p+\epsilon,q-\epsilon) > &\ H(p,q), \quad 0 < \epsilon < q-p \label{ineq:Shannon_TwoPoint_Monotone2} \\
 H(p+\epsilon,q-\epsilon) < &\ H(p,q), \quad q-p < \epsilon \label{ineq:Shannon_TwoPoint_Monotone3}
\end{align}
\end{claim}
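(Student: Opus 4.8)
The plan is to freeze $p,q$ and reduce all three inequalities to the elementary single-variable calculus of the auxiliary function
\[
 g(\epsilon) = H(p+\epsilon,q-\epsilon) = -(p+\epsilon)\log(p+\epsilon) - (q-\epsilon)\log(q-\epsilon),
\]
defined for $\epsilon\in[-p,q]$ with the convention $0\log 0 = 0$ at the endpoints. Every assertion of the claim is just a comparison of $g(\epsilon)$ with its value $g(0)=H(p,q)$, so it suffices to understand the shape of $g$.

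First I would differentiate. Because $\frac{d}{dx}\bigl(-x\log x\bigr) = -\log x - \tfrac{1}{\ln 2}$, the two additive constants cancel and one gets the clean expression
\[
 g'(\epsilon) = \log(q-\epsilon) - \log(p+\epsilon) = \log\frac{q-\epsilon}{p+\epsilon}.
\]
Hence $g'(\epsilon)>0$ precisely when $q-\epsilon>p+\epsilon$, that is for $\epsilon<\tfrac{q-p}{2}$, and $g'(\epsilon)<0$ for $\epsilon>\tfrac{q-p}{2}$. Thus $g$ is strictly increasing on $[-p,\tfrac{q-p}{2}]$, strictly decreasing on $[\tfrac{q-p}{2},q]$, and has a unique maximum at the symmetric point $\epsilon=\tfrac{q-p}{2}$, where $p+\epsilon=q-\epsilon$.

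The second ingredient is the boundary identity $g(q-p)=g(0)$: the substitution $\epsilon\mapsto q-p-\epsilon$ merely interchanges the two arguments of $H$, so $g(q-p) = -q\log q - p\log p = g(0)$. With this, the three inequalities read off directly. For (\ref{ineq:Shannon_TwoPoint_Monotone1}), the interval $[-p,0)$ sits inside the strictly increasing region (here $0\le\tfrac{q-p}{2}$ because $p\le q$), so $g(\epsilon)<g(0)$. For (\ref{ineq:Shannon_TwoPoint_Monotone3}), the interval $(q-p,q]$ sits inside the strictly decreasing region (since $q-p\ge\tfrac{q-p}{2}$), so $g(\epsilon)<g(q-p)=g(0)$. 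Finally, (\ref{ineq:Shannon_TwoPoint_Monotone2}) splits at the midpoint: on $(0,\tfrac{q-p}{2}]$ strict increase gives $g(\epsilon)>g(0)$, while on $[\tfrac{q-p}{2},q-p)$ strict decrease together with $g(q-p)=g(0)$ gives $g(\epsilon)>g(0)$.

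I do not expect a genuine obstacle; the argument is routine once the derivative is in hand. The only points needing a little care are the degenerate case $p=q$ (then $\tfrac{q-p}{2}=0$, claim (\ref{ineq:Shannon_TwoPoint_Monotone2}) is vacuous, and the peak coincides with $\epsilon=0$, but (\ref{ineq:Shannon_TwoPoint_Monotone1}) and (\ref{ineq:Shannon_TwoPoint_Monotone3}) still follow from monotonicity) and the handling of the endpoint $\epsilon=q$ through the $0\log 0=0$ convention. Note that the hypothesis $p+q\le 1$ is not even used, since $g'$ is independent of $p+q$.
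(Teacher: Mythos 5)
Your proof is correct. Note that the paper itself states this claim \emph{without} any proof --- it appears inside the argument for Lemma \ref{ShannonEntropy_Lipchitz} as a ``useful inequality'' and is never justified --- so there is no authorial argument to compare against; your write-up supplies the missing verification. The route you take is the natural one: the derivative computation $g'(\epsilon)=\log\frac{q-\epsilon}{p+\epsilon}$ is right (the two $1/\ln 2$ terms do cancel), the unique maximum at $\epsilon=\tfrac{q-p}{2}$ together with the reflection identity $g(q-p)=g(0)$ yields all three inequalities, and you correctly flag the two delicate spots: the degenerate case $p=q$ (where (\ref{ineq:Shannon_TwoPoint_Monotone2}) is vacuous) and the endpoints handled by the $0\log 0=0$ convention, where one should invoke continuity of $g$ on the closed interval plus strict monotonicity on the interior. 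Your observation that the hypothesis $p+q\leqslant 1$ is never used is also accurate --- it is present only because $p,q$ are probability masses in the application. No gaps.
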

\noindent Next we derive several properties of $\PrD{X},\PrD{Y}$ over the sets $S^{+},S^{-}$.
\begin{claim}
The set $S^{-}$ contains only one element $x=x'$. 
\end{claim}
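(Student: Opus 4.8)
The plan is to argue by contradiction from the assumed maximality of the pair $(X,Y)$: if $S^{-}$ contained two distinct points, I would exhibit a small local modification of $\PrD{X}$ that keeps $\Delta(X,Y)=\epsilon$ and leaves $\mathbf{H}_1(Y)$ unchanged but strictly \emph{lowers} $\mathbf{H}_1(X)$, thereby increasing the gap $\mathbf{H}_1(Y)-\mathbf{H}_1(X)$ and contradicting optimality. The key point is that the modification is a pure mass transfer \emph{inside} the distribution of $X$, supported on the two points of $S^{-}$, so that the distribution of $Y$ and everything outside $S^{-}$ remain untouched.

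Concretely, suppose $x_1,x_2\in S^{-}$ are distinct and, after relabelling, $\PrD{X}(x_1)\leqslant \PrD{X}(x_2)$. I would move a positive amount of mass from $x_1$ to $x_2$ within $X$, that is, decrease $\PrD{X}(x_1)$ and increase $\PrD{X}(x_2)$ by the same amount, stopping exactly when $\PrD{X}(x_1)$ reaches $\PrD{Y}(x_1)$. This is admissible because $\PrD{X}(x_1)-\PrD{Y}(x_1)>0$ on $S^{-}$, so there is strictly positive mass to move, and the total mass $\PrD{X}(x_1)+\PrD{X}(x_2)\leqslant 1$ is preserved. Applying inequality (\ref{ineq:Shannon_TwoPoint_Monotone1}) of Claim \ref{ineq:Shannon_TwoPoint_Monotone} with $p=\PrD{X}(x_1)$ and $q=\PrD{X}(x_2)$ shows that this ``unbalancing'' transfer strictly decreases the two-point contribution $H\bigl(\PrD{X}(x_1),\PrD{X}(x_2)\bigr)$, hence strictly decreases $\mathbf{H}_1(X)$.

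The bookkeeping I then need to verify is that the transfer preserves both invariants. Since $\PrD{X}(x_1)+\PrD{X}(x_2)$ is unchanged and $\PrD{Y}$ is untouched, the quantity $\sum_{x\in S^{-}}\bigl(\PrD{X}(x)-\PrD{Y}(x)\bigr)$ is unchanged, so $\Delta(X,Y)=\epsilon$ is preserved; geometrically $x_1$ migrates from $S^{-}$ to the ``equal'' set while $x_2$ stays in $S^{-}$. Moreover $\mathbf{H}_1(Y)$ is trivially unchanged because $Y$ is not modified. Consequently $\mathbf{H}_1(Y)-\mathbf{H}_1(X)$ strictly increases, contradicting maximality. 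As $S^{-}$ was assumed nonempty, it must contain exactly one element.

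The step I expect to require the most care is checking the hypotheses of Claim \ref{ineq:Shannon_TwoPoint_Monotone} along the whole transfer — in particular the constraint $p+q\leqslant 1$ and the sign/range condition $-p\leqslant \epsilon<0$ on the shifted amount — and confirming that no coordinate is pushed below its lower bound $\PrD{Y}(\cdot)$ before the stopping point, so that nonnegativity of $\PrD{X}$ and the membership structure of $S^{-}$ and $S^{+}$ evolve exactly as described. Everything else is routine, since the elementary two-point monotonicity has already been isolated in the preceding claim.
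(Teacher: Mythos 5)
Your proof is correct and follows essentially the same route as the paper's: both arguments take two distinct points of $S^{-}$, transfer mass from the one with smaller $\PrD{X}$-value to the larger within $X$ only, invoke the two-point monotonicity of $H(p,q)$ to strictly decrease $\mathbf{H}_1(X)$, and observe that $\Delta(X,Y)$ and $\mathbf{H}_1(Y)$ are unchanged, contradicting extremality. The only cosmetic difference is that the paper moves an arbitrarily small $\delta$ while you push the transfer all the way until $x_1$ exits $S^{-}$; both variants are valid.
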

\begin{proof}
Suppose that $S^{-}$ contains two points $x_1,x_2\in S^{-}$, such that $\PrD{X}\left(x_1\right) \leqslant \PrD{X}\left(x_2\right)$. Consider a distribution $\PrD{X'}$ given by $\PrD{X'}\left(x_1\right) = \PrD{X'}\left(x_1\right)-\delta$, $\PrD{X'}\left(x_2\right) = \PrD{X'}\left(x_2\right)+\delta$ and $\PrD{X'}(x) = \PrD{X}(x)$ if $x\not\in\left\{x_1,x_2\right\}$ where $\delta$ is sufficiently small positive number (from the definition of $S^{-}$ we have $\PrD{X}(x) \in (0,1)$ for $x\in S^{-}$, provided that $S^{-}$ has at least two elements). Since
\begin{equation}\label{ineq:Shannon_TwoPoint_Monotone}
 -(a-\delta)\log(a-\delta) -(b+\delta)\log(b+\delta) < - a\log a - b\log b \quad \text{ for } 0 < a\leqslant b < 1,\ \delta > 0
\end{equation}
we have $\mathbf{H}_{1}(X') < \mathbf{H}_{1}(X)$. Since $\Delta(X,Y) = \Delta(X',Y)$ (for sufficiently small $\delta$) we get a contradiction with the choice of $X,Y$. Hence, we may assume that $\left|S^{-}\right| = 1$. 
\end{proof}
\begin{claim}
The distribution $\PrD{Y}$ is uniform over $S^{+}$. 
\end{claim}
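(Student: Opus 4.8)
The plan is to exploit the extremality of the pair $(X,Y)$ against mass-equalizing perturbations of $Y$ confined to $S^{+}$. Recall that $(X,Y)$ was chosen so that $\mathbf{H}_1(Y)-\mathbf{H}_1(X)$ is maximal among all pairs at statistical distance $\epsilon$. Hence any modification $Y\to Y'$ that leaves $\PrD{X}$ untouched, keeps $\Delta(X,Y')=\epsilon$, and \emph{strictly increases} $\mathbf{H}_1(Y)$ would contradict the choice of $(X,Y)$. I will construct exactly such a modification whenever $\PrD{Y}$ fails to be constant on $S^{+}$.

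Concretely, suppose toward a contradiction that $\PrD{Y}$ takes two distinct values on $S^{+}$, say $p=\PrD{Y}(x_1)<q=\PrD{Y}(x_2)$ for some $x_1,x_2\in S^{+}$. Define $Y'$ by transporting a small mass $\delta>0$ from $x_2$ to $x_1$, i.e.\ $\PrD{Y'}(x_1)=p+\delta$, $\PrD{Y'}(x_2)=q-\delta$, and $\PrD{Y'}=\PrD{Y}$ elsewhere. For $\delta$ small enough that $0<\delta<\min(q-p,\,q-\PrD{X}(x_2))$, both points remain in $S^{+}$, so on each of them $|\PrD{X}-\PrD{Y'}|$ is still $\PrD{Y'}-\PrD{X}$; summing the two contributions the $\pm\delta$ cancel, whence $\Delta(X,Y')=\Delta(X,Y)=\epsilon$. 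Since $X$ is untouched, $\mathbf{H}_1(X)$ is unaffected, while by inequality \ref{ineq:Shannon_TwoPoint_Monotone2} applied to $H(p,q)$ (the hypotheses $0\le p\le q$ and $p+q\le 1$ hold automatically, the latter because $p,q$ are probabilities of two distinct outcomes) the two-point entropy strictly increases, so $\mathbf{H}_1(Y')>\mathbf{H}_1(Y)$. This enlarges the maximized gap, the desired contradiction; therefore $\PrD{Y}$ is constant, i.e.\ uniform, on $S^{+}$.

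The only delicate point, and the step I would treat most carefully, is the invariance of the statistical distance: it hinges on keeping both perturbed atoms strictly inside the region $\{\PrD{Y}>\PrD{X}\}$, so that no absolute value flips sign and the transported mass $\delta$ cancels exactly in $\Delta$. The constraint $\delta<q-\PrD{X}(x_2)$ secures this, and $\delta<q-p$ places the perturbation in the regime of \ref{ineq:Shannon_TwoPoint_Monotone2}, where equalizing two unequal masses raises the entropy.
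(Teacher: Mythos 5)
Your proposal is correct and follows essentially the same route as the paper: assume two unequal masses on $S^{+}$, transfer a small amount $\delta$ from the larger to the smaller, observe that the statistical distance is preserved while the two-point entropy inequality forces $\mathbf{H}_1(Y')>\mathbf{H}_1(Y)$, contradicting the extremal choice of $(X,Y)$. Your explicit bound $\delta<\min(q-p,\,q-\PrD{X}(x_2))$ makes the invariance of $\Delta(X,Y')$ more careful than the paper's one-line assertion, but the argument is identical in substance.
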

\begin{proof}
We can assume $\left|S^{+}\right|>1$. Suppose that that $\PrD{Y}\left( x_1 \right) < \PrD{Y}\left(x_2\right)$ for $x_1,x_2\in S^{+}$. Considering a distribution $\PrD{Y'}$ given by $\PrD{Y'}\left(x_1\right) = \PrD{Y}\left(x_1\right)+\delta$, $\PrD{Y'}\left(x_2\right) = \PrD{Y}\left(x_2\right)-\delta$ and $\PrD{Y'}(x) = \PrD{Y}(x)$ if $x\not\in\left\{x_1,x_2\right\}$ (from the definition of $S^{+}$ we have $\PrD{Y}(x)\in(0,1)$ for $x\in S^{+}$ provided that $\left|S^{+}\right| > 1$), by (\ref{ineq:Shannon_TwoPoint_Monotone}) we obtain $\mathbf{H}_1(Y') > \mathbf{H}_1(Y)$. Since $\Delta\left(X,Y'\right) = \Delta(X,Y)$, we get a contradiction. 
\end{proof}
\begin{claim}
$\PrD{X}(x) > 0$ for at most one element $x=x''\in S^{+}$. 
\end{claim}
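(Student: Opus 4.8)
The plan is to run exactly the same extremal perturbation argument that settled the two preceding claims, now transporting mass \emph{within} $S^{+}$. Suppose toward a contradiction that there are two distinct points $x_1,x_2\in S^{+}$ with $\PrD{X}(x_1)>0$ and $\PrD{X}(x_2)>0$, and order them so that $\PrD{X}(x_1)\leqslant \PrD{X}(x_2)$. First I would introduce the competitor distribution $\PrD{X'}$ obtained by shifting a small amount of mass from the lighter to the heavier point: $\PrD{X'}(x_1)=\PrD{X}(x_1)-\delta$, $\PrD{X'}(x_2)=\PrD{X}(x_2)+\delta$, and $\PrD{X'}(x)=\PrD{X}(x)$ for all other $x$, with $\delta>0$ small. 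For $\delta$ below $\min\bigl(\PrD{X}(x_1),\,\PrD{Y}(x_2)-\PrD{X}(x_2)\bigr)$ this is a genuine probability distribution and both points stay strictly inside $S^{+}$, so the decomposition $\{S^{+},S^{-}\}$ is unchanged.

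The crucial point is that this move does not alter the statistical distance. Writing $\Delta(X,Y)=\sum_{x\in S^{+}}\bigl(\PrD{Y}(x)-\PrD{X}(x)\bigr)$ and observing that I only redistribute mass inside $S^{+}$ while leaving $Y$ untouched, I have $\sum_{x\in S^{+}}\PrD{X'}(x)=\sum_{x\in S^{+}}\PrD{X}(x)$, hence $\Delta(X',Y)=\Delta(X,Y)=\epsilon$. Thus $X'$ is an admissible competitor in the same constrained maximization problem that defines $X,Y$.

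For the effect on entropy I would invoke the two-point inequality established in Claim \ref{ineq:Shannon_TwoPoint_Monotone}, applied with $a=\PrD{X}(x_1)\leqslant b=\PrD{X}(x_2)$: it gives $-(a-\delta)\log(a-\delta)-(b+\delta)\log(b+\delta)<-a\log a-b\log b$, so $\mathbf{H}_1(X')<\mathbf{H}_1(X)$. Since the extremal configuration was assumed to satisfy $\mathbf{H}_1(X)\leqslant \mathbf{H}_1(Y)$, this yields $\mathbf{H}_1(Y)-\mathbf{H}_1(X')>\mathbf{H}_1(Y)-\mathbf{H}_1(X)$, i.e. $X'$ strictly enlarges the entropy gap while preserving $\Delta$, contradicting the maximality of $|\mathbf{H}_1(X)-\mathbf{H}_1(Y)|$ under $\Delta(X,Y)=\epsilon$. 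Therefore $S^{+}$ contains at most one point $x''$ with $\PrD{X}(x'')>0$.

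The argument is short because all the analytic content lives in the already-proved monotonicity inequality; the only thing requiring genuine care is the bookkeeping that guarantees $\Delta$ is exactly preserved. That is why I would be most careful in verifying that $\delta$ can be chosen small enough that neither $x_1$ nor $x_2$ leaves $S^{+}$ (so the formula $\Delta(X,Y)=\sum_{x\in S^{+}}(\PrD{Y}(x)-\PrD{X}(x))$ continues to apply unchanged), which follows from the strictness of the defining inequalities of $S^{+}$ together with $\PrD{X}(x_1)>0$. Everything else is a direct transcription of the perturbation already used to prove $|S^{-}|=1$.
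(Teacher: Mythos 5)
Your proof is correct and follows essentially the same route as the paper: perturb by moving mass $\delta$ from the lighter to the heavier of two hypothetical positive-mass points of $S^{+}$, observe that $\Delta(X',Y)=\Delta(X,Y)$, and invoke the two-point monotonicity inequality to get $\mathbf{H}_1(X')<\mathbf{H}_1(X)$, contradicting extremality. If anything, your version is more careful than the paper's about why $\delta$ can be chosen so that both points remain in $S^{+}$ and the statistical distance is exactly preserved.
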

\begin{proof}
Suppose that $0 < \PrD{X}\left(x_1\right) \leqslant \PrD{X}\left(x_2\right)$ for two different points $x_1,x_2\in S^{+}$. Define $\PrD{X'}$ by $\PrD{X'}\left(x_1\right) = \PrD{X'}\left(x_1\right)-\delta$, $\PrD{X'}\left(x_2\right) = \PrD{X'}\left(x_2\right)+\delta$ and $\PrD{X'}(x) = \PrD{X}(x)$ if $x\not\in\left\{x_1,x_2\right\}$ for sufficiently small $\delta > 0$. Then $\Delta(X',Y) = \Delta(X,Y)$ and by (\ref{ineq:Shannon_TwoPoint_Monotone}) we get $\mathbf{H}_1(X') < \mathbf{H}_1(X)$. Therefore, there is at most one point $x\in S^{+}$ such that $\PrD{X}(x) > 0$. Observe however, that this cannot hold: from the definition of $S^{+}$ we have $\epsilon = \sum\limits_{x \in S^{+}}\left( \PrD{Y}(x) - \PrD{X}(x) \right)$ and the fact that $Y$ is uniform over $S^{+}$ yields $\epsilon = \left|S^{+}\right|\PrD{Y}(x) -\PrD{X}(x)$. This implies $\PrD{Y}(x) = \frac{\PrD{X}(x) + \epsilon}{\left|S^{+}\right|}$. But now, the definition of $S^{+}$ yields the inequality $\epsilon > \left(\left|S^{+}\right|-1\right) \PrD{X}(x)$ and then $\PrD{Y}(x) >1 $, a contradiction.
\end{proof}
\begin{claim}
We have $\PrD{Y}(x') = \PrD{X}(x') - \epsilon$ and $\PrD{Y}(x'') =  \frac{\PrD{X}(x'') + \epsilon}{\left|S^{+}\right|}$
\end{claim}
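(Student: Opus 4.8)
The plan is to read both identities off from elementary mass-accounting, feeding in the three structural facts already secured by the preceding claims: that $S^{-}=\{x'\}$ is a singleton, that $\PrD{Y}$ is constant on $S^{+}$, and that $x''$ is the unique point of $S^{+}$ carrying positive $X$-mass. No further perturbation/optimization argument is needed at this stage; everything that remains is bookkeeping around the constraint $\Delta(X,Y)=\epsilon$.

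First I would record the transport balance. Since $\sum_x \PrD{X}(x)=\sum_x \PrD{Y}(x)=1$, the total excess of $Y$ over $X$ on $S^{+}$ must equal the total deficit of $Y$ on $S^{-}$; call this common value $M$. Splitting the $L^1$ distance into its positive and negative parts gives $\Delta(X,Y)=\tfrac{1}{2}(M+M)=M$, whence $M=\epsilon$. In symbols,
\begin{equation*}
 \sum_{x\in S^{+}}\bigl(\PrD{Y}(x)-\PrD{X}(x)\bigr)=\sum_{x\in S^{-}}\bigl(\PrD{X}(x)-\PrD{Y}(x)\bigr)=\epsilon.
\end{equation*}

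For the first identity I would simply invoke $S^{-}=\{x'\}$: the deficit sum collapses to its single term, so $\PrD{X}(x')-\PrD{Y}(x')=\epsilon$, which rearranges to $\PrD{Y}(x')=\PrD{X}(x')-\epsilon$. For the second identity I would use that $\PrD{Y}$ is uniform on $S^{+}$, with common value $\PrD{Y}(x'')$, together with the fact that $x''$ is the only point of $S^{+}$ where $\PrD{X}(x)>0$. The excess equation then reads $|S^{+}|\,\PrD{Y}(x'')-\PrD{X}(x'')=\epsilon$, and solving for $\PrD{Y}(x'')$ gives $\PrD{Y}(x'')=\bigl(\PrD{X}(x'')+\epsilon\bigr)/|S^{+}|$, as claimed.

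The only point that requires care — and hence the one I would flag as the ``obstacle'', though it is a minor one — is justifying that both the excess and the deficit equal $\epsilon$ rather than $2\epsilon$. This rests on combining mass conservation (which forces excess equal to deficit) with the factor $\tfrac{1}{2}$ in the definition of statistical distance. Once that accounting is pinned down, both displayed formulas drop out immediately from the singleton, uniformity, and single-support structure inherited from the earlier claims, and no additional estimates are involved.
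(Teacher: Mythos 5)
Your proposal is correct and follows essentially the same route as the paper: the paper's own (very terse) argument likewise observes that $\sum_{x\in S^{-}}(\PrD{X}(x)-\PrD{Y}(x))=\sum_{x\in S^{+}}(\PrD{Y}(x)-\PrD{X}(x))=\Delta(X,Y)=\epsilon$ and then reads off both identities from the singleton structure of $S^{-}$, the uniformity of $\PrD{Y}$ on $S^{+}$, and the single support point $x''$. You merely spell out the mass-conservation and factor-$\tfrac{1}{2}$ bookkeeping that the paper dismisses with ``it is easy to see.''
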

\begin{proof}
It is easy to see that $\sum\limits_{x\in S^{-}}\left(\PrD{X}(x) - \PrD{Y}(x)\right) = \sum\limits_{x\in S^{+}}\left(\PrD{Y}(x) - \PrD{X}(x)\right) = \Delta(X,Y)$. This immediately implies the first equality. Second is obtained because of the previous two claims.
\end{proof}
\begin{claim}
We have $\PrD{X}(x') \geqslant \PrD{X}(x'')+\epsilon$ 
\end{claim}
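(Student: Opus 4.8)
The plan is to argue by contradiction, using the assumed maximality of the pair $(X,Y)$ among all pairs at statistical distance exactly $\epsilon$. Write $a=\PrD{X}(x')$ and $b=\PrD{X}(x'')$, and recall from the previous claims that $\PrD{Y}(x')=a-\epsilon\geqslant 0$ and that $x''\in S^{+}$, so that $\beta:=\PrD{Y}(x'')>b\geqslant 0$. In this notation the assertion $\PrD{X}(x')\geqslant\PrD{X}(x'')+\epsilon$ is exactly $a-\epsilon\geqslant b$, i.e. $\PrD{Y}(x')\geqslant\PrD{X}(x'')$. I would assume the opposite, $a-\epsilon<b$. Since $\PrD{Y}(x')=a-\epsilon\geqslant 0$, this assumption forces $b>a-\epsilon\geqslant 0$, so in particular $b>0$ and $\beta>b>0$; this positivity is what will make the perturbation below feasible in both directions, with no boundary case to dispatch.

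The heart of the argument is the following feasibility trick: transfer an infinitesimal mass $\delta$ between $x'$ and $x''$ \emph{by the same amount in both $X$ and $Y$ simultaneously}. Because this leaves every coordinate difference $\PrD{X}(x)-\PrD{Y}(x)$ unchanged, it preserves $\Delta(X,Y)=\epsilon$, keeps $x'$ the unique point of $S^{-}$ and $x''$ inside $S^{+}$, and keeps both vectors probability distributions for all small $\delta$ (here $b,\beta>0$ are exactly what guarantee feasibility of the direction that decreases these coordinates). Under such a transfer only two coordinates of each distribution move, so the net effect on the gap $\mathbf{H}_{1}(Y)-\mathbf{H}_{1}(X)$ is governed entirely by the two-point entropy function of Claim \ref{ineq:Shannon_TwoPoint_Monotone}.

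A direct evaluation of the two-point terms shows that the first-order change of $\mathbf{H}_{1}(Y)-\mathbf{H}_{1}(X)$ under moving mass from $x''$ to $x'$ equals $\log\frac{\beta a}{(a-\epsilon)b}$. Since the transfer is feasible in both directions, maximality of the configuration forces this derivative to vanish, which gives $\beta=\frac{(a-\epsilon)b}{a}$. But $a-\epsilon<a$ forces $\beta<b$, contradicting $\beta>b$ (the defining inequality of $x''\in S^{+}$). This contradiction rules out $a-\epsilon<b$ and establishes the claim. As a sanity check one may split on $|S^{+}|$: for $|S^{+}|\geqslant 2$ the constraint $b<\beta=\frac{b+\epsilon}{|S^{+}|}$ already forces $b<\epsilon$, which together with $a\geqslant\epsilon$ makes the comparison transparent, while for $|S^{+}|=1$ the move reduces to a pure two-point rearrangement governed directly by Claim \ref{ineq:Shannon_TwoPoint_Monotone}.

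I expect the main obstacle to be the bookkeeping of feasibility of the perturbation rather than any hard estimate: one must verify carefully that after the same-mass transfer $x'$ remains the sole element of $S^{-}$ and $x''$ remains in $S^{+}$, so that the structural facts proved in the earlier claims (and hence the two-point reduction) continue to apply. Once it is observed that moving equal mass between the same pair of points in both $X$ and $Y$ leaves all coordinate differences, and therefore $\Delta$ and the sign pattern $S^{\pm}$, invariant, the remainder is a one-line application of the monotonicity inequality already recorded in Claim \ref{ineq:Shannon_TwoPoint_Monotone}.
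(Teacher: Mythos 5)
Your argument is correct in substance but takes a genuinely different route from the paper's. The paper performs a \emph{finite, one-sided} perturbation: assuming $\PrD{X}(x') < \PrD{X}(x'')+\epsilon$, it moves the full mass $\epsilon$ from $x'$ to $x''$ \emph{in $X$ only}, checks that the resulting $X'$ is a distribution with $\Delta(X',Y)\leqslant\epsilon$ (because $\PrD{X'}(x')=\PrD{Y}(x')$ and $-\epsilon<\PrD{X}(x'')-\PrD{Y}(x'')<0$), and then invokes the discrete two-point monotonicity claim to get $\mathbf{H}_1(X')<\mathbf{H}_1(X)$; there the contradiction hypothesis is used essentially, to verify that $\left(\PrD{X}(x'),\PrD{X}(x'')\right)$ is pushed past its crossing point so the two-point entropy strictly drops. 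You instead make an \emph{infinitesimal, two-sided} transfer that leaves every coordinate difference, hence $\Delta$ and the sets $S^{\pm}$, invariant, and read off a first-order optimality condition; your derivative $\log\frac{\beta a}{(a-\epsilon)b}$ is computed correctly. Your route is in fact more informative: since $\beta>b$ and $a>a-\epsilon$ make that logarithm positive whenever $b>0$, your argument shows that the extremal configuration cannot have $\PrD{X}(x'')>0$ at all, which makes the claim nearly immediate; the contradiction hypothesis enters only to secure feasibility of the transfer out of $x''$.

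Two loose ends should be tightened. First, ``feasible in both directions'' is not justified: the reverse transfer (from $x'$ to $x''$) decreases $\PrD{Y}(x')=a-\epsilon$ and therefore requires $a-\epsilon>0$, which your hypothesis does not supply; consequently you may not assert that the derivative must \emph{vanish}. This is harmless, because you only need the forward direction (from $x''$ to $x'$), which requires only $b>0$, and there the one-sided derivative is strictly positive --- already contradicting maximality since the gap $\mathbf{H}_1(Y)-\mathbf{H}_1(X)$ is nonnegative and strictly increases. Second, when $a-\epsilon=0$ your closed-form derivative is undefined; the one-sided increment of the gap is then $-\delta\log\delta+\bigO{\delta}>0$ for small $\delta$, so the contradiction persists, but this boundary case can occur and should be dispatched rather than declared absent.
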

\begin{proof}
Otherwise, we have $\PrD{X}(x') < \PrD{X}(x'')+\epsilon$. Consider then a distribution $\PrD{X'}$ given by $\PrD{X'}\left(x'\right) = \PrD{X}\left(x'\right)-\epsilon$, $\PrD{X'}\left(x''\right) = \PrD{X}\left(x''\right)+\epsilon$ and $\PrD{X'}(x) = \PrD{X}(x)$ if $x\not\in\left\{x',x''\right\}$ (this is a probability distribution because $\PrD{X'}(x) = \PrD{X}(x') - \epsilon = \PrD{Y}(x') \geqslant 0$). Then $x=x''$ is the only point such that $\PrD{X'}(x) \geqslant \PrD{Y}(x)$. Thus $\Delta(X',Y) = \left|\PrD{X'}(x'')-\PrD{Y}(x'') \right| = \left| \PrD{X}(x'')-\PrD{Y}(x'') + \epsilon \right|$.
Observe now that the definition of $S^{+}$ implies $\PrD{X}(x'')-\PrD{Y}(x'') < 0$ and $\Delta(X,Y) \leqslant \epsilon$ implies 
$-\epsilon < \PrD{X}(x'')-\PrD{Y}(x'')$. Therefore, $\Delta(X',Y) \leqslant \epsilon$.
Finally, note that $\PrD{X'}(x') = \PrD{X}(x') - \epsilon < \PrD{X}(x'')$ and $\PrD{X'}(x'')=  \PrD{X}(x'') + \epsilon > \PrD{X}(x')$. If $\PrD{X}(x'') \leqslant \PrD{X}(x')$, then by (\ref{ineq:Shannon_TwoPoint_Monotone1}) we get
$H\left(\PrD{X'}(x'),\PrD{X'}(x'')\right) < H\left( \PrD{X}(x''),\PrD{X}(x')\right)$. Otherwise, 
$\PrD{X}(x') < \PrD{X}(x'')$ and since $\PrD{X'}(x') < \PrD{X}(x')$ and $\PrD{X'}(x'') > \PrD{X}(x'')$,  (\ref{ineq:Shannon_TwoPoint_Monotone1}) yields 
$H\left(\PrD{X'}(x'),\PrD{X'}(x'')\right) < H\left(\PrD{X}(x'),\PrD{X}(x'')\right)$. Anyway, we obtain $\mathbf{H}_1(X') < \mathbf{H}_1(X)$, a contradiction.
\end{proof}
Now we are in position to give the final estimate. We consider two cases: $\epsilon > 2^{-n}$ and $\epsilon < 2^{-n}$.
\begin{claim}
Suppose that $\left|S^{+}\right|\geqslant 2$. Then $\PrD{X}(x'') < \frac{\epsilon}{\left|S^{+}\right|-1}$.  
\end{claim}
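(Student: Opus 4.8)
The plan is to read off the bound directly from the defining inequality of $S^{+}$ evaluated at the single distinguished point $x''$, using nothing beyond the structural facts already established in the preceding claims. First I would recall the formula proved immediately above, namely $\PrD{Y}(x'') = \frac{\PrD{X}(x'') + \epsilon}{|S^{+}|}$; this already packages the two facts that $\PrD{Y}$ is uniform over $S^{+}$ and that $x''$ is the only point of $S^{+}$ on which $X$ places positive mass. Next I would simply invoke the definition $S^{+} = \left\{x:\, \PrD{Y}(x) > \PrD{X}(x)\right\}$, which, since $x'' \in S^{+}$, furnishes the strict inequality $\PrD{Y}(x'') > \PrD{X}(x'')$.

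Substituting the formula for $\PrD{Y}(x'')$ into this strict inequality yields $\frac{\PrD{X}(x'') + \epsilon}{|S^{+}|} > \PrD{X}(x'')$, and clearing the positive denominator gives $\PrD{X}(x'') + \epsilon > |S^{+}|\cdot\PrD{X}(x'')$, i.e.\ $\epsilon > \left(|S^{+}| - 1\right)\PrD{X}(x'')$. Finally, the hypothesis $|S^{+}| \geqslant 2$ guarantees $|S^{+}| - 1 \geqslant 1 > 0$, so dividing through by $|S^{+}| - 1$ preserves the direction of the inequality and produces exactly the claimed bound $\PrD{X}(x'') < \frac{\epsilon}{|S^{+}| - 1}$. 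The argument is essentially a one-line algebraic rearrangement, so there is no real obstacle; the only point demanding any care is that the division by $|S^{+}| - 1$ be legitimate, which is precisely what the assumption $|S^{+}| \geqslant 2$ secures (for $|S^{+}| = 1$ the right-hand side would be undefined and the statement vacuous).
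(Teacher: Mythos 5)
Your proof is correct and follows exactly the paper's own argument: the paper likewise combines the membership $x''\in S^{+}$ (giving $\PrD{X}(x'')<\PrD{Y}(x'')$) with the previously established identity $\PrD{Y}(x'')=\frac{\PrD{X}(x'')+\epsilon}{\left|S^{+}\right|}$ and rearranges. You have merely written out the one-line algebra that the paper leaves implicit.
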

\begin{proof}
The definition of $S^{+}$ implies that $\PrD{X}(x'') < \PrD{Y}(x'') =  \frac{\PrD{X}(x'') + \epsilon}{\left|S^{+}\right|}$. 
\end{proof}
\begin{claim}
For the case $\left|S^{+}\right| > 1$ we have $\mathbf{H}_1(Y) - \mathbf{H}_1(X) \leqslant 6\epsilon + n\epsilon - \epsilon\log\epsilon$. 
\end{claim}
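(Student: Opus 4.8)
The plan is to reduce the entropy gap to the two ``active'' points $x'$ and $x''$ and then bound three elementary terms. Writing $a=\PrD{X}(x')$, $b=\PrD{X}(x'')$ and $s=|S^{+}|$, recall from the preceding claims that $Y$ is uniform on $S^{+}$ with common value $p=(b+\epsilon)/s$, that $\PrD{Y}(x')=a-\epsilon$, and that $X$ carries positive mass on $S^{+}$ only at $x''$. On the set $\{x:\PrD{X}(x)=\PrD{Y}(x)\}$ the contributions to the two entropies cancel, so introducing the concave function $g(t):=-t\log t$ (with $g(0)=0$) and using $s\,p\log\tfrac1p=(b+\epsilon)\log s+g(b+\epsilon)$, I would first record the exact identity
\[
\mathbf{H}_1(Y)-\mathbf{H}_1(X)=\underbrace{(b+\epsilon)\log s}_{A}+\underbrace{\left(g(b+\epsilon)-g(b)\right)}_{B}+\underbrace{\left(g(a-\epsilon)-g(a)\right)}_{C}.
\]
It then suffices to estimate $A$, $B$ and $C$ separately.

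For $A$ I would feed in the bound $b<\epsilon/(s-1)$ from the immediately preceding claim, which gives $b+\epsilon<\epsilon\,\tfrac{s}{s-1}$, and combine it with $\log s\leqslant n$ (since $s\leqslant 2^{n}$) and the elementary estimate $\tfrac{\log s}{s-1}\leqslant 1$ valid for $s\geqslant 2$; this yields $A<\epsilon\left(\log s+\tfrac{\log s}{s-1}\right)\leqslant n\epsilon+\epsilon$. For $B$ I would invoke the subadditivity of $g$, which follows at once from concavity together with $g(0)=0$: one has $g(b+\epsilon)\leqslant g(b)+g(\epsilon)$, so $B\leqslant g(\epsilon)=-\epsilon\log\epsilon$.

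The term $C$ is the only genuinely delicate one, because $g$ is \emph{not} monotone and a naive ``$g$ decreasing'' argument fails. Here I would write $C=g(a-\epsilon)-g(a)=\int_{a-\epsilon}^{a}\left(\log t+\log e\right)\,dt$ and use that $t\leqslant a\leqslant 1$ forces $\log t\leqslant 0$, so the integrand is at most $\log e$; this gives $C\leqslant \epsilon\log e<2\epsilon$. Summing the three estimates,
\[
\mathbf{H}_1(Y)-\mathbf{H}_1(X)\leqslant (n\epsilon+\epsilon)+(-\epsilon\log\epsilon)+2\epsilon\leqslant 6\epsilon+n\epsilon-\epsilon\log\epsilon,
\]
as claimed.

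The main obstacle is precisely the interplay hidden in the two non-trivial bounds: in $A$ one must balance the smallness of $b$ when $s$ is large against the smallness of $\log s$ when $s$ is small, so as to avoid a spurious factor of $2$ in front of $n\epsilon$; and in $C$ the sign change of $g'$ at $t=1/e$ makes the crude Lipschitz/integral estimate the right tool rather than any monotonicity shortcut. Both of these are handled cleanly once the problem has been collapsed to the two active coordinates via the earlier structural claims, which is why I would carry out that reduction first.
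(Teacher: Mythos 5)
Your proof is correct, and in fact tighter than the paper's. The overall strategy is the same — collapse the entropy difference to the active coordinates $x'$ and $S^{+}$ using the preceding structural claims, then bound elementary terms — but the execution differs. The paper first replaces $X$ by an auxiliary distribution $X'$ that merges the mass $b=\PrD{X}(x'')$ into $x'$, invokes the two-point monotonicity inequality to get $\mathbf{H}_1(X')<\mathbf{H}_1(X)$, and then bounds $\mathbf{H}_1(Y)-\mathbf{H}_1(X')$; this leaves an $x'$ term whose mass gap is $b+\epsilon\leqslant 2\epsilon$ (whence the constant $6$) plus the $Y$-entropy over $S^{+}$, which the paper bounds somewhat tersely by $-\epsilon\log\epsilon+\epsilon\log\left|S^{+}\right|$, glossing over the fact that the total $Y$-mass on $S^{+}$ is $b+\epsilon$ rather than $\epsilon$. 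You instead keep the exact three-term identity, dispose of $g(b+\epsilon)-g(b)$ (with $g(t)=-t\log t$) by subadditivity of a concave function vanishing at $0$, control $(b+\epsilon)\log s$ carefully via $b<\epsilon/(s-1)$ and $\log s/(s-1)\leqslant 1$ for $s\geqslant 2$, and bound the $x'$ term with mass gap only $\epsilon$, arriving at $n\epsilon+3\epsilon-\epsilon\log\epsilon$ before relaxing to the stated bound. What your route buys is a cleaner bookkeeping (no auxiliary distribution, no appeal to the monotonicity lemma) and a sharper constant; what the paper's route buys is that after the merge the $S^{+}$ contribution of $X'$ vanishes entirely, so only two terms remain. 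Both arguments rest on the same prior claims ($Y$ uniform on $S^{+}$, a single mass point of $X$ in $S^{+}$, and the bound $\PrD{X}(x'')<\epsilon/\left(\left|S^{+}\right|-1\right)$), so nothing is missing from your argument.
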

\begin{proof}
Consider the case $\left|S^{+}\right| \geqslant 2$. Define then $\PrD{X'}$ as $\PrD{X}(x'') = 0$, $\PrD{X'}(x') = \PrD{X}(x') + \PrD{X}(x'')$, $\PrD{X'}(x) = \PrD{X}(x)$ if $x\not\in \{x',x''\}$. Note, that $\mathbf{H}_1(X') < \mathbf{H}_1(X)$ by (\ref{ineq:Shannon_TwoPoint_Monotone1}). Then we obtain
\begin{align}
 \mathbf{H}_1(Y) - \mathbf{H}_1(X) \leqslant & \ \mathbf{H}_1(Y) - \mathbf{H}_1(X') \\
 =  & \sum\limits_{x\in S^{-} \cup S^{+}} \left(\PrD{X'}(x)\log\PrD{X'}(x) - \PrD{Y}(x)\log\PrD{Y}(x) \right) = \\ 
 = & \ \PrD{X'}(x')\log\PrD{X'}(x') - \PrD{Y}(x')\log\PrD{Y}(x')  - \sum\limits_{x\in S^{+}}\PrD{Y}(x)\log \PrD{Y}(x) \\
 < & \ \PrD{X'}(x')\log\PrD{X'}(x') - \PrD{Y}(x')\log\PrD{Y}(x') - \epsilon \log \epsilon + \epsilon\log\left|S^{+}\right| 
\end{align}
Since $\Delta(X',Y) = \Delta(X,Y) + \PrD{X}(x'') \leqslant 2\epsilon$ and the function $t\rightarrow t\log t$ is convex, it follows that 
\begin{align}\label{ineq:x1}
 \PrD{X'}(x')\log \PrD{X}(x') - \PrD{Y}(x')\PrD{Y}(x') \leqslant & 
 \left|\PrD{X'}(x') - \PrD{Y}(x')\right| \left.\left( t\log t \right)'\right|_{t=\log \PrD{X'}(x')} \nonumber \\
 \leqslant &  2\epsilon \left(1/\ln 2 + \log \PrD{X'}(x') \right) \nonumber \\
 \leqslant & 6\epsilon
\end{align}
and since $\left|S^{+}\right| < 2^{n}$, the result follows.
\end{proof}

\begin{claim}
If $\left|S^{+}\right| = 1$ then $\mathbf{H}_1(Y) - \mathbf{H}_1(X) < -2\epsilon \log\epsilon + 2\epsilon $ 
\end{claim}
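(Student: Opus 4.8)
The plan is to reduce the claim to a clean two--point estimate. When $|S^{+}|=1$ I would write $S^{+}=\{x''\}$ and recall from the preceding claims that $S^{-}=\{x'\}$, that $\PrD{Y}(x')=\PrD{X}(x')-\epsilon$, and that (specializing the formula $\PrD{Y}(x'')=\frac{\PrD{X}(x'')+\epsilon}{|S^{+}|}$ to $|S^{+}|=1$) $\PrD{Y}(x'')=\PrD{X}(x'')+\epsilon$, while $\PrD{X}(x)=\PrD{Y}(x)$ for every other $x$. Thus all but two summands of the two entropies cancel, and writing $g(t)=-t\log t$, $a=\PrD{X}(x')$, $b=\PrD{X}(x'')$ I get
\[
\mathbf{H}_1(Y)-\mathbf{H}_1(X)=\bigl(g(a-\epsilon)-g(a)\bigr)+\bigl(g(b+\epsilon)-g(b)\bigr).
\]
The earlier claim $\PrD{X}(x')\geqslant\PrD{X}(x'')+\epsilon$ gives $a\geqslant b+\epsilon\geqslant\epsilon$, so $a-\epsilon\geqslant 0$ and all four quantities are genuine probabilities.

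Next I would bound the two increments separately. For the $b$--term I would use that $g$ is concave on $[0,1]$, so the map $t\mapsto g(t+\epsilon)-g(t)$ is nonincreasing; evaluating at the extreme point $t=0$ yields $g(b+\epsilon)-g(b)\leqslant g(\epsilon)-g(0)=-\epsilon\log\epsilon$ (using the convention $0\log 0=0$). For the $a$--term I would apply the mean value theorem, $g(a-\epsilon)-g(a)=-\epsilon\,g'(\xi)$ with $\xi\in(a-\epsilon,a)\subset(0,1)$; since $g'(t)=-\log t-1/\ln 2$ and $\log\xi\leqslant 0$, this is at most $\epsilon/\ln 2<2\epsilon$. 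Adding the two bounds gives $\mathbf{H}_1(Y)-\mathbf{H}_1(X)<-\epsilon\log\epsilon+2\epsilon$, which is strictly below the claimed $-2\epsilon\log\epsilon+2\epsilon$ because $-\epsilon\log\epsilon>0$ for $\epsilon\in(0,1)$.

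The one delicate point, which I expect to be the main obstacle, is the sign and control of the $a$--term. The $b$--term always increases entropy and is straightforward to dominate by $-\epsilon\log\epsilon$; by contrast, moving mass away from $x'$ can push the gap in the ``wrong'' direction (when $a$ is large, $g$ is decreasing there), so it is not obviously bounded by any multiple of $-\epsilon\log\epsilon$. Controlling it cleanly requires the $g'$--estimate above, and this is exactly where the additive $2\epsilon$ summand in the statement comes from. Keeping the two contributions separate, rather than appealing to the earlier two--point monotonicity claim which only fixes the sign of the change, is what makes the stated bound fall out with room to spare.
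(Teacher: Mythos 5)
Your proof is correct, and it uses the same basic decomposition as the paper --- isolating the two contributions at $x'$ and $x''$, since $\PrD{X}$ and $\PrD{Y}$ agree everywhere else --- but the way you bound each piece is genuinely different and, in fact, cleaner. For the $x''$-term the paper splits into cases ($\PrD{X}(x'')\geqslant\epsilon$ versus $\PrD{X}(x'')<\epsilon$, and then $\epsilon<1/4$ versus $\epsilon\geqslant 1/4$), obtaining bounds of the form $-\epsilon\log\epsilon$, $-2\epsilon\log(2\epsilon)$ or $2\epsilon$ depending on the case; your observation that concavity of $g(t)=-t\log t$ makes $t\mapsto g(t+\epsilon)-g(t)$ nonincreasing yields the uniform bound $-\epsilon\log\epsilon$ in one line and removes the case analysis entirely. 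For the $x'$-term the paper invokes the tangent-line estimate from its earlier display and quotes a bound of $3\epsilon$; your mean-value-theorem bound of $\epsilon/\ln 2<2\epsilon$ is sharper, and this actually matters: with the paper's $3\epsilon$ the two pieces do not obviously sum to the stated $-2\epsilon\log\epsilon+2\epsilon$ for all $\epsilon$ (one needs $\epsilon\leqslant 1/2$ to absorb the surplus $\epsilon$ into $-\epsilon\log\epsilon$), whereas your total $-\epsilon\log\epsilon+2\epsilon\leqslant-2\epsilon\log\epsilon+2\epsilon$ works for every $\epsilon\in(0,1]$. So your route both simplifies and tightens the argument; the only point to keep in mind is that strictness of the final inequality relies on $\epsilon>0$, which holds here because $S^{+}$ and $S^{-}$ are assumed nonempty.
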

\begin{proof}
If $\left|S^{+}\right| = 1$ then we have
\begin{align}
 \mathbf{H}_1(Y) - \mathbf{H}_1(X) =  & \PrD{X}(x')\log \left(\PrD{X}(x')\right) - \PrD{Y}(x')\log \PrD{Y}(x') \nonumber \\
 & + \left(\PrD{X}(x'')\right)\log \left(\PrD{X}(x'')\right) - \PrD{Y}(x'')\log \PrD{Y}(x'')
\end{align}
In the same way as in (\ref{ineq:x1}), we prove that the first expressions is at most $3\epsilon$. Now we have to estimate the second one. If $\PrD{X}(x'')\geqslant \epsilon$ then we get
\begin{align}
  \PrD{X}(x'')\log \left(\PrD{X}(x'')\right) - \PrD{Y}(x'')\log \PrD{Y}(x'') \leqslant & 
 -\left|\PrD{X}(x'') - \PrD{Y}(x'')\right| \left.\left( t\log t \right)'\right|_{t=\log \PrD{X'}(x'')} \nonumber \\
 \leqslant &  -\epsilon \left(1/\ln 2 + \log \PrD{X}(x') \right) \nonumber \\
 < & -\epsilon \log\epsilon 
\end{align}
In turn, if $\PrD{X}(x'') < \epsilon$, then $\PrD{Y}(x'') = \PrD{X}(x'') + \epsilon < 2\epsilon$. Thus, provided that $\epsilon < 1/4$,
\begin{align}
 \PrD{X}(x'')\log \left(\PrD{X}(x'')\right) - \PrD{Y}(x'')\log \PrD{Y}(x'') < & -\PrD{Y}(x'')\log \PrD{Y}(x'') \\
 \leqslant & -2\epsilon \log (2\epsilon) 
\end{align}
and if $\epsilon > 1/4$, then trivially $-\PrD{Y}(x'')\log \PrD{Y}(x'') \leqslant 1/2 < 2\epsilon$. Summing up, for the case $S^{+}$, we have proven that $ \mathbf{H}_1(Y) - \mathbf{H}_1(X) < -2\epsilon\log \epsilon + 2\epsilon$.
\end{proof}
\noindent We are left with the problem if estimating $\mathbf{H}_1(Y) - \mathbf{H}_1(X)$ for the extremely small values of
$\epsilon$.
\end{proof}

\end{document}